\newcommand{\eps}{\varepsilon}
\newtheorem*{rep@theorem}{\rep@title}
\newcommand{\newreptheorem}[2]{%
\newenvironment{rep#1}[1]
{ \def\rep@title{#2 \ref{##1}} \begin{rep@theorem}} {\end{rep@theorem}} }
\newtheorem{theorem}{Theorem}
\newtheorem{lemma}[theorem]{Lemma}
\newtheorem{observation}[theorem]{Observation}
\newtheorem{proposition}[theorem]{Proposition}
\newtheorem{corollary}[theorem]{Corollary}
\theoremstyle{definition}
\newcommand{\bE}{{\mathbb E}}
\newcommand{\ee}{\mathrm{e}}
\newcommand{\roff}{P_{\text{offset}}}
\DeclareMathOperator{\argmin}{argmin}
\DeclareMathOperator{\lb}{LB}
\newcommand{\cleft}{\mathcal C^*_{k,\text{left}}}
\newcommand{\cleftl}{\mathcal C^*_{\ell,\text{left}}}
\newcommand{\rleft}{R}
\newcommand{\sleft}{S}
\newcommand{\jtrunc}{t}
\newcommand{\xmat}{\mathbf{X}}
\newcommand{\lmax}{\tau}
\newcommand{\piValue}{3.9}
\newcommand{\thetaValue}{0.555}
\newcommand{\lmaxValue}{0.604}
\newcommand{\kappaValue}{0.744}
\newcommand{\rhoValue}{1.398}
\newcommand{\gammaValue}{0.00547}
\newcommand{\betaValue}{1.935}
\newcommand{\deltaValue}{1.3}
\newcommand{\cZeroValue}{0.17556}
\newcommand{\cOneValue}{0.591909}
\newcommand{\cTwoValue}{0.33965}
\newcommand{\cThreeValue}{0.8277532}
\newcommand{\cFourValue}{0.035898}
\newcommand{\cFiveValue}{0.004562}
\newcommand{\cSixValue}{0.0676}
\begin{document}

\title{Dependent rounding with strong negative-correlation,  and scheduling on unrelated machines to minimize completion time\footnote{This is an extended version of a paper appearing in the 2024 annual ACM-SIAM Symposium on Discrete Algorithms (SODA). It includes more details about the numerical analysis and  slightly improved computation of the approximation ratio.}}
\author{David G. Harris\footnote{University of Maryland. Email: {\tt davidgharris29@gmail.com}}}

\maketitle

\abstract{We describe a new dependent-rounding algorithmic framework for bipartite graphs. Given a fractional assignment $\vec x$ of values to edges of a graph $G = (U \cup V, E)$, the algorithms return an integral solution $\vec X$ such that each right-node $v \in V$ has at most one neighboring edge $f$ with $X_f = 1$, and the variables $X_e$ also satisfy broad nonpositive-correlation properties. In particular, for any edges $e_1, e_2$ sharing a left-node $u \in U$, the variables $X_{e_1}, X_{e_2}$ have strong negative correlation, i.e. the expectation of $X_{e_1} X_{e_2}$ is significantly below $x_{e_1} x_{e_2}$. 

This algorithm is based on generating negatively-correlated Exponential random variables and using them for a rounding method inspired by a  contention-resolution scheme of Im \& Shadloo (2020). Our algorithm gives stronger and much more flexible negative correlation properties.

Dependent rounding schemes with negative correlation properties have been used for approximation algorithms for job-scheduling on unrelated machines to minimize weighted completion times (Bansal, Srinivasan, \& Svensson (2021),  Im \& Shadloo (2020), Im \& Li (2023)). Using our new dependent-rounding algorithm, among other improvements, we obtain a $1.398$-approximation for this problem. This significantly improves over the prior $1.45$-approximation ratio of Im \& Li (2023).
}

\bigskip

\section{Introduction}
Many discrete optimization algorithms are based on the following framework: we start by solving some relaxation of the problem instance (e.g., a linear program), obtaining a fractional solution $\vec x = (x_1, \dots, x_n)$. We then want to convert it into an integral solution $\vec X = (X_1, \dots, X_n)$, with probabilistic properties related to $\vec x$, while also satisfying any needed hard combinatorial constraints for the problem. This general framework often goes by the name \emph{dependent rounding}, since in general the combinatorial constraints will necessarily induce dependencies among the variables $X_i$. 

Some forms of dependent rounding are highly tailored to specific algorithmic problems, while others are very general. These frequently appear in clustering problems, for instance, where there is a hard constraint that every data item must be mapped to a cluster-center; see for example \cite{cluster1, cluster2}. It also appears in a number of job-scheduling problems \cite{gandhi}, where there is a hard constraint that every job must be assigned to some processing node.

One particularly important property is that the variables $X_i$ should satisfy \emph{nonpositive-correlation} properties; namely, for certain subsets $L \subseteq \{1, \dots, n \}$, we should have
\begin{equation}
\label{negeq1}
\bE \Bigl[ \prod_{i \in L} X_i \Bigr] \leq \prod_{i \in L} \bE[X_i] = \prod_{i \in L} x_i
\end{equation}

For example, this property leads to Chernoff-type concentration bounds. Depending on the combinatorial constraints, such an inequality may only be satisfied for certain limited choices of subset $L$. Note that independent rounding would certainly satisfy this property for all sets $L$.

For some combinatorial problems, nonpositive-correlation is not enough; we need \emph{strong negative correlation}. Namely, for certain edge sets $L$, we want a bound of the form
\begin{equation}
\label{negeq2}
\bE \Bigl[ \prod_{i \in L} X_i \Bigr] \leq (1 - \phi) \cdot \prod_{i \in L} x_i
\end{equation}
for some parameter $\phi \gg 0$; the precise value of $\phi$ may depend on the set $L$ and/or the values $x_i$. 
 
One particularly interesting application comes from \emph{job scheduling on unrelated machines to minimize completion time}. We will discuss this problem in more detail later, but for now let us provide a brief summary. We have a set of machines $\mathcal M$ and a set of jobs $\mathcal J$, where each job $j \in \mathcal J$ has a given processing time on each machine $i \in \mathcal M$. We want to minimize the objective function $\sum_{j \in \mathcal J} w_j C_j$, where $C_j$ is the completion time of $j$ on its chosen machine and $w_j$ is a weight function. A long series of  approximation algorithms have been developed for this problem, often based on sophisticated convex relaxations and dependent-rounding algorithms.

The aim of this paper is to obtain a more unified and general picture of  dependent rounding with strong negative correlation. Our  algorithm, along with some other improvements, leads to a $\rhoValue$-approximation algorithm for the above machine-scheduling problem, which improves over the algorithm of \cite{shi2} while also being simpler and more generic. 

\subsection{Definitions and results for the bipartite-rounding setting}
\label{def-sec}
Many combinatorial assignment problems, such as machine-scheduling, can be abstracted in the following framework: we consider a bipartite graph $G = (U \cup V, E)$, where the fractional variables $\vec x \in [0,1]^E$ are associated to its edges. Our goal is to generate rounded random variables $\vec X \in \{0,1 \}^{E}$ which match the fractional variables, while also achieving nonpositive-correlation or strong negative correlation among certain subset of values $X_e$ corresponding to the vertices.

We call $U$ and $V$ left-nodes and right-nodes respectively. For any vertex $w \in U \cup V$ we define $\Gamma(w)$ to be the edges incident to $w$. 

The \emph{bipartite rounding algorithm} of \cite{gandhi} is a powerful example of this framework. This algorithm, which has found numerous applications in optimization problems, guaranteed a limited case of Eq.~(\ref{negeq1}): namely, it held for any edge set $L$ which was a subset of the neighborhood of any vertex.

Our first main contribution, in Sections~\ref{round-sec1} and~\ref{round-sec}, is to develop a new bipartite-graph rounding algorithm based on negatively correlated Exponential random variables. The construction has two stages. First, we give a general method of generating Exponential random variables with  certain types of negative correlation. Second, we use this in a graph contention-resolution scheme, which takes as input a ``rate'' vector $\vec \rho \in [0,1]^{E}$; we will discuss its role shortly.

Let us say that a set of edges $L \subseteq E$ is \emph{stable} if it does not contain any pair of edges of the form $e = (u,v), e' = (u',v') $, where $(u, v')$ is also an edge in $E$. Equivalently, $L$ does not contain any pair of edges whose distance is precisely two in the line graph of $G$.

We get the following result:

\begin{theorem}
\label{thm1}
Suppose  $\sum_{e \in \Gamma(v)} x_v \leq 1$ for all right-nodes $v \in V$, and $\sum_{e \in \Gamma(u)} \rho_u \leq 1$ for all left-nodes $u \in U$. Then the algorithm \textsc{DepRound} satisfies the following properties (A1) --- (A4):
\begin{itemize}
\item[(A1)] For each right-node $v \in V$, we have $\sum_{e \in \Gamma(v)} X_e \leq 1$ with probability one.
\item[(A2)] For each edge $e$ there holds $\bE[X_e] = x_e$
\item[(A3)] For any stable edge-set $L \subseteq E$, there holds $\bE[ \prod_{e \in L} X_e ] \leq \prod_{e \in L} x_e$
\item[(A4)] For any pair of edges $e_1 = (u,v_1), e_2 = (u,v_2)$ sharing a common left-node, we have $$
\bE[X_{e_1} X_{e_2}]  \leq (1 - \Phi(x_{e_1}, x_{e_2}; \rho_{e_1}, \rho_{e_2})) x_{e_1} x_{e_2},
$$ where we define the function $\Phi(x_1, x_2; \rho_1, \rho_2): [0,1]^4 \rightarrow [0,1]$  by
$$
\Phi(x_1, x_2; \rho_1, \rho_2) =    \begin{cases}
\frac{ ( (1-\rho_{1})^{1 - 1/x_{1}} - 1) ( (1-\rho_{2})^{1 - 1/x_{2}} - 1)}{ (1 -\rho_{1})^{1-1/x_{1}}
  (1-\rho_{2})^{1-1/x_{2}}+\rho_{1}+\rho_{2}-1} & \text{for $x_1, x_2, \rho_1, \rho_2  \in (0,1)$} \\
  0 & \text{otherwise}
  \end{cases}
$$
\end{itemize}
\end{theorem}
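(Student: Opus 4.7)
My plan is to verify the four properties of \textsc{DepRound} in turn, exploiting the structure of the algorithm as described in the preceding sections: each edge $e$ receives an Exponential-type random variable $Y_e$ whose rate is governed by $\rho_e$, with the $Y_e$ independent across distinct left-nodes and negatively correlated among the edges at each single left-node; each right-node $v$ then selects at most one incident edge $e$ according to an argmin / acceptance rule parameterized by $x_e$. Property (A1) should be a deterministic byproduct of the selection rule, since by construction $v$ emits at most one winning edge, so $\sum_{e \in \Gamma(v)} X_e \leq 1$ pointwise.

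For (A2), I would integrate out the Exponentials at $\Gamma(v) \setminus \{e\}$, which live at left-nodes other than $u$ and are therefore independent of the within-$u$ correlation. Using the standard Exponential minimum identity together with the assumption $\sum_{e' \in \Gamma(v)} x_{e'} \leq 1$, the probability that $e$ wins at $v$ collapses to $x_e$ exactly.

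Property (A3) reduces, I believe, to a conditioning argument on the right-side Exponentials. For a stable edge set $L$, stability ensures that whenever two edges $e, e' \in L$ do not share a vertex, there is no ``crossing'' edge $(u,v')$ linking their endpoints; hence the events $\{X_e = 1\}$ and $\{X_{e'} = 1\}$, after conditioning on the Exponentials at left-nodes foreign to both $e$ and $e'$, depend on \emph{disjoint} blocks of Exponentials, or on common left-node Exponentials that are themselves negatively correlated. An inductive peeling over left-nodes, invoking the negative association / FKG-type property of the underlying Exponentials, should yield the product upper bound $\prod_{e \in L} x_e$. The role of stability is precisely to rule out the positively correlated configurations where two $X_e$'s would simultaneously ``like'' the same crossing edge.

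Property (A4) is the main computational obstacle and the one that will consume the bulk of the work. Since $e_1$ and $e_2$ share the left-node $u$, I would express $\bE[X_{e_1} X_{e_2}]$ as a double integral of the joint density of $(Y_{e_1}, Y_{e_2})$ against $F_1(Y_{e_1}) F_2(Y_{e_2})$, where $F_j(t)$ is the probability, conditional on $Y_{e_j} = t$, that $e_j$ wins at $v_j$; because the competitors at $v_1$ and $v_2$ live at left-nodes other than $u$, they are fully independent of $(Y_{e_1}, Y_{e_2})$ and each $F_j$ reduces to a product of independent Exponential survival functions that evaluates to an expression in $(1 - \rho_j)^{1 - 1/x_j}$. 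Plugging the specific negatively-correlated joint distribution of $(Y_{e_1}, Y_{e_2})$ constructed in the Exponential-generation stage and integrating in closed form should produce the stated expression; subtracting the product of marginals $x_{e_1} x_{e_2}$ then yields exactly the function $\Phi(x_{e_1}, x_{e_2}; \rho_{e_1}, \rho_{e_2})$. The delicate part is matching the precise closed form rather than just showing some positive gap, so I would carry out the integration symbolically and verify algebraic cancellations against the claimed formula. The boundary cases where some $x_j$ or $\rho_j$ equals $0$ or $1$ should follow by taking limits or by direct inspection of the algorithm, recovering $\Phi = 0$ as stated.
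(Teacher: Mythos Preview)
Your outline for (A1) and (A2) matches the paper. The difficulties are in (A3) and (A4).

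For (A3), your conditioning is in the wrong direction and the role of stability is not pinned down. The paper does not peel inductively or condition on ``foreign'' left-nodes. Instead it proves a single key lemma: for a stable set $L$, condition on $Z_L$ (equivalently, on all Exponentials at the left-nodes $W$ of $L$); then the competitors of the edges in $L$ --- the edges in $L' = \{(u',v) \notin L : (u,v) \in L \text{ for some } u\}$ --- have left-node set $W'$ disjoint from $W$. It is precisely stability that forces $W \cap W' = \emptyset$: if some $u$ lay in both, there would be edges $(u,v) \in L$ and $(u,v') \in L'$ with $(u',v') \in L$, contradicting stability. Once $W \cap W' = \emptyset$, the competitors' $Z$-values retain their original NA joint law under the conditioning, so the events $\{X_e = 1\}_{e \in L}$ are monotone functions of disjoint coordinate blocks of an NA family and factor as $\bE[\prod_e X_e \mid Z_L] \leq \prod_e \ee^{(1-1/x_e) Z_e}$. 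One more application of NA (now to the $Z_e$'s themselves, which are monotone-down in the product) gives $\bE[\prod_e \ee^{(1-1/x_e)Z_e}] \leq \prod_e \bE[\ee^{(1-1/x_e)Z_e}] = \prod_e x_e$. Your sketch never isolates the $W \cap W' = \emptyset$ step, and conditioning on the foreign Exponentials leaves both $Z_L$ and the competitors random simultaneously, which does not obviously decouple.

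For (A4), two concrete corrections. First, even after conditioning on $(Z_{e_1}, Z_{e_2})$, the win events at $v_1$ and $v_2$ are \emph{not} independent: a competitor $(u',v_1)$ and a competitor $(u',v_2)$ can share a left-node $u' \neq u$, so their $Z$-values are only NA. Thus $\bE[X_{e_1} X_{e_2} \mid Z_{e_1}, Z_{e_2}] \leq F_1(Z_{e_1}) F_2(Z_{e_2})$ is an inequality, not the equality your plan assumes; fortunately the inequality goes the right way. Second, $F_j$ does not involve $\rho_j$ at all: the competitors at $v_j$ are independent unit-rate Exponentials (they sit at distinct left-nodes $\neq u$), so $F_j(z) = \ee^{-(1-x_{e_j}) z / x_{e_j}} = \ee^{(1-1/x_{e_j}) z}$. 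The parameters $\rho_{e_1}, \rho_{e_2}$ enter only through the joint law of $(Z_{e_1}, Z_{e_2})$ produced by \textsc{CorrelatedExponential}; the paper computes the mixed MGF $\bE[\ee^{q_1 Z_{e_1} + q_2 Z_{e_2}}]$ in closed form and then substitutes $q_j = 1 - 1/x_{e_j}$ to obtain $x_{e_1} x_{e_2}(1 - \Phi)$.
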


There is the question of choosing the parameter $\rho$.  This vector $\vec \rho$ is not necessarily related to $\vec x$; it can be carefully chosen to obtain different ``shapes'' of negative correlation.  Intuitively, $\rho_e$ controls how much anti-correlation $e$ should have with other edges. We emphasize that there is no single optimal choice, and our scheduling algorithm will use this flexibility in a somewhat subtle way. However, there are a few natural parameterizations, as given in the following result:
\begin{corollary}
\label{cor333}
For a left-node $u$,  let $\lambda_u = \sum_{e \in \Gamma(u)} x_e$.  If we set $\rho_e = 1 - \ee^{-x_e / \lambda_u}$ for all $e \in \Gamma(u)$, then for any edges $e_1, e_2$ with $x_{e_1} x_{e_2} > 0$, we have\footnote{Here and throughout we write $\ee = 2.718...$ We use a different font to distinguish the constant $\ee$ from an edge $e$ in a graph.}
  $$
\Phi(x_{e_1}, x_{e_2}; \rho_{e_1}, \rho_{e_2})  \geq 1 -  \frac{  \ee^{x_{e_1}/\lambda_u} + \ee^{x_{e_2}/\lambda_u}  }{1 + \ee^{1/\lambda_u}}
$$

Moreover, if $\lambda_u \leq 3/4$ and we set $\rho_e = x_e/\lambda_u$ for all $e \in \Gamma(u)$, then 
$$
\Phi(x_{e_1}, x_{e_2}; \rho_{e_1}, \rho_{e_2}) \geq  \frac{\ee^{1/\lambda_u} - 1}{\ee^{1/\lambda_u} + 1} -0.57 (x_{e_1} + x_{e_2}) \max\{0, \lambda_u - 0.45 \} 
$$
\end{corollary}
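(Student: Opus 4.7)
For the first bound, the plan is to substitute $\rho_{e_i} = 1 - \ee^{-x_{e_i}/\lambda_u}$ into the explicit formula for $\Phi$ and simplify.  Writing $s := 1/\lambda_u$ and $y_i := x_{e_i}/\lambda_u$, a short calculation gives $(1 - \rho_{e_i})^{1 - 1/x_{e_i}} = \ee^{s - y_i}$.  Setting $A_i := \ee^{s - y_i}$, the definition of $\Phi$ collapses to
$$
\Phi = \frac{(A_1 - 1)(A_2 - 1)}{A_1 A_2 - \ee^{-s}(A_1 + A_2) + 1},
\qquad 1 - \Phi = \frac{(\ee^{s} - 1)(\ee^{-y_1} + \ee^{-y_2})}{A_1 A_2 - \ee^{-s}(A_1 + A_2) + 1}.
$$
Cross-multiplying the target bound $1 - \Phi \leq (\ee^{y_1} + \ee^{y_2})/(1 + \ee^{s})$, using $A_i = \ee^{s} \ee^{-y_i}$, and tidying reduces everything to the scalar inequality $\cosh(y_1) + \cosh(y_2) \geq 1 + \cosh(y_1 - y_2)$. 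By the sum-to-product identities the difference factors as $2 \cosh(\tfrac{y_1 - y_2}{2})\bigl[\cosh(\tfrac{y_1 + y_2}{2}) - \cosh(\tfrac{y_1 - y_2}{2})\bigr]$, which is nonnegative because $y_1, y_2 \geq 0$ forces $(y_1 + y_2)^2 \geq (y_1 - y_2)^2$.

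For the second bound, I would substitute $\rho_{e_i} = x_{e_i}/\lambda_u$ and treat $\Phi$ as a function of $x_{e_1}, x_{e_2}$ with $\lambda_u$ fixed. Setting $T := \ee^{1/\lambda_u}$ and $t_i := (1 - x_{e_i}/\lambda_u)^{1 - 1/x_{e_i}}$, the expansion $\ln t_i = \sum_{k \geq 1}(x_{e_i}^{k-1} - x_{e_i}^{k})/(k \lambda_u^{k}) = 1/\lambda_u + \frac{1 - 2\lambda_u}{2\lambda_u^2} x_{e_i} + O(x_{e_i}^2)$ gives $t_i \to T$ as $x_{e_i} \to 0$, so $\Phi \to (T-1)/(T+1)$, matching the leading term.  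Plugging the expansion into the rational formula for $\Phi$ and extracting the coefficient of $x_{e_1} + x_{e_2}$ produces an explicit closed form, namely $\frac{(1-2\lambda_u) T - 2\lambda_u}{2\lambda_u^2 (T+1)^2}$; this quantity is nonnegative for $\lambda_u$ up to roughly $0.45$ and decreases in magnitude no faster than $-0.57 (\lambda_u - 0.45)$ throughout $[0.45, 3/4]$, matching the structure of the claimed correction.

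The hard part will be upgrading this first-order analysis into a rigorous lower bound that is valid on the whole range $x_{e_i} \in (0, \lambda_u]$, rather than asymptotically at $x = 0$.  I see two natural routes: (i) establish concavity of $\Phi$ in each $x_{e_i}$ at fixed $\lambda_u$, so that the tangent line at $x=0$ is itself a valid lower bound; or (ii) write the defect $\frac{T-1}{T+1} - 0.57(x_{e_1}+x_{e_2})\max\{0, \lambda_u - 0.45\} - \Phi$ as a single rational/exponential expression, reduce by symmetry and monotonicity to the worst case over $x_{e_1}, x_{e_2}$, and then verify the resulting one-variable inequality in $\lambda_u$ on the compact interval $[0, 3/4]$ by elementary calculus or a computer-assisted check.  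The appearance of the specific round constants $0.45$ and $0.57$ strongly suggests the authors follow the second route, with a careful numerical verification of the final one-parameter inequality.
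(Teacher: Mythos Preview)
Your argument for the first bound is correct and matches the paper's Proposition~\ref{gatt6}. Writing $\eta=\ee^{t}$ with $t=1/\lambda_u$, the paper simplifies $1-\Phi$ to $\dfrac{(\eta-1)(\eta^{x_1}+\eta^{x_2})}{(\eta^{x_1}-1)(\eta^{x_2}-1)+(\eta^2-1)}$ and then simply drops the nonnegative term $(\eta^{x_1}-1)(\eta^{x_2}-1)$ from the denominator. Your $\cosh$ inequality is equivalent after clearing positive factors: indeed $\cosh y_1+\cosh y_2-1-\cosh(y_1-y_2)=4\cosh\tfrac{y_1-y_2}{2}\sinh\tfrac{y_1}{2}\sinh\tfrac{y_2}{2}$, which has the same sign as $(\ee^{y_1}-1)(\ee^{y_2}-1)$.

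For the second bound you have only a first-order expansion and two proposed routes, not a proof. Note first that your route~(i) is stated the wrong way around: if $\Phi$ were \emph{concave} in $x_{e_i}$ then the tangent at $0$ would be an \emph{upper} bound, not a lower one; you would need convexity. The paper (Lemma~\ref{gatt7}, proved in Appendix~\ref{gatt7app}) does not attempt convexity of $\Phi$ itself. Instead it proves convexity of the building block: for $t=1/\lambda_u\geq 4/3$, the map $x\mapsto(1-xt)^{1-1/x}$ is convex on $(0,1/t]$ and hence bounded below by its tangent $\ee^{t}\bigl(1+x(t^2/2-t)\bigr)$ at $x=0$ (Proposition~\ref{concave-up-prop}). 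This is then fed through the elementary monotonicity of $(a_1,a_2)\mapsto\frac{(a_1-1)(a_2-1)}{a_1a_2+b}$ for $a_i\geq 1,\ b\geq 0$ (Observation~\ref{hat553eqn}) to lower-bound $\Phi$ by an explicit rational expression in $x_1,x_2,t$. After subtracting the claimed bound, expanding, and using $x_1+x_2\leq 1/t$, everything collapses to a one-variable inequality in $t$ alone (Eq.~(\ref{ras3})), which is immediate for $t\geq 2.22$ and is verified by interval arithmetic on $[4/3,2.22]$. So the specific missing ingredient in your plan is the convexity of $(1-xt)^{1-1/x}$ in $x$; with that in hand, a hybrid of your two routes completes the argument.
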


The first result strictly generalizes the result of \cite{im} (which only considered the setting with $\lambda_u = 1$).  The second result can be stronger quantitatively, and also has a simpler algebraic formula.

We note that there is a closely related setting, as originally considered in \cite{srin1,srin,im}, where we have a bipartite graph $G$ where the edges incident to each left-node are partitioned into ``blocks.'' In this setting, we should have nonpositive-correlation among edges incident to each left-node, and strong negative correlation within each block. Our framework can also handle this setting, by  transforming each block of the graph $G$ to a new separate left-node. Note that the neighborhood of a vertex in the original graph $G$ will form a stable edge-set in the resulting graph $G'$.

  We emphasize that the dependent rounding algorithm, and its analysis, is completely self-contained and does not depend on the machine-scheduling setting. Because of its generality, it may be applicable to other combinatorial optimization problems.

\subsection{Machine-scheduling on unrelated machines}
This problem is denoted $R || \sum_j w_j C_j$ in the common nomenclature for machine scheduling. Here, we have a set of machines $\mathcal M$ and set of jobs $\mathcal J$, where each job $j$ has a weight $w_j$ and has a separate processing time $p^{(i)}_j$ on each machine $i$. Our goal is to assign the jobs to the machines in order, so as to minimize the overall weighted completion time $\sum_j w_j C_j$, where $C_j$ is the sum of $p^{(i)}_{j'}$ over all jobs assigned on machine $i$ up to and including $j$.

On a single machine, there is a simple greedy heuristic for this problem: jobs should be scheduled in non-increasing order of the ratio $\sigma_j = w_j/p_j$, which is known as the \emph{Smith ratio} \cite{smith}. For multiple machines, it is an intriguing and long-studied APX-hard problem \cite{hoogeven}. It has attracted attention, in part, because it leads to  sophisticated convex programming relaxations and rounding algorithms.  Since the 2000's, there were a series of $1.5$-approximation algorithms based on various non-trivial fractional relaxations \cite{schulz1,skutella1,squillante}. 

As shown in \cite{srin1}, going beyond this approximation ratio demands much more involved algorithms: some of the main convex relaxations have integrality gap $1.5$, and furthermore rounding strategies which treat each job independently, as had been used in all previous algorithms, are inherently limited to approximation ratio $1.5$. In a breakthrough, \cite{srin1} achieved a $1.5 - \eps$ approximation factor for some minuscule constant $\eps > 0$. There have been further improvements \cite{li,im,shi2}; most recently before this work, \cite{shi2} gave a $1.45$-approximation (see also \cite{shi3} for a faster implementation of that algorithm). 

These newer algorithms can all be described in the same general framework. First, they solve an appropriate relaxation, giving fractional assignments $x^{(i)}_j$. Second, they group the jobs on each machine, forming ``clusters'' of jobs with similar processing times. Finally, they apply some form of dependent rounding with strong negative correlation properties within each cluster to convert this into an integral assignment.  

There are two main types of relaxations for the first step. The work \cite{srin1} used a semidefinite-programming (SDP) relaxation. The works \cite{li,im,shi2} used a relaxation based on a time-indexed LP. It is not clear which of these relaxations is better, for example, is faster to solve, has a better approximation algorithm, or has a smaller integrality gap. We note that the time-indexed LP may be able to accommodate some other variants of the problem, such as having ``release times'' for each job, which do not seem possible for the SDP relaxation.

There have also been many different types of rounding algorithms used. The work \cite{srin1} used a method based on a random walk in a polytope, which was also used as a black-box subroutine by \cite{li}. This rounding method was subsequently improved in \cite{srin}. The work \cite{im} developed a very different rounding algorithm based on contention resolution of Poisson processes. Finally, \cite{shi2} used a specialized rounding algorithm closely tied to the time-indexed LP structure; it borrows some features from both the random-walk and contention-resolution algorithms.

Our new algorithm will also fit into this framework, where we use the SDP as a starting point and we use our algorithm \textsc{DepRound} for the rounding. We show the following:
\begin{theorem}
\label{thm3}
There is a randomized approximation algorithm for Scheduling on Unrelated Machines to Minimize Weighted Completion Time with approximation ratio $\rhoValue$. 

In particular, the SDP relaxation  has integrality gap at most $\rhoValue$.
\end{theorem}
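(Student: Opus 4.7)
The plan is to follow the three-step framework used in the recent line of work: solve a convex relaxation, form clusters of jobs with comparable processing times on each machine, and then apply \textsc{DepRound} within this cluster structure. I would take as starting point an SDP relaxation in the spirit of \cite{srin1}, which supplies fractional assignment values $x^{(i)}_j$ together with auxiliary pairwise ordering information (roughly, for each machine $i$ and pair of jobs $j, j'$ a fractional quantity indicating how often $j'$ precedes $j$ on $i$). On each machine $i$, I would sort jobs by non-increasing Smith ratio $\sigma_j = w_j/p^{(i)}_j$ and partition them into geometric classes so that processing times within a class lie within a constant factor, treating each (machine, class) pair as a left-node $u$ of a bipartite graph whose right-nodes are jobs and whose edge $(u,j)$ carries fractional value $x^{(i)}_j$. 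Feasibility $\sum_{e \in \Gamma(v)} x_e \leq 1$ follows from the SDP, while $\sum_{e \in \Gamma(u)} \rho_e \leq 1$ is enforced by the chosen rate parameterization.

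Second, I would write $\bE\bigl[\sum_j w_j C_j\bigr]$ as a sum over pairs $(j, j')$ on each machine $i$, weighted by $w_j p^{(i)}_{j'}$ times $\bE[X^{(i)}_j X^{(i)}_{j'}]$ times the fractional order indicator. The diagonal $(j = j')$ contributions recover the SDP value exactly by (A2). For off-diagonal pairs in distinct clusters on the same machine the two edges form a stable set, so (A3) yields the independent-rounding bound, and the usual $1.5$-type analysis applies. For intra-cluster pairs (A4) contributes the savings $\Phi(x_{e_1}, x_{e_2}; \rho_{e_1}, \rho_{e_2}) \cdot x_{e_1} x_{e_2}$, which is where the improvement over $1.5$ comes from.

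Third, I would choose $\vec \rho$ per cluster using the two parameterizations from Corollary~\ref{cor333}: when the cluster's cumulative load $\lambda_u$ is moderate (say $\lambda_u \leq \lmaxValue$) I would use $\rho_e = x_e/\lambda_u$ to exploit the stronger constant-savings bound $(\ee^{1/\lambda_u}-1)/(\ee^{1/\lambda_u}+1)$, and when $\lambda_u$ is larger I would switch to $\rho_e = 1 - \ee^{-x_e/\lambda_u}$. Then the task is to bound, uniformly over all SDP-feasible instances, the ratio between the expected integral objective and the SDP value. This reduces to a finite-dimensional numerical optimization whose variables parameterize cluster loads, geometric grouping widths, and the relative positions of pairs within and across clusters; here the constants $\piValue$, $\thetaValue$, $\lmaxValue$, $\kappaValue$, $\gammaValue$, $\betaValue$, $\deltaValue$ and the $c_i$ values from the macro definitions play the role of the optimized breakpoints.

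The main obstacle will be this final numerical step. Unlike in earlier $1.5-\varepsilon$ and $1.45$ analyses, our added flexibility in $\vec\rho$ substantially broadens the design space, but it also complicates the worst-case argument because the $\Phi$-savings must be traded off against the weaker bound on inter-cluster pairs and against the slack in the SDP lower bound. I expect the crux to be a piecewise analysis separating regimes of $\lambda_u$ and of $x_{e_1} + x_{e_2}$, in each of which one substitutes the appropriate Corollary~\ref{cor333} bound, reduces to a one- or two-variable estimate, and verifies it via explicit computation certified by the stated constants. The integrality-gap claim is then an immediate byproduct, since the whole analysis compares the expected integral objective directly against the SDP value.
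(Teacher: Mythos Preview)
Your high-level framework is right, but there are several concrete gaps that prevent the argument from reaching $\rhoValue$.

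\textbf{Cluster granularity.} You propose treating each (machine, processing-time class) pair as a single left-node. This cannot work: the total fractional load $\lambda_u$ of such a class is unbounded, and both bounds in Corollary~\ref{cor333} degrade to zero as $\lambda_u \to \infty$ (the factor $(\ee^{1/\lambda_u}-1)/(\ee^{1/\lambda_u}+1)$ vanishes). The paper instead subdivides each class into many small clusters, each with load in $[\theta,\lmax]$ (roughly $[0.555,0.604]$), by walking through jobs in Smith-ratio order and closing a cluster once its mass reaches $\theta$. Only then is the negative correlation from (A4) strong enough to matter. Your proposed switch between the two $\rho$-parameterizations does not address this; neither helps when $\lambda_u$ is large.

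\textbf{Leftover clusters and the role of $D$.} Once you split classes into bounded clusters, the last cluster in each class (the ``leftover'') generically has mass below $\theta$ and thus weaker guaranteed savings. This is not a detail that washes out: the paper's entire quantitative analysis revolves around a random variable $D$ that captures the shortfall from leftover clusters, and both the upper bound $\bE[Z] \leq c_3 Q + L^2/2 + \bE[D]$ (Lemma~\ref{algexpcor}) and the lower bound on $\lb$ (Lemma~\ref{lprimelemma}) are expressed in terms of $D$. The lower bound in particular is obtained by applying the SDP inequality of Theorem~\ref{thm:partitionrel2} with a vector $y$ supported on the leftover jobs and chosen as a function of the realized clustering; this is how $D$ enters $\lb$ and how the two bounds can be married. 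Your sketch has no analogue of this mechanism.

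\textbf{Random shift and truncated jobs.} Two further algorithmic ingredients are missing. First, the processing-time quantization uses a uniformly random offset $\roff$; the analysis exploits that $\log_\pi H_j$ is uniform on $[0,1]$ (Observation~\ref{hest-prop}) to average out the worst-case position of a job within its class. Second, the paper does \emph{not} switch $\rho$-parameterizations by cluster load. It uses $\rho_j \propto x_j$ throughout, except that the final job that closes a cluster is ``truncated'' with a smaller $\tilde\rho_j$ so that $\sum \tilde\rho_j = \lmax$ exactly; this is what lets the cluster masses be held tightly in $[\theta,\lmax]$ while still satisfying the constraint $\sum \rho_e \leq 1$. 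The truncated job then requires its own term in the bonus calculation (Lemma~\ref{lem26a}).

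Without these components, your numerical optimization in the last step has no hope of certifying $\rhoValue$: the reduction you describe does not control the contribution of large-mass classes or of leftover clusters, so the worst case will not be finite-dimensional in the way you suggest.
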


This is the first improvement to the integrality gap for the SDP relaxation since the original work \cite{srin1}; all later improvements have been based on the time-indexed LP.  It is interesting now to determine which fractional relaxation has a better gap.

To explain our improvement, we note that it is relatively straightforward  to handle a scenario with many jobs of small mass and similar processing times.  The difficulty lies in grouping disparate items together. This typically results in ``ragged'' clusters and, what is worse, the ``leftover clusters'' for each processing-time class.  A large part of our improvement is technical, coming from tracking the contributions of leftover clusters more carefully. There are two algorithmic ideas to highlight. 

First, the \textsc{DepRound} algorithm is much more flexible for rounding: its negative correlation guarantees scale with the size of each cluster or even the individual items, as opposed to the \emph{maximum} cluster size. The parameter $\rho$ available in \textsc{DepRound} allows us to put less ``anti-correlation strength'' on the final job within a cluster. This makes the clusters act in a significantly more uniform way irrespective of their total mass. In particular, we no longer need to  deal separately with jobs which have ``large'' mass on a given machine.

Second, we use a random shift before quantizing the items by processing time, which ensures that items are more evenly distributed within each class. This technique was also used in \cite{im,shi2}, but had not been analyzed in the context of the SDP relaxation.

\paragraph{Note:} After the conference version of this paper, \cite{shi-new} proposed a new algorithm for this problem with approximation ratio $1.36$. It uses different techniques, based on iterated rounding. Its  approximation ratio (and integrality gap) are shown in terms of the Configuration LP, which is stronger than either the time-indexed LP or SDP relaxations.

\section{Negatively-correlated Exponential random variables}
\label{round-sec1}
Before we consider bipartite rounding, we derive a more basic result in probability theory: how to generate Exponential random variables with strong negative correlations.  The analysis relies heavily on properties of \emph{negatively-associated} (NA) random variables \cite{joag-dev}. Formally, we say that random variables $X_1, \dots, X_k$ are NA if for any disjoint subsets $A, B \subseteq \{1, ..., k \}$ and any increasing functions $f,g$, there holds
\begin{equation}
\label{ata}
\bE[ f( X_i: i \in A) g (X_j: j \in B) ] \leq \bE[ f( X_i: i \in A) ] \bE[g( X_j: j \in B) ]
\end{equation}

We quote a few useful facts about such variables from \cite{joag-dev, wajc}.
\begin{theorem}
\label{nathm}
\begin{enumerate}
\item If $X_1, \dots, X_k$ are NA, then $\bE[ X_1 \cdots X_k ] \leq \bE[X_1] \cdots \bE[X_k]$.
\item  If $X_1, \dots, X_k$ are zero-one random variables with $X_1 + \dots + X_k \leq 1$, then $X_1, \dots, X_k$ are NA.
\item If $\mathcal X, \mathcal X'$ are collections of NA random variables, and the joint distribution of $\mathcal X$ is independent from that $\mathcal X'$, then $\mathcal X \cup \mathcal X'$ are NA.
\item If $X_1, \dots, X_k$ are NA random variables and $f_1, \dots, f_{\ell}$ are functions defined on disjoint subsets of $\{1, \dots, k \}$, such that all $f_1, \dots, f_{\ell}$ are monotonically non-increasing or all $f_1, \dots, f_{\ell}$ are monotonically non-decreasing, then random variables $f_i(\vec X): i  = 1, \dots, \ell$ are NA. In particular, $\bE[ f_1(\vec X) \cdots f_{\ell}( \vec X)] \leq \bE[ f_1(\vec X) ] \cdots \bE[ f_{\ell} (\vec X) ]$.
\end{enumerate}
\end{theorem}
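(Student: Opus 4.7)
The plan is to recognize that these are the standard closure properties of negatively associated random variables, and each one follows from the defining inequality \eqref{ata} by an elementary argument. Since the statement bundles four independent facts, I would treat them separately, in the order listed.

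For part (1), I would apply the NA inequality with $A = \{1, \dots, k-1\}$, $B = \{k\}$, $f(\vec X) = \prod_{i < k} X_i$, and $g(X_k) = X_k$. Provided the $X_i$ are nonnegative (which is the setting of interest for the Exponential-variable application in Section~\ref{round-sec1}), $f$ is monotone nondecreasing, so \eqref{ata} yields $\bE[X_1 \cdots X_k] \le \bE[X_1 \cdots X_{k-1}] \bE[X_k]$, and induction on $k$ closes the argument.

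Part (2) is the real crux and I would verify it by a direct expansion. Let $p_i = \Pr[X_i = 1]$, and for disjoint $A,B$ and monotone nondecreasing $f,g$ set $a_i = f(\mathbf e_i) - f(\mathbf 0)$ for $i \in A$ and $b_j = g(\mathbf e_j) - g(\mathbf 0)$ for $j \in B$; monotonicity gives $a_i, b_j \geq 0$. The constraint $\sum_i X_i \leq 1$ means that at most one coordinate in $A\cup B$ is $1$, so $X_i X_j = 0$ for $i \neq j$. Hence
\[
\bE[f(X_A) g(X_B)] = f(\mathbf 0) g(\mathbf 0) + g(\mathbf 0)\sum_{i \in A} a_i p_i + f(\mathbf 0)\sum_{j \in B} b_j p_j,
\]
and the product $\bE[f(X_A)]\bE[g(X_B)]$ exceeds this by exactly $(\sum_i a_i p_i)(\sum_j b_j p_j) \geq 0$, giving the NA inequality.

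Part (3) follows by conditioning: any increasing $f$ on a subset $A$ of the combined index set restricts to an increasing function in the $\mathcal X$-coordinates (with $\mathcal X'$ held fixed) and vice versa, so applying NA of $\mathcal X$ conditionally on $\mathcal X'$ and then NA of $\mathcal X'$ unconditionally (together with independence, which lets the conditional expectations factor into products) yields the required inequality. Part (4) is even cleaner: a composition of monotone functions is monotone, and functions on disjoint subsets of $\{1,\dots,\ell\}$ lift to functions on disjoint subsets of $\{1,\dots,k\}$, so the defining NA inequality for $f_1(\vec X), \dots, f_\ell(\vec X)$ is a special case of that for $\vec X$.

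The hard part is really part (2); the others are bookkeeping once the definition is in hand. Since all four statements appear in \cite{joag-dev} and the survey \cite{wajc}, and the arguments above are routine, I would simply cite those sources rather than reproduce the proofs in full.
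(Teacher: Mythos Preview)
Your proposal is correct and aligns with the paper's treatment: the paper does not prove Theorem~\ref{nathm} at all but simply quotes it from \cite{joag-dev,wajc}, which is exactly what you conclude you would do. Your sketches are sound (with the appropriate nonnegativity caveat you flag for part~(1)) and go beyond what the paper itself provides.
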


Our algorithm also makes use of the \emph{Multivariate Geometric distribution} \cite{mvg}: given a probability vector $\vec p \in [0,1]^n$, this distribution on vector $(X_1, \dots, X_n)$ is defined by running an infinite sequence of experiments, where each $j^{\text{th}}$ experiment takes on value $i$ with probability $p_i$, and setting $X_i$ to be the number of trials before first seeing $i$. We note a few simple properties of this distribution.
\begin{observation}
\label{multivariate-obs}
Let $(X_1, \dots, X_n)$ be a  Multivariate Geometric random variable with rate $\vec p$. Then the variables $X_1, \dots, X_n$ are NA and each marginal distribution $X_i$ is Geometric with rate $p_i$.
\end{observation}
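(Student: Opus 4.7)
The plan is to prove the two claims separately, with the NA claim following from the closure properties of Theorem~\ref{nathm}.

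For the marginal distribution, observe that by the defining infinite sequence of independent experiments, the event $\{X_i = k\}$ is precisely the event that the first $k$ experiments take values different from $i$ and the $(k+1)^{\text{st}}$ experiment takes value $i$. Since the experiments are independent and each one individually takes value $i$ with probability $p_i$, this event has probability $(1-p_i)^k p_i$, which is the Geometric distribution with rate $p_i$.

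For negative association, I would introduce the indicator variables $Y_j^{(i)} \in \{0,1\}$, for $j \geq 1$ and $1 \leq i \leq n$, where $Y_j^{(i)} = 1$ if experiment $j$ takes value $i$. For each fixed $j$, the variables $Y_j^{(1)}, \dots, Y_j^{(n)}$ are zero-one and satisfy $\sum_{i} Y_j^{(i)} \leq 1$, so by Theorem~\ref{nathm}(2) they are NA. Since different experiments are independent, Theorem~\ref{nathm}(3) implies the entire joint collection $\{Y_j^{(i)}\}_{j \geq 1, 1 \leq i \leq n}$ is NA. Now the variable $X_i$ depends only on the block $\{Y_j^{(i)}\}_{j \geq 1}$, and these blocks are disjoint across $i$. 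Moreover, $X_i = \min\{ j \geq 1 : Y_j^{(i)} = 1 \} - 1$ is a coordinatewise non-increasing function of its arguments, because replacing any $Y_j^{(i)} = 0$ by $Y_j^{(i)} = 1$ can only decrease the first index at which a $1$ appears. Applying Theorem~\ref{nathm}(4) with all $f_i$ non-increasing then yields that $X_1, \dots, X_n$ are NA.

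The only minor obstacle is that $X_i$ is in principle a function of infinitely many indicator variables, while Theorem~\ref{nathm}(4) is stated for functions of finitely many NA variables. This is easily handled by truncation: define $X_i^{(M)} = \min\{ \min\{j \leq M : Y_j^{(i)} = 1\}, M+1\} - 1$, apply the NA closure to the truncated vector, and then let $M \to \infty$. Since $X_i^{(M)} \to X_i$ almost surely, and the NA property is preserved under almost-sure limits (via monotone or dominated convergence applied to the defining inequality~\eqref{ata} for increasing bounded functions $f,g$), the conclusion transfers to $(X_1, \dots, X_n)$.
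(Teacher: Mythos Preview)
Your proof is correct and essentially identical to the paper's: both introduce the per-trial indicator variables, observe they are NA within each trial (zero-one summing to at most one), combine across independent trials, and then note each $X_i$ is a monotone non-increasing function of its own block of indicators. Your explicit truncation argument for passing from finitely many to infinitely many indicators is a nice piece of extra care that the paper omits.
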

\begin{proof}
Observe that $X_1, \dots, X_n$ can be viewed in terms of the following infinite sequence of random variables: for each $j \in \mathbb Z_{\geq 0}$, draw variable $F_{j} \in \{1, \dots, n \}$ with $\Pr(F_j = i) = p_i$ for all $i,j$. Then set $X_i = \min \{ j: F_j = i \}$ for each $i$. Let $B_{i,j}$ be the indicator random variable for the event $F_j = i$. For each $j$, the values $B_{i,j}$ are zero-one random variables which sum to one, so they are NA. Since there is no interaction between the different indices $j$, all random variables $B_{i,j}$ are NA. Furthermore, each $X_i$ is a monotone-down function of the values $B_{i,j}$. Hence, all variables $X_i$ are NA.

Furthermore, each $X_i$ is the waiting time until we see the first variable $B_{i,j}$; since these are Bernoulli with rate $p_i$, the distribution of $X_i$ is Geometric with rate $p_i$
\end{proof}

We describe our algorithm to generate correlated unit Exponential random variables $Z_1, \dots, Z_n$. It takes as input a vector $\vec \rho \in [0,1]^n$ with $\sum_i \rho_i \leq 1$, which will determine the covariances.

\begin{algorithm}[H]
\caption{\sloppy {$\textsc{CorrelatedExponential}(\vec \rho)$}}
Set $\rho_0 = 1 - (\rho_1 + \dots + \rho_n)$ \\
Set $(X_0, \dots, X_n) \leftarrow \textsc{MultivariateGeometric}(\vec \rho)$. \\
\For{$i = 1, \dots, n$}{
\If{$\rho_i \in (0,1)$} {
Set $\alpha_i = - \log (1 - \rho_i)$ \\
Draw random variable $S_i \in [0,1]$ with probability density function $\alpha_i \ee^{-\alpha_i s} / \rho_i$ \\
Set $Z_i = \alpha_i ( X_i + S_i )$ \\
} \Else {
Draw $Z_i$ as an independent Exponential random variable with rate $1$
}
}
\Return vector $Z$
\end{algorithm}

The density function at Line 5 is valid in that $\int_{s=0}^1 \alpha_i \ee^{-\alpha_i s} / \rho_i  \ \mathrm{d}s = \frac{1}{\rho_i} (1 - \ee^{-\alpha_i}) = 1$.  The algorithm can be implemented to run in randomized polynomial time via standard sampling procedures.  We now analyze its probabilistic properties.
\begin{proposition}
Each random variable $Z_i$ has an Exponential distribution with rate $1$, i.e. it has probability density function $e^{-z}: z \in [0, \infty)$.
\end{proposition}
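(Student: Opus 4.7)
The plan is a direct density computation that decomposes $Z_i$ along the integer part of $X_i$. The two degenerate cases $\rho_i = 0$ and $\rho_i = 1$ fall into the \emph{else} branch, where $Z_i$ is drawn as an independent rate-$1$ Exponential, so the claim is immediate there. The entire content is in the case $\rho_i \in (0,1)$.

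First, I would record the distributions of the two pieces. By Observation~\ref{multivariate-obs}, the marginal $X_i$ is Geometric with rate $\rho_i$, so $\Pr(X_i = k) = (1-\rho_i)^k \rho_i$ for $k \in \mathbb Z_{\geq 0}$. Using the definition $\alpha_i = -\log(1-\rho_i)$, i.e.\ $1-\rho_i = \ee^{-\alpha_i}$, this rewrites as $\Pr(X_i = k) = \ee^{-\alpha_i k} \rho_i$. Moreover, $S_i$ is drawn independently of the multivariate geometric sample, with density $\alpha_i \ee^{-\alpha_i s}/\rho_i$ on $[0,1]$.

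Second, I would set up the density of $Z_i = \alpha_i(X_i + S_i)$ by conditioning on $X_i$. Since $S_i \in [0,1]$, the event $X_i = k$ forces $Z_i \in [\alpha_i k, \alpha_i (k+1)]$, and these intervals partition $[0,\infty)$. For $z$ in this interval, a change of variables $z = \alpha_i(k + s)$ gives
\[
f_{Z_i}(z) = \Pr(X_i = k)\cdot \frac{1}{\alpha_i}\cdot \frac{\alpha_i \ee^{-\alpha_i (z/\alpha_i - k)}}{\rho_i} = \ee^{-\alpha_i k}\rho_i \cdot \frac{\ee^{-(z - \alpha_i k)}}{\rho_i} = \ee^{-z}.
\]
Since this holds uniformly for every $k$ and every $z \in [\alpha_i k, \alpha_i(k+1))$, the density of $Z_i$ is $\ee^{-z}$ on all of $[0,\infty)$, which is exactly the rate-$1$ Exponential density.

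There is no real obstacle here: the role of the factor $\alpha_i e^{-\alpha_i s}/\rho_i$ in the density of $S_i$ is precisely to patch together with the atom $\Pr(X_i = k) = \rho_i \ee^{-\alpha_i k}$ so that the $\rho_i$'s cancel and the exponential decay matches across the seams at $z = \alpha_i k$. The only care needed is to check the endpoint normalization (already noted in the excerpt) and to observe that the two pieces entering $Z_i$ are independent, which follows from the fact that the $S_i$'s are drawn fresh and independently of the multivariate geometric step.
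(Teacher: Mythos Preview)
Your proof is correct and follows essentially the same approach as the paper: both condition on the integer part and verify that the geometric atom $\rho_i(1-\rho_i)^k = \rho_i\ee^{-\alpha_i k}$ combines with the density of $S_i$ to produce a seamless exponential. The only cosmetic difference is that the paper first shows $L = X_i + S_i$ is Exponential with rate $\alpha_i$ and then rescales, whereas you compute the density of $Z_i = \alpha_i L$ directly; the computations are identical up to the change of variables.
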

\begin{proof}
It is clear if $\rho_i \in \{0,1 \}$, so suppose $\rho_i \in (0,1)$ and consider random variable $L = X_i + S_i$. We claim that $L$ has an Exponential distribution with rate $\alpha_i$. For, consider some $z \in \mathbb R_{\geq 0}$, where $z = j + s$ and $j \in \mathbb Z_{\geq 0}, s \in [0,1)$. Since $X_i$ is Geometric with rate $\rho_i$, the probability density function for $L$ at $z$ is given by
$$
\rho_i (1-\rho_i)^j \cdot \frac{\alpha_i \ee^{-\alpha_i s}}{\rho_i}  = (\ee^{-\alpha_i})^j \cdot \alpha_i \ee^{-\alpha_i s} = \ee^{-\alpha_i (j+s)} \cdot \alpha_i = \alpha_i \cdot \ee^{-\alpha_i z}
$$
which is precisely the pdf of a rate-$\alpha_i$ Exponential random variable. Since rescaling Exponential random variables changes their rate, the variable $Z_i$ is Exponential with rate $1$.
\end{proof}

\begin{proposition}
The random variables $Z_1, \dots, Z_n$ are NA.
\end{proposition}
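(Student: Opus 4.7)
The plan is to build up the joint distribution in layers, treating the underlying random variables $X_i$, $S_i$, and the ``reserve'' Exponentials as the NA primitives, and then expressing each $Z_i$ as a monotone function of a disjoint subset of them so that Theorem~\ref{nathm} applies.

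First I would partition the indices into $\mathcal I_1 = \{ i \in [n] : \rho_i \in (0,1)\}$ and $\mathcal I_0 = \{i \in [n] : \rho_i \in \{0,1\}\}$. For $i \in \mathcal I_1$, the algorithm defines $Z_i = \alpha_i(X_i + S_i)$, a strictly increasing function of the pair $(X_i, S_i)$. For $i \in \mathcal I_0$, the value $Z_i$ is itself an independent unit Exponential, and I will treat it as a function of ``itself'' in the sense of Theorem~\ref{nathm}(4).

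Next I would assemble the NA primitives. By Observation~\ref{multivariate-obs}, the full vector $(X_0, X_1, \dots, X_n)$ drawn in Line~2 is NA, and in particular its subcollection $\{X_i : i \in \mathcal I_1\}$ is NA (restriction to a subset of indices preserves NA, since the defining inequality~\eqref{ata} is quantified over arbitrary disjoint subsets of the index set). The variables $\{S_i : i \in \mathcal I_1\}$ are mutually independent of each other and of every $X_j$, and the variables $\{Z_i : i \in \mathcal I_0\}$ are independent unit Exponentials drawn independently of everything else. Applying Theorem~\ref{nathm}(3) repeatedly, the union
\[
\mathcal W \;=\; \{X_i\}_{i \in \mathcal I_1} \;\cup\; \{S_i\}_{i \in \mathcal I_1} \;\cup\; \{Z_i\}_{i \in \mathcal I_0}
\]
is an NA collection (each independent block trivially satisfies~\eqref{ata} on its own, and independence across blocks lets us glue them).

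Finally I would apply Theorem~\ref{nathm}(4) to $\mathcal W$ with the monotone non-decreasing functions $f_i$ defined on the disjoint index subsets $\{X_i, S_i\}$ for $i \in \mathcal I_1$ and $\{Z_i\}$ for $i \in \mathcal I_0$: namely $f_i(X_i, S_i) = \alpha_i(X_i + S_i)$ in the first case and $f_i(Z_i) = Z_i$ in the second. The index subsets are pairwise disjoint by construction, and each $f_i$ is monotonically non-decreasing in its arguments. The output $f_i(\mathcal W) = Z_i$, so the $Z_1, \dots, Z_n$ are NA.

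The only subtle point is the bookkeeping: one must verify that the functions act on \emph{disjoint} subsets of $\mathcal W$ (otherwise Theorem~\ref{nathm}(4) does not apply), and that the common monotonicity direction holds across both the $\mathcal I_1$ and $\mathcal I_0$ cases. Both are immediate here, so the argument is essentially a direct invocation of the three listed properties; I do not expect any genuine obstacle beyond assembling the primitives in the right order.
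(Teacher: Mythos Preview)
Your proposal is correct and follows essentially the same approach as the paper: establish NA of the $X_i$ via Observation~\ref{multivariate-obs}, adjoin the independent $S_i$ and the independent reserve Exponentials via Theorem~\ref{nathm}(3), and then apply Theorem~\ref{nathm}(4) to the monotone maps $(X_i,S_i)\mapsto \alpha_i(X_i+S_i)$ on disjoint index sets. Your write-up is simply more explicit about the partition into $\mathcal I_0,\mathcal I_1$ and the bookkeeping of disjointness, but the underlying argument is identical.
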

\begin{proof}
The random variables $X_i$ are NA by Observation~\ref{multivariate-obs}. The $S_i$ variables are clearly NA since they are independent. Each variable $Z_i$ for $\rho_i \in (0,1)$ is an increasing function of the random variables $S_i, X_i$. Hence, these $Z$ variables are also NA. The $Z_i$ variables with $\rho_i \in \{0,1 \}$ are NA since they independent of all other variables.
\end{proof}

The crucial property is that the variables $Z_i$ have significant negative correlation (depending on shape parameter $\vec \rho$). Specifically, we have the following:
\begin{lemma}
\label{explemmabb1}
For any indices $i_1, i_2$ with $\rho_{i_1}, \rho_{i_2} \in (0,1)$, and any values $q_1, q_2 \in (-\infty,1)$, we have
$$
\bE[ \ee^{q_1 Z_{i_1} + q_2 Z_{i_2}} ] = \frac{1}{(1-q_1) (1-q_2)} \cdot \Bigl( 1 - \frac{ ( (1 - \rho_{i_1})^{q_1} - 1) ( (1 - \rho_{i_2})^{q_2} - 1) }{
 (1 -\rho_{i_1})^{q_1}
  (1-\rho_{i_2})^{q_2}+\rho_{i_1}+\rho_{i_2}-1} \Bigr)
$$
\end{lemma}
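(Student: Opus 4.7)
The plan is to factor the bivariate MGF along the structure of the construction, compute each factor in closed form, and finish with a short algebraic identity. Write $p_k = \rho_{i_k}$, $\alpha_k = -\log(1-p_k)$, and $a_k = q_k \alpha_k$ for $k=1,2$, so that $q_k Z_{i_k} = a_k X_{i_k} + a_k S_{i_k}$. Since $S_{i_1}$ and $S_{i_2}$ are mutually independent and also independent of the Multivariate Geometric vector $(X_0,\dots,X_n)$, the MGF splits as
$$
\bE[\ee^{q_1 Z_{i_1} + q_2 Z_{i_2}}] \;=\; \bE[\ee^{a_1 S_{i_1}}]\,\bE[\ee^{a_2 S_{i_2}}]\,\bE[\ee^{a_1 X_{i_1} + a_2 X_{i_2}}].
$$
The first two factors are one-dimensional integrals against the density $\alpha_k \ee^{-\alpha_k s}/\rho_k$ on $[0,1]$; using $\ee^{-\alpha_k} = 1-p_k$, each evaluates to $\frac{1 - (1-p_k)^{1-q_k}}{(1-q_k)\,p_k}$.

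For the joint factor I would expose the underlying sequence of trials $F_0,F_1,\ldots$ that defines the Multivariate Geometric in Observation~\ref{multivariate-obs}, and condition on whether $X_{i_1}<X_{i_2}$ or $X_{i_1}>X_{i_2}$ (equality has probability zero). In the first case, writing $X_{i_1}=j$ and $X_{i_2}=j+k$ with $j\ge 0$, $k\ge 1$, the trials $F_0,\dots,F_{j-1}$ must avoid $\{i_1,i_2\}$ (probability $p_0 := 1-p_1-p_2$ each), $F_j=i_1$, the trials $F_{j+1},\dots,F_{j+k-1}$ must avoid $i_2$ (probability $1-p_2$ each), and $F_{j+k}=i_2$. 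Summing the resulting double geometric series, whose convergence is guaranteed by $q_k<1$ and $\sum_i \rho_i\le 1$, and then adding the symmetric contribution from the opposite ordering yields
$$
\bE[\ee^{a_1 X_{i_1} + a_2 X_{i_2}}] \;=\; \frac{p_1 p_2}{1 - p_0\,\ee^{a_1+a_2}}\left(\frac{\ee^{a_1}}{1-(1-p_1)\ee^{a_1}} + \frac{\ee^{a_2}}{1-(1-p_2)\ee^{a_2}}\right).
$$

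Now I would substitute $\ee^{a_k} = (1-p_k)^{-q_k}$ and multiply the three factors together. The $p_1 p_2$ in the numerator cancels the $p_k$ denominators from the $S$-integrals, and each factor $1 - (1-p_k)^{1-q_k}$ appearing in the $X$-denominator cancels the matching numerator from $\bE[\ee^{a_k S_{i_k}}]$. Setting $B_k := (1-p_k)^{-q_k}$, what remains after these cancellations is
$$
\frac{1}{(1-q_1)(1-q_2)}\cdot\frac{B_1 + B_2 - (1+p_0)\,B_1 B_2}{1 - p_0\,B_1 B_2}.
$$
To match the statement, I would multiply the second fraction's numerator and denominator by $(B_1 B_2)^{-1} = (1-p_1)^{q_1}(1-p_2)^{q_2}$ and, setting $C_k := (1-p_k)^{q_k} = 1/B_k$, use the elementary identity $C_1 C_2 - p_0 - (C_1-1)(C_2-1) = C_1 + C_2 - 1 - p_0$ together with $p_1 + p_2 - 1 = -p_0$. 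The main obstacle is this closing algebra: the $X$-MGF naturally produces a symmetric two-term sum, and one has to recognize that it collapses to the single-fraction form in the statement only after the correct substitution and identity are applied; the probabilistic steps (factoring, one-dimensional integration, and trial-by-trial decomposition) are mechanical.
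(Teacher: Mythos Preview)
Your proof is correct and follows essentially the same approach as the paper: factor the MGF using independence of the $S_{i_k}$ from each other and from $X$, compute each $\bE[\ee^{a_k S_{i_k}}]$ by direct integration, compute $\bE[\ee^{a_1 X_{i_1}+a_2 X_{i_2}}]$ by writing down the joint pmf of $(X_{i_1},X_{i_2})$ via the two orderings and summing the geometric series, then multiply and simplify. The only difference is presentational: the paper leaves the final algebraic simplification implicit, whereas you spell out the $B_k\to C_k$ substitution and the identity $C_1 C_2 - p_0 - (C_1-1)(C_2-1) = C_1 + C_2 - 1 - p_0$ that collapses the two-term sum into the stated single fraction.
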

\begin{proof}
Suppose without loss of generality that $i_1 = 1, i_2 = 2$. For $i = 1, 2$ define parameter $\theta_i = \alpha_i q_i  = -q_i \log(1 - \rho_i)$ and random variable $L_i = X_i + S_i$. We calculate:
$$
\bE[ \ee^{q_1 Z_1 + q_2 Z_2} ] = \bE[ \ee^{\theta_1 L_1 + \theta_2 L_2}]  = \bE[ \ee^{\theta_1 S_1}  ] \bE[ \ee^{\theta_2 S_2} ] \bE[ \ee^{\theta_1 X_1 + \theta_2 X_2}]
$$ 
where the last equality holds since the variables $S_1, S_2$ are independent of $X$. 

We will calculate these term by term. For $i = 1, 2$ we have:
\begin{align*}
\bE[ \ee^{\theta_i S_i} ] &= \frac{\alpha_i}{\rho_i} \int_{s=0}^1 \ee^{-\alpha_i s + \theta_i s} \ \mathrm{d}s = \frac{ \alpha_i ( 1 - \ee^{\theta_i - \alpha_i}) }{\rho_i (\alpha_i - \theta_i) }
\end{align*}

We calculate the joint distribution of $X_1, X_2$ as:
$$
\Pr(X_1 = x_1, X_2 = x_2) = \begin{cases}
 (1 - \rho_1 - \rho_2)^{x_1} \rho_1 (1 - \rho_2)^{x_2 - x_1 - 1} \rho_2 & \text{for $0 \leq x_1 < x_2$} \\
 (1 - \rho_1 - \rho_2)^{x_2} \rho_2 (1 - \rho_1)^{x_1 - x_2 - 1} \rho_1 & \text{for $0 \leq x_2 < x_1$} \\
 0 & \text{otherwise}
 \end{cases}
 $$

So we can sum over pairs $x_1, x_2$ to get:
\begin{align*}
\bE[  \ee^{\theta_1 X_1 + \theta_2 X_2}] &= \sum_{0 \leq x_1 < x_2} (1 - \rho_1 - \rho_2)^{x_1} \rho_1 (1 - \rho_2)^{x_2 - x_1 - 1} \rho_2 \ee^{\theta_1 X_1 + \theta_2 X_2} \\
&\qquad \qquad +  \sum_{0 \leq x_2 < x_1}(1 - \rho_1 - \rho_2)^{x_2} \rho_2 (1 - \rho_1)^{x_1 - x_2 - 1} \rho_1 \ee^{\theta_1 X_1 + \theta_2 X_2} \\
&= \frac{\rho_1 \rho_2}{1 - \ee^{\theta_1 + \theta_2} (1 - \rho_1 - \rho_2)} \cdot \Bigl( \frac{1}{\ee^{-\theta_1} + \rho_1 - 1} + \frac{1}{\ee^{-\theta_2} + \rho_2 - 1} \Bigr) 
\end{align*}

The result then follows by multiplying the formulas for $\bE[  \ee^{\theta_1 X_1 + \theta_2 X_2}], \bE[ \ee^{\theta_1 S_1} ], \bE[ \ee^{\theta_2 S_2} ]$ and substituting for $\alpha_i = -\log(1 - \rho_i), \theta_i = -q_i \log(1 - \rho_i)$.
\end{proof}

\section{Bipartite dependent rounding algorithm}
\label{round-sec}
We consider here a variant of a rounding algorithm from \cite{im}, which was in turn inspired by an earlier ``fair sharing'' algorithm of \cite{feige}. For motivation, consider the following natural rounding procedure: each edge $e$ draws an Exponential random $Z_e$ variable with rate $x_e$. Then, for each right-node $v$, we set $X_f = 1$ for the edge $f = \argmin_{e \in \Gamma(v)} X_e$. It is easy to see that this is equivalent to independent rounding, and in particular we have $\Pr(X_e = 1) = x_e$. 

Instead of generating the variables $Z_e$ independently, we will use our algorithm \textsc{CorrelatedExponential}. This creates the desired negative correlation in the bipartite rounding.

In describing our algorithm, we will assume that our original fraction solution $\vec x$ satisfies  $$
\forall e \in E \ \  x_e \in (0,1), \qquad \text{and} \qquad \forall v \in V \sum_{e \in \Gamma(v)} x_e = 1
$$

These conditions can be assumed without loss of generality by removing edges with $x_e \in \{0,1 \}$, and by adding dummy edges for each right-node.

\begin{algorithm}[H]
\caption{\sloppy {$\textsc{DepRound}(\vec \rho, \vec x)$}}
\For{each left-node $u$} {
Call $\vec Z^{(u)} \leftarrow \textsc{CorrelatedExponential}(\vec \rho^{(u)} )$ for vector $(\rho^{u}(v) = \rho_{(u,v)}: v \in \Gamma(u))$
}
Combine all vectors $\vec Z^{(u)}$ into a single vector $Z \in \mathbb R_{\geq 0}^{E}$ with $Z_{(u,v)} = Z^{(u)}_v$ for all $u,v$. \\
\For{each right-node $v$} {
Choose neighbor $e = \argmin_{f \in \Gamma(v)} Z_f/x_f$. \\
Set $X_{e} = 1$ and $X_{e'} = 0$ for all other neighbors $e' \in \Gamma(v) \setminus \{ e \}$ \\
}
\Return vector $\vec X$
\end{algorithm}

 From properties of \textsc{CorrelatedExponential}, the following properties immediately hold:
\begin{proposition}
\label{lemagg1}
Suppose that $\vec \rho \in [0,1]^{E}$ satisfies $\sum_{e \in \Gamma(u)} \rho_e = 1$ for all left-nodes $u$. Then \textsc{DepRound} can be implemented in polynomial time with the following properties:
\begin{itemize}
\item Each variable $Z_e$ has the Exponential distribution with rate $1$. 
\item The random variables $Z_e: e \in E$ are NA.
\item For a right-node $v$, all random variables $Z_{e}: e \in \Gamma(v)$ are independent.
\end{itemize}
\end{proposition}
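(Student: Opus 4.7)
The plan is to derive each of the three bullets directly from the established properties of \textsc{CorrelatedExponential} in Section~\ref{round-sec1}, using only the structural observation that the subroutine is invoked independently for each left-node.  Since the hypothesis $\sum_{e \in \Gamma(u)} \rho_e = 1$ ensures the input vector $\vec \rho^{(u)}$ satisfies the precondition $\sum_v \rho^{(u)}_v \leq 1$, every call in Line 2 is well-defined and, by the standard sampling remarks after the description of \textsc{CorrelatedExponential}, runs in randomized polynomial time; taking the union over the $|U|$ left-nodes keeps the total runtime polynomial.

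For the marginal distribution bullet, I would simply cite the first proposition of Section~\ref{round-sec1}: for each left-node $u$ and each $v \in \Gamma(u)$, the coordinate $Z^{(u)}_v$ returned by $\textsc{CorrelatedExponential}(\vec \rho^{(u)})$ is unit Exponential.  Since $Z_{(u,v)}$ is defined to equal $Z^{(u)}_v$, the claim is immediate.

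For the NA property across all of $E$, I would argue in two steps.  First, within a single left-node $u$, the second proposition of Section~\ref{round-sec1} tells us the vector $\vec Z^{(u)}$ is NA.  Second, since the loop in Line 1 performs independent draws for different left-nodes, the collections $\{\vec Z^{(u)} : u \in U\}$ are mutually independent.  Theorem~\ref{nathm}(3), applied iteratively over the left-nodes, then yields that the union $\{ Z_e : e \in E \}$ is NA.

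The third bullet is the most ``structural'' one, but still short: for any right-node $v$, the edges in $\Gamma(v)$ have pairwise distinct left-endpoints (a bipartite graph has no parallel edges between $u$ and $v$).  Hence the values $Z_e$ for $e \in \Gamma(v)$ come from distinct invocations of \textsc{CorrelatedExponential}, and are therefore mutually independent by construction.  There is no real obstacle anywhere in the argument; all the probabilistic content has already been established in Section~\ref{round-sec1}, and what remains here is purely a bookkeeping step matching the ``blocks'' of the rounding algorithm to the independent subroutine calls.
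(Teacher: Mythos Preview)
Your argument is correct and matches the paper's approach exactly: the paper simply states that these properties ``immediately hold'' from the properties of \textsc{CorrelatedExponential}, and what you have written is precisely the unpacking of that remark via the two propositions of Section~\ref{round-sec1} together with Theorem~\ref{nathm}(3).
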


For an edge-set $L$, let us write $Z_L$ for the vector of random variables $(Z_e: e \in L)$.  At this point, we can immediately show property (A2):
\begin{proposition}
For any edge $e$ there holds $\Pr(X_e = 1) = x_e$.
\end{proposition}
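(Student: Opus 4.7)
The plan is to reduce the claim directly to the classical ``Exponential race'' fact: if $Y_1, \dots, Y_k$ are independent Exponential random variables with rates $\lambda_1, \dots, \lambda_k$, then $\Pr(Y_i = \min_j Y_j) = \lambda_i/\sum_j \lambda_j$. The algorithm sets $X_e = 1$ for the specific edge $e \in \Gamma(v)$ minimizing $Z_f/x_f$, so once we recognize $\{Z_f/x_f : f \in \Gamma(v)\}$ as an independent collection of rate-$x_f$ Exponentials summing to a total rate of $1$, the result is immediate.

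Concretely, I would fix an edge $e = (u,v)$ and note that $X_e = 1$ exactly when $Z_e/x_e < Z_f/x_f$ for every $f \in \Gamma(v) \setminus \{e\}$ (the minimum is almost surely unique since the $Z$'s have continuous distributions). By Proposition~\ref{lemagg1}, each $Z_f$ is rate-$1$ Exponential and, crucially, the family $\{Z_f : f \in \Gamma(v)\}$ is \emph{independent}, because distinct edges incident to $v$ come from distinct left-nodes $u'$, and the algorithm invokes \textsc{CorrelatedExponential} independently for each left-node. Rescaling, the variables $Y_f := Z_f/x_f$ are independent Exponentials with rates $x_f$. Applying the standard minimum-of-Exponentials identity gives
$$
\Pr(X_e = 1) = \Pr\bigl(Y_e = \min_{f \in \Gamma(v)} Y_f\bigr) = \frac{x_e}{\sum_{f \in \Gamma(v)} x_f} = x_e,
$$
where the last equality uses the normalization assumption $\sum_{f \in \Gamma(v)} x_f = 1$ imposed at the start of the section.

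There is essentially no obstacle here: the whole content of the proposition is encoded in the cross-left-node independence asserted by Proposition~\ref{lemagg1}, together with the degree-one normalization on the right. The negative-correlation machinery from \textsc{CorrelatedExponential} is not used at this step --- it only becomes relevant later when analyzing products $X_{e_1}X_{e_2}$ that share a left-node, i.e.~properties (A3) and (A4).
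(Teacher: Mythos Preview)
Your proof is correct and follows essentially the same route as the paper: independence of $\{Z_f : f \in \Gamma(v)\}$ from Proposition~\ref{lemagg1}, rescaling to rate-$x_f$ Exponentials, and the standard Exponential-race identity together with the normalization $\sum_{f \in \Gamma(v)} x_f = 1$.
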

\begin{proof}
The random variables $Z_f: f \in \Gamma(v)$ are independent unit-rate Exponentials. So random variables $Z_f/x_f: f \in \Gamma(v)$ are independent Exponentials with rates $x_f$ respectively, and $\sum_{f \in \Gamma(v)} x_f = 1$.   It is a well-known standard fact that, for independent Exponential random variables $Y_1, \dots, Y_{\ell}$ with rates $\lambda_1, \dots, \lambda_{\ell}$, there holds $\Pr( Y_i = \min\{ Y_1, \dots, Y_{\ell} \}) = \frac{\lambda_i}{\lambda_1 + \dots + \lambda_{\ell}}$.
\end{proof} 

We next turn to show properties (A3) and (A4). For this, we have the following key lemma:
\begin{lemma}
\label{gatt4}
Let $L \subseteq E$ be a stable edge-set. If we reveal the random variables $Z_L$, then we have
$$
\bE \Bigl[ \prod_{e \in L}  X_e  \mid Z_L \Bigr] \leq \prod_{e \in L} \ee^{(1 - 1/x_e) \cdot  Z_e}
$$
\end{lemma}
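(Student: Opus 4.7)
The plan is to condition on $Z_L$, express each $X_e$ explicitly as a product of indicator comparisons, and then apply negative association. By the construction of \textsc{DepRound}, if $v_e$ is the right-node of $e$ and $F_e = \Gamma(v_e) \setminus \{e\}$, then $X_e = \prod_{f \in F_e} \mathbf{1}[Z_f \geq Z_e \cdot x_f/x_e]$. First I would dispose of the trivial case in which two edges of $L$ share a right-node: there $\prod_{e \in L} X_e \equiv 0$ since at most one edge incident to a given right-node can have $X = 1$. Hence I may assume the right-nodes $\{v_e : e \in L\}$ are pairwise distinct, in which case the sets $F_e$ are pairwise disjoint subsets of $E \setminus L$.

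The crucial structural step invokes the stability hypothesis. I would show that for every $e \in L$ and every $f = (u', v_e) \in F_e$, no edge of $L \setminus \{e\}$ is incident to the left-node $u'$. For if $e' = (u', v_{e'}) \in L$ with $e' \neq e$, the distinct-right-node assumption forces $v_{e'} \neq v_e$, so $e = (u_e, v_e)$ and $e'$ have four distinct endpoints; but then the edge $f = (u', v_e) \in E$ witnesses $\{e, e'\}$ as a forbidden distance-two pair in the line graph. Writing $U^* = \{u_{e'} : e' \in L\}$, this shows that every edge of $F := \bigsqcup_{e \in L} F_e$ has its left-node outside $U^*$.

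Because the \textsc{CorrelatedExponential} calls at different left-nodes are run independently, conditioning on $Z_L$ (which touches only the calls at left-nodes in $U^*$) does not alter the joint distribution of $Z_F$. By Proposition~\ref{lemagg1} the family $Z_F$ is NA, and within a single $F_e$ the $Z_f$ are fully independent (the left-nodes of $F_e$ are distinct). Hence, conditionally on $Z_L = z$,
$$
\bE\Bigl[\prod_{e \in L} X_e \mid Z_L = z\Bigr] = \bE\Bigl[\prod_{e \in L} g_e(Z_{F_e})\Bigr], \qquad g_e(Z_{F_e}) := \prod_{f \in F_e} \mathbf{1}[Z_f \geq z_e x_f/x_e].
$$
Each $g_e$ is a monotone non-decreasing function on the coordinates $F_e$, and the $F_e$ are pairwise disjoint, so item~4 of Theorem~\ref{nathm} factors the expectation across $e \in L$. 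Evaluating each factor by independence within $F_e$ and using $\sum_{f \in \Gamma(v_e)} x_f = 1$ gives
$$
\bE[g_e(Z_{F_e})] = \prod_{f \in F_e} \ee^{-z_e x_f/x_e} = \ee^{-z_e(1 - x_e)/x_e} = \ee^{(1 - 1/x_e) z_e},
$$
which is exactly the claimed bound.

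The main obstacle I expect is the stability-based structural step: one must carefully argue that the ``competitor'' variables $Z_F$ reside entirely at left-nodes disjoint from $U^*$, so that the conditioning on $Z_L$ leaves their joint law untouched. Once that separation is in place, the NA inequality and the elementary exponential tail calculation finish the argument.
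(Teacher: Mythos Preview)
Your proof is correct and mirrors the paper's argument: both establish that the ``competitor'' edges have left-nodes disjoint from those of $L$ (the paper phrases this as $W \cap W' = \emptyset$ for the left-node sets of $L$ and $L'$), so conditioning on $Z_L$ leaves their NA joint law intact, after which the factorization via Theorem~\ref{nathm} and the exponential-tail computation are identical. The only imprecision is that your assertion of ``four distinct endpoints'' and the conclusion $u' \notin U^*$ also require $u' \neq u_e$; this holds because $f$ and $e$ are distinct edges incident to $v_e$ in a simple graph, a point the paper makes explicit.
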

\begin{proof}
We assume that all right-nodes of edges in $L$ are distinct, as otherwise $\prod_{e \in L} X_e = 0$ with probability one.

Define $L'$ to be the set of edges outside $L$ which share a right-node with an edge in $L$, that, is, the set of edges of the form $f = (u,v) \notin L$ where $(u', v) \in L$.  Let $W, W'$ denote the set of left-nodes of edges of $L, L'$ respectively. We claim that $W, W'$ are disjoint. For, suppose $u \in W \cap W'$. So there edges $(u, v) \in L, (u,v') \in L'$; by definition of $L'$, there must be a corresponding edge $(u', v') \in L$. Since $G$ is a simple graph, necessarily $u \neq u'$. Also $v \neq v'$ since edges in $L$ have distinct right-nodes. This contradicts that $L$ is a stable set.

Suppose that we condition on all random variables corresponding to the nodes in $W$, in particular, we reveal all values $Z_L$. The random variables corresponding to nodes in $W'$ have their original unconditioned probability distributions. We now have
$$
\Pr( \bigwedge_{e \in L} X_e = 1 ) = \Pr \Bigl( \bigwedge_{(u,v) \in L} Z_{(u,v)}/x_{(u,v)} = \min_{f \in \Gamma(v)} Z_f/x_f \Bigr)
$$
where, here and in the remainder of the proof, we omit the conditioning on $W$ for brevity. 

Each event $Z_{(u,v)}/x_{(u,v)} = \min_{f \in \Gamma(v)} Z_f/x_f$ is an increasing function of random variables $Z_{\Gamma(v) \setminus L}$. These sets $\Gamma(v) \setminus L$ are disjoint since the right-endpoints of edges in $L$ are all distinct.  Since random variables $Z_{L'}$ are NA, Theorem~\ref{nathm} yields
$$
\Pr \Bigl( \bigwedge_{(u,v) \in L} Z_{(u,v)}/x_{u,v} = \min_{f \in \Gamma(v)} Z_f/x_f \Bigr) \leq \prod_{(u,v) \in L}  \Pr \bigl( Z_{(u,v)}/x_{(u,v)} = \min_{f \in \Gamma(v)} Z_f/x_f  \bigr)
$$

For any edge $e = (u,v) \in L$, the variables $Z_f: f \in \Gamma(v) \setminus \{ e \}$ are independent unit-rate Exponentials. By standard facts about Exponential random variables, this implies that $Z' := \min_{f \in \Gamma(v) \setminus \{e \}} Z_f/x_f$ is  an Exponential random variable with rate $\sum_{f \in \Gamma(v) \setminus \{e\}} x_f = 1 - x_e$. The probability that $Z' > Z_e/x_e$ is precisely $\ee^{-(1-x_e) \cdot Z_e / x_e} = \ee^{ (1 - 1/x_e) Z_e } $.
\end{proof}

\begin{proposition}
Property (A3) holds.
\end{proposition}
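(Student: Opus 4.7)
The plan is to combine Lemma~\ref{gatt4} with the MGF of a unit Exponential and the NA property of the variables $Z_e$.

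First, I would take total expectation in Lemma~\ref{gatt4} to reduce the problem to bounding $\bE[\prod_{e \in L} \ee^{(1 - 1/x_e) Z_e}]$. The key observation is that the exponent $q_e := 1 - 1/x_e$ is in $(-\infty, 0]$ for every $e$ (since $x_e \in (0,1]$), so each factor $\ee^{q_e Z_e}$ is a monotonically non-increasing function of $Z_e$ alone. Since these functions depend on disjoint coordinates of the vector $Z$, and since by Proposition~\ref{lemagg1} the variables $(Z_e)_{e \in E}$ are NA (so the subcollection indexed by $L$ is also NA), part (4) of Theorem~\ref{nathm} applies and gives
$$
\bE\Bigl[\prod_{e \in L} \ee^{q_e Z_e}\Bigr] \;\leq\; \prod_{e \in L} \bE[\ee^{q_e Z_e}].
$$

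Next I would evaluate each factor on the right by a direct MGF computation for a unit-rate Exponential:
$$
\bE[\ee^{q_e Z_e}] \;=\; \int_0^\infty \ee^{q_e z} \ee^{-z}\,\mathrm{d}z \;=\; \frac{1}{1 - q_e} \;=\; \frac{1}{1/x_e} \;=\; x_e,
$$
valid because $q_e \leq 0 < 1$. Multiplying over $e \in L$ yields exactly $\prod_{e \in L} x_e$.

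Chaining these estimates gives
$$
\bE\Bigl[\prod_{e \in L} X_e\Bigr] \;=\; \bE\!\left[\bE\Bigl[\prod_{e \in L} X_e \,\Big|\, Z_L\Bigr]\right] \;\leq\; \bE\Bigl[\prod_{e \in L} \ee^{(1 - 1/x_e) Z_e}\Bigr] \;\leq\; \prod_{e \in L} x_e,
$$
which is exactly (A3). The only subtle step is the appeal to Theorem~\ref{nathm}(4): one must check that the functions $z \mapsto \ee^{q_e z}$ are all non-increasing (true because $q_e \leq 0$) and that they act on disjoint coordinate blocks (trivially true since each depends on a single distinct $Z_e$). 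Everything else is either Lemma~\ref{gatt4}, a one-line integral, or the NA property already established. I do not anticipate any real obstacle here; the hard work was done in Lemma~\ref{gatt4}, and (A3) is essentially its clean probabilistic consequence.
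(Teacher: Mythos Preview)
Your proposal is correct and follows essentially the same argument as the paper: both apply iterated expectation through Lemma~\ref{gatt4}, use the NA property of the $Z_e$'s (via Theorem~\ref{nathm}) to factor the expectation of the product of the non-increasing functions $\ee^{(1-1/x_e)Z_e}$, and then compute $\bE[\ee^{(1-1/x_e)Z_e}]=x_e$ by the unit-Exponential MGF.
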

\begin{proof}
Consider a  stable edge set $L \subseteq E$. By iterated expectations with respect to random variable $Z_L$ and Lemma~\ref{gatt4}, we have
$$
\bE \Bigl[ \prod_{e \in L} X_e \Bigr] = \bE_{Z_L} \Bigl[  \bE \bigl[ \prod_{e \in L} X_e \mid Z_L \bigr] \Bigr] \leq \bE \Bigl[ \prod_{e \in L} \ee^{(1 - 1/x_e) Z_e} \Bigr]
$$

Each term $\ee^{(1 - 1/x_e) Z_e}$ in this product is a decreasing function of random variable $Z_e$. Since the variables $Z_L$ are NA, Theorem~\ref{nathm} gives:
$$
\bE \Bigl [  \prod_{e \in L} \ee^{(1 - 1/x_e) Z_e} \Bigr]  \leq \prod_{e \in L}\bE \bigl[ \ee^{(1 - 1/x_e) Z_e} \bigr]
$$

Here, for an edge $e$, we have $\bE[ \ee^{(1-1/x_e) Z_e} ] = \int_{z=0}^{\infty} {\ee}^{-z} \cdot  \ee^{(1-1/x_e) z} \ \mathrm{d}z = x_e.$
\end{proof}

\begin{theorem}
\label{gatt5}
For any edges $ e_1 = ( u, v_1),  e_2 = ( u, v_2)$ with the same left-node $u$, there holds $\bE[ X_{e_1}  X_{e_2}] \leq   x_{e_1}   x_{e_2} \cdot ( 1 -  \Phi(x_{e_1}, x_{e_2}; \rho_{e_1}, \rho_{e_2}) ) $.
\end{theorem}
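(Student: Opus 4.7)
My plan is to follow the same template as the proof of property (A3), specialized to the two-element set $L = \{e_1, e_2\}$, but replacing the coarse NA-based product bound on the moment-generating function with the \emph{exact} bivariate MGF formula from Lemma~\ref{explemmabb1}. The key enabling observation is that $\{e_1, e_2\}$ is itself a stable edge-set: since $e_1$ and $e_2$ share the left-node $u$, they have distance $1$ (not $2$) in the line graph of $G$. Consequently Lemma~\ref{gatt4} applies even though the edges are adjacent; a quick check of its proof shows that the set $W'$ of left-nodes of ``crossing'' edges excludes $u$, so conditioning on $Z_{e_1}, Z_{e_2}$ does not affect the distribution of the other relevant $Z$ variables.

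First I would dispose of degenerate parameter values. If any of $x_{e_1}, x_{e_2}, \rho_{e_1}, \rho_{e_2}$ lies in $\{0,1\}$, then $\Phi(x_{e_1}, x_{e_2}; \rho_{e_1}, \rho_{e_2}) = 0$ by definition and the claim reduces to $\bE[X_{e_1} X_{e_2}] \leq x_{e_1} x_{e_2}$, which is exactly property (A3) applied to the stable set $\{e_1, e_2\}$. We may also assume $v_1 \neq v_2$, since otherwise $e_1 = e_2$ in a simple graph and the theorem implicitly refers to distinct edges.

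In the main case $x_{e_1}, x_{e_2}, \rho_{e_1}, \rho_{e_2} \in (0,1)$, I would apply Lemma~\ref{gatt4} and iterated expectation over $Z_{e_1}, Z_{e_2}$ to obtain
\begin{equation*}
\bE[X_{e_1} X_{e_2}] \;\leq\; \bE\!\left[\ee^{q_1 Z_{e_1} + q_2 Z_{e_2}}\right], \qquad q_i := 1 - 1/x_{e_i} \in (-\infty, 0].
\end{equation*}
Because $e_1$ and $e_2$ share left-node $u$, the variables $Z_{e_1}, Z_{e_2}$ are generated by a single call to \textsc{CorrelatedExponential} with parameters $\rho_{e_1}, \rho_{e_2}$. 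Thus Lemma~\ref{explemmabb1} (valid since $q_1, q_2 < 1$ and $\rho_{e_1}, \rho_{e_2} \in (0,1)$) evaluates the right-hand side exactly. Substituting $1 - q_i = 1/x_{e_i}$ turns the prefactor $\frac{1}{(1-q_1)(1-q_2)}$ into $x_{e_1} x_{e_2}$, while $(1 - \rho_{e_i})^{q_i} = (1 - \rho_{e_i})^{1 - 1/x_{e_i}}$ makes the bracketed fraction coincide with $\Phi(x_{e_1}, x_{e_2}; \rho_{e_1}, \rho_{e_2})$ term-for-term. This yields the desired bound.

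There is essentially no conceptual obstacle. The two points requiring care are (i) confirming that $\{e_1, e_2\}$ qualifies as stable despite the two edges sharing a vertex --- resolved by the line-graph characterization --- and (ii) matching exponents and signs when substituting into Lemma~\ref{explemmabb1}, which is pure bookkeeping.
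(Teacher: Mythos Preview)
Your proposal is correct and follows essentially the same route as the paper: apply Lemma~\ref{gatt4} to the stable set $L=\{e_1,e_2\}$, take expectations, and evaluate the resulting bivariate MGF via Lemma~\ref{explemmabb1} with $q_i=1-1/x_{e_i}$. Your explicit handling of the degenerate cases $\rho_{e_i}\in\{0,1\}$ (falling back to property~(A3) and $\Phi=0$) is a nice addition that the paper omits, since it tacitly assumes the nondegenerate regime.
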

\begin{proof}
Let us write $ X_i = X_{ e_i},  x_i = x_{ e_i},  \rho_i = \rho_{e_i},  Z_i = Z_{e_i}$ for $i = 1,2$.  By Lemma~\ref{gatt4} applied to stable-set  $L = \{  e_1,  e_2 \}$ we have:
\begin{align*}
\bE[   X_1  X_2 \mid  Z_1,  Z_2] &\leq \ee^{(1- 1/x_1)  Z_1 + (1- 1/x_2)  Z_2} 
\end{align*}

Observe that values $Z_1, Z_2$ are simultaneously generated by $\textsc{CorrelatedExponential}(\vec \rho^{(u)})$.  By applying Lemma~\ref{explemmabb1} with $q_1 = 1 - 1/x_1, q_2 = 1 - 1/x_2$, we have
\[
\bE[ \ee^{(1- 1/x_1)  Z_1 + (1- 1/x_2)  Z_2}  ] = x_1 x_2 \cdot
\Bigl( 1 -  \frac{ (  (1-\rho_1)^{1 - 1/x_1} - 1) ( (1-\rho_2)^{1-1/x_2} - 1)}{
 (1 -\rho_1)^{1-1/x_1}
  (1-\rho_2)^{1-1/x_2}+\rho_1+\rho_2-1} \Bigr). \qedhere
\]
\end{proof}

As we have mentioned, there is no single optimal choice for the vector $\vec \rho$. One attractive option is to simply set $\vec \rho$ proportional to $\vec p$, which leads to the following simple version of property (A4):
\begin{lemma}
\label{gatt7}
For $t \geq 4/3$ and $x_1, x_2 \in (0,1)$, there holds
$$
\Phi( x_1, x_2; t x_1, t x_2 ) \geq 1 - \frac{\ee^{t} - 1}{\ee^{t} + 1} + 0.57 (x_1 + x_2) \max\{0, 1/t - 0.45 \} 
$$
\end{lemma}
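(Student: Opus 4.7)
The plan is to substitute $\rho_i = t x_i$ directly into the explicit formula for $\Phi$ from property (A4) of Theorem~\ref{thm1} and analyze the resulting two-variable expression. Writing $f(x) := (1 - tx)^{1 - 1/x}$, a short algebraic manipulation of $1 - \frac{(f_1 - 1)(f_2 - 1)}{f_1 f_2 + \rho_1 + \rho_2 - 1}$ gives
\[
1 - \Phi(x_1, x_2; t x_1, t x_2) \;=\; \frac{f(x_1) + f(x_2) + t x_1 + t x_2 - 2}{f(x_1) f(x_2) + t x_1 + t x_2 - 1},
\]
so the lemma reduces to a concrete estimate for this rational function on the admissible box $0 < x_i < 1/t$.

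I would first analyze the corner $x_1, x_2 \to 0^+$: expanding $\log f(x) = (1 - 1/x)\log(1 - tx) = t + (t^2/2 - t)x + O(x^2)$ gives $f(x) \to \ee^t$, and hence $\Phi \to (\ee^t - 1)/(\ee^t + 1)$, which pins down the $x$-independent part of the claimed bound. Differentiating once more and using symmetry in $x_1, x_2$ produces a first-order Taylor coefficient of the form $c(t) \cdot (x_1 + x_2)$; a short calculation yields
\[
c(t) \;=\; \frac{t\bigl(\ee^t(t - 2)/2 \;-\; 1\bigr)}{(\ee^t + 1)^2}.
\]
The sign of $c(t)$ switches at the threshold $t^\star$ solving $\ee^{t}(t - 2) = 2$, and one checks numerically that $1/t^\star \approx 0.45$, which is exactly the breakpoint appearing in the $\max\{0, 1/t - 0.45\}$ factor of the lemma. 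The slope constant $0.57$ is then recovered by bounding $|c(t)|$ on $[4/3, t^\star]$ using elementary monotonicity in $t$ (with the extremal case occurring at $t = 4/3$).

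The main obstacle is extending this linearized estimate to all admissible $(x_1, x_2)$, because $f(x) \to \infty$ as $x \to 1/t$ and a na\"ive Taylor-remainder bound is not uniform near that boundary. My plan is to fix $s = x_1 + x_2$ and exploit symmetry plus convexity of $\Phi$ in $(x_1, x_2)$ along the line $x_1 + x_2 = s$ in order to reduce the bivariate inequality to either the balanced case $x_1 = x_2 = s/2$ or the extreme $x_1 = 0$, turning the problem into a one-variable calculus inequality in $s$ (and $t$) which can be checked rigorously by bounding a single second derivative. Carrying this reduction through while retaining the sharp numerical constants $0.57$ and $0.45$, and patching the behaviour uniformly across $t \geq 4/3$, is where most of the real effort will lie.
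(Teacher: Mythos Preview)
Your first-order analysis is correct and matches the paper's: the limiting value $\Phi \to (\ee^t - 1)/(\ee^t + 1)$, the derivative formula $c(t)$, and the threshold $t^\star$ with $1/t^\star \approx 0.45$ are all right. However, your plan for extending to all $(x_1, x_2)$ differs substantially from the paper's, and the convexity-of-$\Phi$ step you propose is precisely the part that is left uncertain.

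The paper never analyzes the second-order behaviour of $\Phi$ in $(x_1, x_2)$. Instead it proves two simpler auxiliary facts. First, the one-variable function $f(x) = (1 - tx)^{1 - 1/x}$ is convex on $[0, 1/t]$ for $t \geq 4/3$ (shown by directly computing $f''$), hence lies above its tangent at $x = 0$, giving $f(x) \geq \ee^t\bigl(1 + (t^2/2 - t)x\bigr)$. Second, the expression $\frac{(a_1 - 1)(a_2 - 1)}{a_1 a_2 + b}$ is monotone increasing in each $a_i \geq 1$ for $b \geq 0$. Combining these two lets one replace each $f(x_i)$ in the formula for $\Phi$ by its linear lower bound, immediately producing an algebraic lower bound valid on the whole admissible region. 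A short rearrangement then isolates a nonnegative $x_1 x_2$ term; what remains is linear in $d = x_1 + x_2 \in [0, 1/t]$ and reduces to a one-variable inequality in $t$ alone, which is verified by interval arithmetic on $[4/3, 2.22]$.

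The point is that convexity of $f$ is a one-variable statement, easy to establish, and via the monotonicity observation it is already enough to control $\Phi$ uniformly --- including near the singularity $x \to 1/t$ that you correctly flag as the obstacle. Your plan to fix $s = x_1 + x_2$ and rely on convexity (or concavity) of $\Phi$ along that line may be workable, but you have not established that convexity, and $\Phi$ is a considerably more complicated two-variable object than $f$; the paper's route sidesteps this difficulty entirely.
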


 The proof of Lemma~\ref{gatt7} involves significant numerical analysis; we defer it to Appendix~\ref{gatt7app}.    In the scheduling algorithm, \emph{most} edges will set $\rho_e \propto x_e$, and use the bound in Lemma~\ref{gatt7}. However, a few edges will use a different value of $\rho_e$. This is a good illustration of the flexibility of the bipartite dependent-rounding scheme. 
 
 Another nice choice, which can work for more general graphs, is the following:
\begin{proposition}
\label{gatt6}
For $t > 0$ and $x_1, x_2 \in (0,1)$, there holds
$$
\Phi(x_1, x_2; 1 - \ee^{-x_1 t}, 1 - \ee^{-x_2 t}) =  1 - \frac{ ( \ee^{t}-1) ( \ee^{x_1 t} + \ee^{x_2 t})}{ \ee^{2 t} -  \ee^{x_1 t} - \ee^{x_2 t} + \ee^{(x_1 + x_2) t}}\geq 1 - \frac{ \ee^{x_1 t} + \ee^{x_2 t}}{1 + \ee^{t}}
$$
\end{proposition}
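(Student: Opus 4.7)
My plan is to first verify the equality by direct substitution, and then reduce the inequality to a trivial application of $(e^{x_1 t} - 1)(e^{x_2 t} - 1) \geq 0$.

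For the equality, I would set $\rho_i = 1 - \ee^{-x_i t}$ in the definition of $\Phi$. Note that $1 - \rho_i = \ee^{-x_i t}$, so $(1 - \rho_i)^{1 - 1/x_i} = \ee^{-x_i t (1 - 1/x_i)} = \ee^{t(1 - x_i)}$. Substituting these into the formula for $\Phi$ yields
$$
\Phi = \frac{(\ee^{t(1 - x_1)} - 1)(\ee^{t(1 - x_2)} - 1)}{\ee^{t(1 - x_1)} \ee^{t(1 - x_2)} + (1 - \ee^{-x_1 t}) + (1 - \ee^{-x_2 t}) - 1}.
$$
I would then multiply numerator and denominator by $\ee^{(x_1 + x_2) t}$. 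The denominator becomes $\ee^{2t} - \ee^{x_1 t} - \ee^{x_2 t} + \ee^{(x_1 + x_2) t}$, which matches the target form. To obtain the claimed expression for $\Phi$ as $1 - (\text{something})$, I would compute $1 - \Phi$ by subtracting the numerator from the denominator; the $\ee^{2t}$ and $\ee^{(x_1+x_2)t}$ terms cancel, leaving $-\ee^{x_1 t} - \ee^{x_2 t} + \ee^{t(1+x_1)} + \ee^{t(1+x_2)} = (\ee^t - 1)(\ee^{x_1 t} + \ee^{x_2 t})$, establishing the stated identity.

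For the inequality, I would rewrite the target as $1 - \Phi \leq \frac{\ee^{x_1 t} + \ee^{x_2 t}}{1 + \ee^t}$. First I would check that both denominators are positive: $(1 + \ee^t) > 0$ trivially, and $\ee^{2t} - \ee^{x_1 t} - \ee^{x_2 t} + \ee^{(x_1 + x_2)t} = (\ee^t - \ee^{x_1 t})(\ee^t - \ee^{x_2 t}) > 0$ since $x_i < 1$ gives $\ee^{x_i t} < \ee^t$. I can then cross-multiply and cancel the positive factor $\ee^{x_1 t} + \ee^{x_2 t}$ from both sides, reducing the desired inequality to
$$
(\ee^t - 1)(1 + \ee^t) \leq \ee^{2t} - \ee^{x_1 t} - \ee^{x_2 t} + \ee^{(x_1 + x_2) t}.
$$
Since the left side equals $\ee^{2t} - 1$, this simplifies to $\ee^{x_1 t} + \ee^{x_2 t} \leq 1 + \ee^{(x_1 + x_2) t}$, i.e., $(\ee^{x_1 t} - 1)(\ee^{x_2 t} - 1) \geq 0$, which holds since $x_1, x_2, t > 0$.

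The main step is the algebraic simplification in the first paragraph; the inequality itself is essentially free once the formula is in the form shown. I do not anticipate any real obstacle, since every step is a routine identity or a trivial consequence of monotonicity of the exponential.
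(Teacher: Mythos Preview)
Your proof is correct and follows essentially the same route as the paper: both arguments reduce the inequality to the observation that $(\ee^{x_1 t}-1)(\ee^{x_2 t}-1)\geq 0$ (the paper phrases this as dropping the nonnegative term $(\eta^{x_1}-1)(\eta^{x_2}-1)$ from the denominator $\eta^2-\eta^{x_1}-\eta^{x_2}+\eta^{x_1+x_2}=(\eta^{x_1}-1)(\eta^{x_2}-1)+(\eta^2-1)$, which is equivalent to your cross-multiplication). One slip to fix: your claimed factorization $\ee^{2t}-\ee^{x_1 t}-\ee^{x_2 t}+\ee^{(x_1+x_2)t}=(\ee^t-\ee^{x_1 t})(\ee^t-\ee^{x_2 t})$ is false (expand the right side to see the cross terms are $\ee^{(1+x_i)t}$, not $\ee^{x_i t}$); positivity of the denominator instead follows from the decomposition $(\ee^{x_1 t}-1)(\ee^{x_2 t}-1)+(\ee^{2t}-1)$, both summands of which are nonnegative.
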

\begin{proof}
Let $\rho_i = 1 - \ee^{x_i t}$ for $i = 1, 2$, and let $\eta = \ee^t$. With some algebraic simplifications we get
$$
\Phi(x_1, x_2; \rho_1, \rho_2) =  1 - \frac{ ( \ee^{t}-1) ( \ee^{x_1 t} + \ee^{x_2 t})}{ \ee^{2 t} -  \ee^{x_1 t} - \ee^{x_2 t} + \ee^{(x_1 + x_2) t}} = 1 - \frac{ (\eta-1) ( \eta^{x_1} + \eta^{x_2})}{ \eta^2 - \eta^{x_1} - \eta^{x_2} + \eta^{x_1 + x_2}} 
$$
Furthermore, we can observe that
\[
\frac{ ( \eta-1) ( \eta^{ x_1} +  \eta^{ x_2})}{ \eta^2 -  \eta^{ x_1} -  \eta^{ x_2} +  \eta^{ x_1 +  x_2}}  =  \frac{ (\eta-1) (\eta^{x_1} + \eta^{ x_2})}{ (\eta^{x_1} - 1)(\eta^{x_2} - 1) + (\eta^2-1) }\leq \frac{ (\eta-1) (\eta^{x_1} + \eta^{ x_2})}{(\eta^2-1) } =  \frac{ \eta^{x_1} + \eta^{ x_2}}{\eta+1}. \qedhere
 \]
\end{proof}

\section{Scheduling to minimize completion time}
\label{machine-sched-sec}

Our general algorithm can be summarized as follows:
\begin{itemize}
\item Solve the SDP relaxation, obtaining estimates $x^{(i)}_{j,j'}$ for each machine $i$ and pair of jobs $j, j'$.  Roughly speaking, $x^{(i)}_{j,j'}$ represents the fractional extent to which jobs $j, j'$ are simultaneously scheduled on machine $i$. The case with $j = j'$ plays an especially important role, in which case we write simply $x^{(i)}_j$.
\item Based on the fractional  solution $\vec x$, partition the jobs on each machine $i$ into clusters $\mathcal C^{(i)}_{1}, \dots, \mathcal C^{(i)}_{t}$.
\item Run \textsc{DepRound} to obtain rounded variables $X^{(i)}_{j} \in \{0, 1 \}$, with $\sum_{i} X^{(i)}_{j} = 1$ for all jobs $j$. Job $j$ is assigned to the machine $i$ with $X^{(i)}_j = 1$.
\item Schedule jobs assigned to each machine $i$ in non-increasing order of Smith ratio $\sigma^{(i)}_j = w_j/p^{(i)}_j$.
\end{itemize}

We will not modify the first or last steps in any way; they are discussed in more detail in Section~\ref{sdp-subsec} next. The  difference is how we implement the second step and third steps (partitioning and rounding the jobs). Specifically,  given a fractional solution $x$, we form our clusters as follows:

\begin{algorithm}[H]
\caption{Clustering the jobs}
Define parameters  $\pi = \piValue,  \theta = \thetaValue ,  \lmax = \lmaxValue$.

Draw a random variable $\roff$ uniformly at random from $[0, 1]$.

\For{each machine $i$} {

Partition the jobs into processing time classes $\mathcal P^{(i)}_k = \{ j:  \roff + \frac{\log p^{(i)}_j}{\log \pi} \in [k, k+1)  \}$.

\For{each class $\mathcal P^{(i)}_k$} {

Initialize cluster index $\ell = 1$ and set $\mathcal C^{(i)}_{k,1} = \emptyset$.

Sort the jobs in $\mathcal P^{(i)}_k$ with $x^{(i)}_j > 0$ in non-increasing order of Smith ratio as $j_1, j_2, \dots, j_{s}$.

\For{$t = 1, \dots, s$ and each job $j = j_t$}  {

Set $\tilde \rho^{(i)}_j = \min\{ x^{(i)}_j,    \lmax - \sum_{j' \in \mathcal C^{(i)}_{k,\ell}} x^{(i)}_{j'} \}$

Update $\mathcal C^{(i)}_{k,\ell} \leftarrow \mathcal C^{(i)}_{k,\ell} \cup \{ j_t \}$. 

\If{ $\sum_{j \in \mathcal C^{(i)}_{k,\ell}} x^{(i)}_j \geq \theta$} {

Update $\ell \leftarrow \ell + 1$ and initialize the new cluster $\mathcal C^{(i)}_{k,\ell} = \emptyset$

}
}

For each cluster $k,\ell$ and every job $j \in \mathcal C^{(i)}_{k,\ell}$ set $\rho^{(i)}_j = \frac{ \tilde \rho^{(i)}_j } {\sum_{j' \in \mathcal C^{(i)}_{k,\ell}} \tilde \rho^{(i)}_{j'} }$.

}

Run  $\textsc{DepRound}( \rho,  x)$ to convert the fractional solution $x$ into an integral solution $X$.

}

\end{algorithm}

So each machine has a single ``open'' cluster for each processing-time quantization class at a time. If, after adding the job to the open cluster, the cluster size becomes at least $\theta$, then we close it and open a new one. The correlation parameter $\rho^{(i)}_j$ is usually chosen to be proportional to $x^{(i)}_j$; the one exception is that the final job that closes out a cluster may need to choose a smaller parameter $\rho^{(i)}_j$.  We say a job $j$ is \emph{truncated} on a machine $i$ if $\tilde \rho_j^{(i)} < x^{(i)}_j$; note that, in this case, every other job $j'$ in the cluster has $\rho_{j'}^{(i)} = x^{(i)}_j / \tau$.

 Note that, in our rounding algorithm, we only use the diagonal terms $x^{(i)}_j$ of the SDP relaxation; the ``cross-terms'' $x^{(i)}_{j,j'}$ appear only for the analysis. We also remark that, unlike prior algorithms, there is no special handling for ``large'' jobs (jobs with large mass $x^{(i)}_j$).

\begin{proposition}
\label{phi-mr-prop}
Algorithm \textsc{DepRound} can be used to round the fractional solution $x$ to an integral solution $X$, where for any machine $i$ and distinct jobs $j, j'$, it satisfies
$$
\bE[X^{(i)}_j] = x^{(i)}_j, \qquad \bE[ X^{(i)}_j X^{(i)}_{j'} ] \leq x^{(i)}_j x^{(i)}_{j'}
$$
and for any machine $i$ and any cluster $k, \ell$ and distinct jobs $j, j' \in \mathcal C^{(i)}_{k,\ell}$, it has
$$
\bE[ X^{(i)}_{j} X^{(i)}_{j'} ] \leq (1 - \phi^{(i)}_{j, j'}) \cdot x^{(i)}_j x^{(i)}_{j'}  \quad \text{ for $
\phi^{(i)}_{j, j'} := \Phi( x^{(i)}_{j}, x^{(i)}_{j'}; \rho^{(i)}_j, \rho^{(i)}_{j'})$}
$$
\end{proposition}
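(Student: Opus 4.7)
The plan is to cast the clustering output as a bipartite instance for \textsc{DepRound} and then read off the three claims directly from properties (A1)--(A4) of Theorem~\ref{thm1}. I would build an auxiliary bipartite graph $G = (U \cup V, E)$ in which the right-nodes $V$ are the jobs $j \in \mathcal{J}$ and the left-nodes $U$ are the individual clusters: for every machine $i$ and every cluster index $(k,\ell)$ produced by the clustering algorithm there is a left-node $u^{(i)}_{k,\ell}$, and we include the edge $e = (u^{(i)}_{k,\ell}, j)$ whenever $j \in \mathcal{C}^{(i)}_{k,\ell}$. The fractional values attached to edges are $x_e = x^{(i)}_j$ and the rate parameters are the normalized values $\rho_e = \rho^{(i)}_j$ computed in the final line of the clustering loop.

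Before invoking Theorem~\ref{thm1} I would verify its two hypotheses. For any right-node (i.e.\ job $j$), $\sum_{e \in \Gamma(j)} x_e = \sum_i x^{(i)}_j = 1$ because the SDP solution fully assigns every job. For any left-node $u^{(i)}_{k,\ell}$ the normalization step $\rho^{(i)}_j = \tilde\rho^{(i)}_j / \sum_{j' \in \mathcal{C}^{(i)}_{k,\ell}} \tilde\rho^{(i)}_{j'}$ ensures $\sum_{e \in \Gamma(u^{(i)}_{k,\ell})} \rho_e = 1$. I should also remark that $\tilde\rho^{(i)}_j > 0$ whenever $x^{(i)}_j > 0$: the cluster is only still open when the running total is strictly below $\theta < \tau$, so $\tau - \sum_{j' \in \mathcal{C}^{(i)}_{k,\ell}} x^{(i)}_{j'} > 0$ and therefore $\tilde\rho^{(i)}_j = \min\{ x^{(i)}_j, \tau - \cdots \} > 0$. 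Hence the $\rho_e$ are well-defined, strictly positive, and lie in $(0,1)$ so that the nontrivial branch of $\Phi$ applies.

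With the setup in place the three conclusions of the proposition are immediate. The marginal identity $\bE[X^{(i)}_j] = x^{(i)}_j$ is exactly property (A2). The within-cluster bound $\bE[X^{(i)}_j X^{(i)}_{j'}] \le (1 - \phi^{(i)}_{j,j'}) x^{(i)}_j x^{(i)}_{j'}$ for two jobs $j, j' \in \mathcal{C}^{(i)}_{k,\ell}$ is exactly property (A4) applied to the pair of edges $(u^{(i)}_{k,\ell}, j)$ and $(u^{(i)}_{k,\ell}, j')$ sharing the common left-node $u^{(i)}_{k,\ell}$.

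The one point that requires a brief argument is the nonpositive-correlation bound $\bE[X^{(i)}_j X^{(i)}_{j'}] \le x^{(i)}_j x^{(i)}_{j'}$ for two distinct jobs on the same machine $i$ that do \emph{not} lie in a common cluster; I would obtain this from (A3). Let $e_1 = (u^{(i)}_{k_1,\ell_1}, j)$ and $e_2 = (u^{(i)}_{k_2,\ell_2}, j')$ with $(k_1,\ell_1) \ne (k_2,\ell_2)$. I need to check that $\{e_1, e_2\}$ is a stable edge-set, i.e.\ that neither $(u^{(i)}_{k_1,\ell_1}, j')$ nor $(u^{(i)}_{k_2,\ell_2}, j)$ is an edge of $G$. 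This is true because on a fixed machine $i$ every job belongs to exactly one cluster, so $j'$ is incident only to $u^{(i)}_{k_2,\ell_2}$ and $j$ is incident only to $u^{(i)}_{k_1,\ell_1}$. Property (A3) with $L = \{e_1, e_2\}$ then gives the required bound. The case where $j$ and $j'$ lie in the same cluster is already subsumed by (A4) and in particular implies the weaker nonpositive-correlation bound as well, so the unqualified claim $\bE[X^{(i)}_j X^{(i)}_{j'}] \le x^{(i)}_j x^{(i)}_{j'}$ holds for \emph{every} pair of distinct jobs on a common machine. There is no real obstacle in this proof; the only subtlety is recognizing that the transformation ``blocks become separate left-nodes'' described in Section~\ref{def-sec} is exactly what lets the clustered per-machine structure fit into the framework of Theorem~\ref{thm1}.
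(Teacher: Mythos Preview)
Your proposal is correct and follows essentially the same approach as the paper: build the bipartite graph with clusters as left-nodes and jobs as right-nodes, verify the hypotheses of Theorem~\ref{thm1}, and read off the three claims from (A2), (A3), (A4). The paper's proof is terser---it asserts in one line that $\{(\mathcal C^{(i)}_{k,\ell}, j),(\mathcal C^{(i)}_{k',\ell'}, j')\}$ is stable for \emph{any} pair (including the same-cluster case, where the two edges share a left-node and hence are at line-graph distance~1)---whereas you split into same-cluster (via (A4)) and different-cluster (via (A3)) cases and check stability explicitly; your extra verification that each $\tilde\rho^{(i)}_j>0$ is a nice touch the paper omits.
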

\begin{proof}
Apply Theorem~\ref{thm1} to the graph with left-nodes $\mathcal C^{(i)}_{k,\ell}$ and right-nodes $\mathcal J$; for each machine $i$ and job $j$ where $j \in \mathcal C^{(i)}_{k,\ell}$, it has an edge $e = ( \mathcal C^{(i)}_{k,\ell}, j)$ with $x_e = x^{(i)}_j, \rho_e = \rho^{(i)}_j, X^{(i)}_j = X_e$.  Note that $\sum_{e \in \Gamma(j)} x_e = \sum_{e \in \mathcal C^{(i)}_{k,\ell}} \rho_e = 1$.

By Property (A1), every job is assigned to exactly one machine. By Property (A2), we have $\bE[X^{(i)}_j] = x^{(i)}_{j}$. For any pair of jobs $j, j'$ on a machine $i$, note that the corresponding edge-set  $\{   ( \mathcal C^{(i)}_{k,\ell}, j),    ( \mathcal C^{(i)}_{k',\ell'}, j') \}$ is stable. So  Property (A3) gives $\bE[ X^{(i)}_j X^{(i)}_{j'} ] \leq x^{(i)}_j x^{(i)}_{j'}$. Also, by Property (A4), every pair of jobs $j, j'$ in a  cluster $\mathcal C^{(i)}_{k,\ell}$  has  $\bE[ X^{(i)}_j X^{(i)}_{j'} ] \leq  (1-\phi^{(i)}_{j,j'}) x^{(i)}_{j} x^{(i)}_{j'}$.
 \end{proof}
 
For each class $\mathcal P^{(i)}_k$ and each job $j \in \mathcal P^{(i)}_k$, we define $$
P_k = \pi^{k - \roff}, \qquad H^{(i)}_j = p^{(i)}_j / P_k.
$$

Note that $H^{(i)}_j \in [1,\pi]$ for all $j$.  The key property we exploit is that each random value $\log_{\pi} H^{(i)}_j$ is uniformly distributed in $[0, 1]$. This leads to the following formula:
\begin{observation}
\label{hest-prop}
For any job $j$ and any function $\Psi: [1,\pi] \rightarrow \mathbb R$, we have
$$
\bE[ \Psi(H_j)  ] = \frac{1}{\log \pi} \int_{h = 1}^{\pi} \frac{\Psi(h)}{h}  \ \mathrm{d}h
$$
\end{observation}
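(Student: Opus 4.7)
The plan is to show that $\log_\pi H^{(i)}_j$ is uniformly distributed on $[0,1]$ and then apply a change of variables to read off the density of $H^{(i)}_j$ on $[1,\pi]$.

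First I would unpack the definitions. For job $j$ on machine $i$, if $j \in \mathcal P^{(i)}_k$ then by definition of the class, $k$ is the unique integer with $\roff + \log_\pi p^{(i)}_j \in [k, k+1)$. Hence $k = \lfloor \roff + \log_\pi p^{(i)}_j \rfloor$. Since $H^{(i)}_j = p^{(i)}_j / P_k = p^{(i)}_j \cdot \pi^{\roff - k}$, we get
$$
\log_\pi H^{(i)}_j = \log_\pi p^{(i)}_j + \roff - k = \bigl(\roff + \log_\pi p^{(i)}_j\bigr) - \bigl\lfloor \roff + \log_\pi p^{(i)}_j \bigr\rfloor,
$$
i.e.\ it is the fractional part of $\roff + \log_\pi p^{(i)}_j$.

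Next, since $\roff$ is uniform on $[0,1]$ and $\log_\pi p^{(i)}_j$ is a fixed constant, the fractional part $\{ \roff + \log_\pi p^{(i)}_j \}$ is itself uniform on $[0,1]$. Therefore $\log_\pi H^{(i)}_j$ is uniform on $[0,1]$, and a standard change of variables (with Jacobian $\frac{d}{dh}\log_\pi h = \frac{1}{h \log \pi}$) shows that $H^{(i)}_j$ has density $\frac{1}{h \log \pi}$ on $[1, \pi]$.

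The conclusion then follows by integrating:
$$
\bE[\Psi(H^{(i)}_j)] = \int_1^\pi \Psi(h) \cdot \frac{1}{h \log \pi} \, \mathrm{d}h = \frac{1}{\log \pi} \int_1^\pi \frac{\Psi(h)}{h} \, \mathrm{d}h.
$$
There is no real obstacle here; the only subtlety is recognizing that $\log_\pi H^{(i)}_j$ coincides with the fractional part of a uniform-plus-constant, which makes its distribution independent of $p^{(i)}_j$.
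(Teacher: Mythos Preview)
Your proof is correct and follows exactly the approach the paper intends: the paper states (without further detail) that the key property is that $\log_\pi H^{(i)}_j$ is uniformly distributed on $[0,1]$, and you have supplied the straightforward verification of this fact and the change of variables.
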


\subsection{The SDP relaxation}
\label{sdp-subsec}
Here, we provide a brief summary of the SDP relaxation and its properties. See \cite{srin1} for more details.

For a \emph{single} machine, there is a simple heuristic to minimize weighted completion time: namely, jobs should be ordered in non-increasing order of their Smith ratio $\sigma^{(i)}_j$. To simplify notation, let us suppose that all the ratios $\sigma^{(i)}_j$ are distinct. (This can be achieved without loss of generality by adding infinitesimal noise to each $p^{(i)}_j$.)  Thus, the overall weighted completion time would be
$$
\sum_{\substack{j, j' \text{assigned to machine $i$} \\  \sigma^{(i)}_{j'} \leq \sigma^{(i)}_j}} w _j  p^{(i)}_{j'} 
$$

For each machine $i$, let us further define a symmetric $(|\mathcal J|+1) \times (|\mathcal J|+1)$ matrix $\xmat^{(i)}$ as follows: we set $\xmat^{(i)}_{0,0} = 1$, we set $\xmat^{(i)}_{0,j} = \xmat^{(i)}_{j,0} = x^{(i)}_{j}$, and we set $\xmat^{(i)}_{j, j'} = x^{(i)}_{j,j'}$ for all pairs of jobs $j, j'$. This motivates the following semidefinite-programming (SDP) relaxation:
\begin{align*}
\text{maximize} &\qquad \qquad  \sum_{i \in \mathcal M} \sum_{j \in \mathcal J} w_j  \sum_{\substack{j' \in \mathcal J: \sigma^{(i)}_{j'} \leq \sigma^{(i)}_j}} p^{(i)}_{j'} x^{(i)}_{j, j'} \\
\text{subject to} & \qquad \qquad x^{(i)}_{j, j'} \in [0,1] \qquad \text{for all $j, j'$} \\
& \qquad \qquad \sum_{i \in \mathcal M} x^{(i)}_j = 1 \qquad \text{for all $j$} \\
& \qquad  \text{Each matrix $\xmat^{(i)}$ is symmetric and positive-semidefinite}
\end{align*}

As described in \cite{srin1}, this relaxation can be solved in polynomial time. Furthermore, given an integral solution $X^{(i)}_{j} \in \{0,1 \}^{\mathcal J}$, there is a corresponding SDP solution defined by $x^{(i)}_{j,j'} = X^{(i)}_j X^{(i)}_{j'}$. Our goal is to convert the fractional assignments $x^{(i)}_j$ into integral assignments $X^{(i)}_j$. We quote the following key results of \cite{srin1} concerning the semidefinite program.
\begin{theorem}[\cite{srin1}]
\label{thm:partitionrel}
For any machine $i^*$, suppose the jobs are sorted $1, \dots, n$ in non-increasing order of Smith ratio $\sigma^{(i^*)}_j$ along with a dummy job $n+1$ with $w_{n+1} = \sigma^{(i^*)}_{n+1} = 0$.  Then the completion time on machine $i^*$ is
$$
\sum_{j^*=1}^n ( \sigma^{(i^*)}_{j^*} - \sigma^{(i^*)}_{j^*+1}) Z^{(i^*,j^*)}
$$
while the contribution to the SDP objective function corresponding to machine $i^*$ is given by
$$
\sum_{j^*=1}^n (\sigma^{(i^*)}_{j^*} - \sigma^{(i^*)}_{j^*+1}) \lb^{(i^*,j^*)}
$$
where for each job $j^*$ we define 
\begin{align*}
Z^{(i^*,j^*)} &= \frac{1}{2} \Bigl( \sum_{j=1}^{j^*} X_j^{(i^*)} (p_{j}^{(i^*)})^2  + \sum_{j=1}^{j^*} \sum_{j'=1}^{j^*}  X_j^{(i^*)} X_{j'} ^{(i^*)} p_j^{(i^*)} p_{j'}^{(i^*)}   \Bigr)  \\
\lb^{(i^*,j^*)} &= \frac{1}{2} \Bigl( \sum_{j=1}^{j^*} x_j^{(i^*)} (p_{j}^{(i^*)})^2 + \sum_{j=1}^{j^*} \sum_{j'=1}^{j^*} x^{(i^*)}_{j,j'} p^{(i^*)}_j p^{(i^*)}_{j'}  \Bigr) 
\end{align*}
\end{theorem}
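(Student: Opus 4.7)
The strategy is summation by parts (Abel summation), centered on the telescoping identity
\[
\sigma^{(i^*)}_j = \sum_{j^* = j}^{n} \bigl( \sigma^{(i^*)}_{j^*} - \sigma^{(i^*)}_{j^*+1} \bigr),
\]
which holds exactly because of the dummy job convention $\sigma^{(i^*)}_{n+1} = 0$. My plan is to substitute this expansion into the left-hand sides of both claimed identities and swap the order of summation; the resulting ``prefix sums up to $j^*$'' will be precisely the expressions $Z^{(i^*,j^*)}$ and $\lb^{(i^*,j^*)}$, and the coefficient $\sigma_{j^*} - \sigma_{j^*+1}$ will fall out of the telescoping automatically.

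For the completion-time identity, start from $\sum_j X^{(i^*)}_j w_j C^{(i^*)}_j = \sum_j \sigma_j p_j X_j \sum_{j' \le j} X_{j'} p_{j'}$, using $w_j = \sigma_j p_j$ together with the fact that Smith-sorting makes the completion time of an $i^*$-assigned job equal to the prefix sum of processing times. Split the inner sum into its diagonal $(j'=j)$ and off-diagonal $(j' < j)$ pieces, use $X_j^2 = X_j$ to absorb the diagonal factor of $1/2$, and symmetrize the off-diagonal over the relabeling $j \leftrightarrow j'$ to rewrite the sum as
\[
\tfrac{1}{2} \sum_j X_j \sigma_j p_j^2 \;+\; \tfrac{1}{2} \sum_{j, j'} X_j X_{j'} \sigma_{\max(j,j')} p_j p_{j'},
\]
where $\sigma_{\max(j,j')}$ arises because in the completion-time accounting the Smith factor is always attached to the later-scheduled (larger-index) job of each pair. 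Now expand each $\sigma$-factor by the Abel identity above, and swap the order of summation: the condition $j^* \ge j$ (respectively $j^* \ge \max(j,j')$) becomes the prefix condition $j \le j^*$ (respectively $j, j' \le j^*$), matching the definition of $Z^{(i^*,j^*)}$ exactly.

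The SDP-objective identity follows from the same calculation, with $X_j X_{j'}$ replaced by $x^{(i^*)}_{j,j'}$ and the standalone $X_j$ replaced by the marginal $x^{(i^*)}_j$. The role previously played by $X_j^2 = X_j$ is taken by the moment-matrix constraint $\xmat^{(i^*)}_{j,j} = x^{(i^*)}_j$ that the SDP imposes on diagonal entries of $\xmat^{(i^*)}$, which is the natural relaxation of the boolean identity $X_j^2 = X_j$ and is what ties the $x^{(i^*)}_j$-term in the first sum of $\lb^{(i^*,j^*)}$ to the diagonal of its second sum. I expect the only real obstacle to be careful bookkeeping --- tracking the factor of $1/2$, the diagonal-vs.-off-diagonal split, the symmetry of $x^{(i^*)}_{j,j'}$, and the boundary case $\sigma^{(i^*)}_{n+1} = 0$ --- rather than any novel probabilistic or combinatorial input.
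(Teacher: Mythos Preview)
Your proposal is correct. The Abel-summation argument you outline --- expand $\sigma_j$ via the telescoping identity, symmetrize the off-diagonal so that the Smith factor becomes $\sigma_{\max(j,j')}$, swap summation order to convert $j^*\ge\max(j,j')$ into the prefix condition $j,j'\le j^*$ --- recovers exactly the stated decomposition, and the SDP side goes through verbatim once you invoke the diagonal constraint $x^{(i^*)}_{j,j}=x^{(i^*)}_j$ (which the paper uses implicitly, e.g.\ in the line following the definition of $\lb$). The paper itself does not supply a proof of this theorem; it is quoted as a result of \cite{srin1}, and your argument is the standard one.
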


\begin{corollary}
\label{cor:partitionrel3}
If $\bE[ Z^{(i^*,j^*)} ] \leq \eta \cdot \lb^{(i^*,j^*)}$ for all machines $i^*$ and jobs $j^*$, then the resulting schedule is an $\eta$-approximation in expectation, and the SDP has integrality gap at most $\eta$.
\end{corollary}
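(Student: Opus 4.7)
The plan is to chain together the two identities stated in Theorem~\ref{thm:partitionrel} via linearity of expectation, using the fact that the coefficients $\sigma^{(i^*)}_{j^*} - \sigma^{(i^*)}_{j^*+1}$ are non-negative (since the jobs on each machine are indexed in non-increasing order of Smith ratio, and the dummy job $n+1$ has $\sigma^{(i^*)}_{n+1} = 0$). This non-negativity is what lets us pull a per-term inequality through the weighted sum.

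Concretely, I would first note that the total (random) weighted completion time produced by our schedule is
\[
\mathrm{ALG} = \sum_{i^* \in \mathcal M} \sum_{j^* = 1}^{n} (\sigma^{(i^*)}_{j^*} - \sigma^{(i^*)}_{j^*+1}) \, Z^{(i^*,j^*)},
\]
by summing the per-machine identity from Theorem~\ref{thm:partitionrel}, and that the SDP objective value (on the fractional solution $\vec x$) equals
\[
\mathrm{SDP} = \sum_{i^* \in \mathcal M} \sum_{j^* = 1}^{n} (\sigma^{(i^*)}_{j^*} - \sigma^{(i^*)}_{j^*+1}) \, \lb^{(i^*,j^*)}.
\]
Taking expectations, using linearity, the non-negativity of each coefficient $\sigma^{(i^*)}_{j^*} - \sigma^{(i^*)}_{j^*+1} \geq 0$, and the hypothesis $\bE[Z^{(i^*,j^*)}] \leq \eta \cdot \lb^{(i^*,j^*)}$ termwise, I get $\bE[\mathrm{ALG}] \leq \eta \cdot \mathrm{SDP}$.

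The approximation-ratio conclusion then follows because the SDP is a valid relaxation: for the optimal integral schedule, the corresponding matrices $\xmat^{(i)}$ with entries $X^{(i)}_j X^{(i)}_{j'}$ are feasible for the SDP, so $\mathrm{SDP} \leq \mathrm{OPT}$ and hence $\bE[\mathrm{ALG}] \leq \eta \cdot \mathrm{OPT}$. The integrality gap statement is immediate from $\bE[\mathrm{ALG}] \leq \eta \cdot \mathrm{SDP}$, since this exhibits a (random) integral solution whose expected cost is within a factor $\eta$ of the SDP value, and a standard averaging argument shows some realization achieves this bound.

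There is essentially no obstacle here: the only mild subtlety is checking the sign of the coefficients $\sigma^{(i^*)}_{j^*} - \sigma^{(i^*)}_{j^*+1}$, which requires the convention that jobs are sorted in non-increasing order of Smith ratio together with the dummy job at the tail having Smith ratio zero. Once that is in place, the corollary is just linearity of expectation applied termwise to the identities of Theorem~\ref{thm:partitionrel}, together with SDP feasibility of the optimal integral schedule.
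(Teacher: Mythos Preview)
Your proposal is correct and is exactly the intended argument; the paper in fact states Corollary~\ref{cor:partitionrel3} without proof, treating it as immediate from the two identities in Theorem~\ref{thm:partitionrel}, and you have spelled out precisely that deduction (non-negativity of the telescoping coefficients, linearity of expectation, and feasibility of the optimal integral solution in the SDP).
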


\subsection{Focusing on a single machine and job.}
The main consequence of Corollary~\ref{cor:partitionrel3} is that we can focus on a single machine $i^*$ and single job $j^*$  and ignore all the job weights. \emph{For the remainder of the analysis, we suppose $i^*, j^*$ are fixed, and we omit all superscripts $(i^*, j^*)$,} for example we write $X_{j}$ instead of $X^{(i^*)}_{j}$. 

We define $\mathcal J^*$ to be the set of jobs $j$ with $x_j > 0$ and $\sigma_{j} \geq \sigma_{j^*}$. As a point of notation, any sum of the form $\sum_j$ should be taken to range over $j \in \mathcal J^*$ unless stated otherwise. Likewise, in a sum of the form $\sum_{j, j'}$, we view $j, j'$ as an \emph{ordered pair} of jobs in $\mathcal J^*$; there are separate summands for $j, j'$ and for $j', j$ and we also allow $j = j'$.
 
For each class $\mathcal P_k$ we define $\mathcal P^*_k = \mathcal P_k \cap \mathcal J^*$ and for each cluster $\mathcal C_{k,\ell}$, we define $\mathcal C^*_{k,\ell} = \mathcal C_{k,\ell} \cap \mathcal J^*$. Within each class $\mathcal P_k$,    the final opened cluster $\mathcal C^*_{k,\ell}$ after processing job $j^*$ is called the \emph{leftover cluster for $k$}, and denoted by $\cleft $.

With these conventions, we can write $Z = Z^{(i^*,j^*)}$ and $\lb = \lb^{(i^*,j^*)}$ more compactly as:
\begin{align*}
Z&= \frac{1}{2} \Bigl( \sum_{j} X_j p_{j}^2  + \sum_{j,j'} X_j X_{j'} p_j p_{j'}   \Bigr)  = \sum_{j} X_j p_{j}^2  + \frac{1}{2} \sum_{\substack{j, j': j' \neq j}} X_j X_{j'} p_j p_{j'} \\
\lb &= \frac{1}{2} \Bigl( \sum_{j} x_j p_{j}^2 + \sum_{j,j'} x_{j,j'} p_j p_{j'}  \Bigr)  \quad = \sum_{j} x_j p_{j}^2 + \frac{1}{2} \sum_{\substack{j, j': j' \neq j}}  x_{j,j'} p_j p_{j'}  
\end{align*}

Following \cite{srin1}, we further define two important quantities for measuring the approximation ratio.
$$
L = \sum_{j \in \mathcal J^*} x_j p_j, \qquad Q = \sum_{j \in \mathcal J^*} x_j p_j^2
$$

\begin{theorem}
\label{thm:partitionrel2}
For any vector $y \in [0,1]^{\mathcal J^*}$ there holds\footnote{Theorem~\ref{thm:partitionrel2} was shown in \cite{srin1} for an integral vector $y \in \{0,1 \}^{\mathcal J^*}$, but we will need the slightly generalized result for our analysis.}
$$
\lb \geq \frac{1}{2} \Bigl ( Q +  \sum_{j}  y_j   x_{j} p_j^2 + \Bigl(  L - \sum_{j} (1- \sqrt{1 - y_j}) \cdot  x_{j} p_j  \Bigr)^2 \Bigr)
$$
\end{theorem}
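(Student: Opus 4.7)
The plan is to derive the bound as a consequence of the positive-semidefinite constraint on $\xmat^{(i^*)}$, applied to a carefully chosen test vector. For any scalar $\beta$, I would define $\vec b \in \mathbb{R}^{\mathcal{J} \cup \{0\}}$ by $b_0 = \beta$, $b_j = \sqrt{1-y_j}\, p_j$ for $j \in \mathcal{J}^*$, and $b_j = 0$ otherwise. The square-root is the natural fractional analogue of the indicator vector used in the integral case $y \in \{0,1\}^{\mathcal{J}^*}$ treated in \cite{srin1}: it is forced by the requirement that $b_j^2$ produce the linear factor $(1-y_j)$ appearing in the target bound.

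First I would expand $\vec b^\top \xmat^{(i^*)} \vec b \geq 0$ using $\xmat_{0,0}=1$, $\xmat_{0,j}=x_j$, and $\xmat_{j,j'}=x_{j,j'}$ (with the convention $x_{j,j}=x_j$ that is implicit in the compact formula for $\lb$), obtaining
$$
\beta^2 + 2\beta \sum_{j} \sqrt{1-y_j}\, p_j x_j + \sum_{j} (1-y_j) p_j^2 x_j + \sum_{j \neq j'} \sqrt{(1-y_j)(1-y_{j'})}\, p_j p_{j'} x_{j,j'} \;\geq\; 0.
$$
Minimizing in $\beta$ (the minimum occurs at $\beta = -M$ where $M := \sum_j \sqrt{1-y_j}\, x_j p_j$) yields
$$
\sum_{j \neq j'} \sqrt{(1-y_j)(1-y_{j'})}\, p_j p_{j'} x_{j,j'} \;\geq\; M^2 - \sum_{j} (1-y_j) p_j^2 x_j.
$$

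Next I would discard the square-root weights on the left: since $\sqrt{(1-y_j)(1-y_{j'})} \leq 1$ and each term $p_j p_{j'} x_{j,j'}$ is nonnegative (using $x_{j,j'} \in [0,1]$ and nonnegative processing times), the left-hand side is bounded above by $\sum_{j \neq j'} p_j p_{j'} x_{j,j'}$. Substituting into the compact form $\lb = Q + \tfrac{1}{2}\sum_{j \neq j'} p_j p_{j'} x_{j,j'}$ gives
$$
\lb \;\geq\; Q + \tfrac{1}{2}\bigl(M^2 - \sum_j (1-y_j) p_j^2 x_j\bigr) \;=\; \tfrac{1}{2}\bigl(Q + \sum_j y_j x_j p_j^2 + M^2\bigr),
$$
and finally the identity $M = L - \sum_j (1-\sqrt{1-y_j})\, x_j p_j$ matches the claimed expression term-for-term.

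There is no serious obstacle; the whole argument is essentially dictated by the target formula once the correct test vector is found. The only mildly nontrivial step is identifying the $\sqrt{1-y_j}$ weighting (rather than $1-y_j$), and the slack introduced by dropping $\sqrt{(1-y_j)(1-y_{j'})}$ vanishes when $y \in \{0,1\}^{\mathcal{J}^*}$, thereby recovering the original \cite{srin1} bound in that special case.
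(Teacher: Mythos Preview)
Your proof is correct and essentially identical to the paper's. Both use the test vector with entries $\sqrt{1-y_j}\,p_j$ in the PSD constraint; your step of bounding $\sqrt{(1-y_j)(1-y_{j'})}\leq 1$ on the off-diagonal terms is exactly the paper's step of lower-bounding $\sum_{j,j'}(1-z_jz_{j'})x_{j,j'}p_jp_{j'}$ by its diagonal, and your optimization over $\beta$ recovers the paper's choice $v_0=-\mu$.
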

\begin{proof}
Define $z_j = \sqrt{1-y_j}$. We decompose the expression for $\lb$ as:
\begin{align}
\frac{1}{2} \Bigl( \sum_{j } x_j p_{j}^2 + \sum_{j , j'} x_{j,j'} p_j p_{j'} \Bigr)  =\frac{1}{2} \Bigl( \sum_j x_j p_j^2 +  \sum_{j, j'} (1 - z_j z_{j'}) x_{j,j'} p_j p_{j'} + \sum_{j,j'} x_{j,j'}  z_j z_{j'} p_j p_{j'} \Bigr) 
\label{zeqn-tt1}
\end{align}

The sum $\sum_{j, j'} (1 - z_j z_{j'}) x_{j,j'} p_j p_{j'}$ can be lower-bounded by only including the terms with $j = j'$, which contribute $\sum_{j} (1 - z_j^2) x_j p_j^2 = \sum_j y_j x_j p_j^2$.  

For the sum $\sum_{j,j'} x_{j,j'}  z_j z_{j'} p_j p_{j'}$, let $\mu = \sum_j z_j x_j p_j = L - \sum_j (1 - \sqrt{1-y_j}) x_j p_j$. Define a vector $v \in \mathbb R^{|\mathcal J| + 1}$ by setting $v_0 = -\mu, v_j = z_j p_j$ for $j  \in \mathcal J^*$, and $v_j = 0$ for $j \in \mathcal J \setminus \mathcal J^*$. The SDP relaxation ensures that $v^{\top} \xmat v \geq 0$. We calculate: 
\begin{align*}
 v^{\top} \xmat v &= \xmat_{0,0} v_0^2 + 2 \sum_{j} \xmat_{0,j} v_j v_0 + \sum_{j,j'} \xmat_{j,j'} v_j v_{j'} \\
&= \mu^2 - 2 \sum_j x_j z_j p_j \mu + \sum_{j,j'} x_{j,j'} z_j z_{j'} p_j p_{j'} = -\mu^2 + \sum_{j,j'} x_{j,j'} z_j z_{j'} p_j p_{j'}
\end{align*}
So $\sum_{j,j'} x_{j,j'} z_j z_{j'} p_j p_j' \geq \mu^2$. Substituting into Eq.~(\ref{zeqn-tt1}) gives:
\[
\lb  \geq \frac{1}{2} \Bigl(  Q + \sum_j y_j x_j p_j^2 + \Bigl( L - \sum_j (1-\sqrt{1-y_j}) x_j p_j \Bigr)^2 \Bigr).   \qedhere
\]
\end{proof}

\begin{corollary}[\cite{srin1}]
\label{thm:partitionrel2a}
There holds $\lb \geq \max\{ Q, \tfrac{1}{2}( Q + L^2) \}$.
\end{corollary}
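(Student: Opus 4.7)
The plan is to derive Corollary~\ref{thm:partitionrel2a} directly from Theorem~\ref{thm:partitionrel2} by plugging in the two extreme choices of the free vector $y \in [0,1]^{\mathcal J^*}$. The two bounds in the $\max$ correspond to the two endpoints $y \equiv 1$ and $y \equiv 0$, respectively.

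First, to obtain $\lb \geq Q$, I would take $y_j = 1$ for every $j \in \mathcal J^*$. Then $\sum_j y_j x_j p_j^2 = \sum_j x_j p_j^2 = Q$, and $1 - \sqrt{1 - y_j} = 1$, so $\sum_j (1 - \sqrt{1-y_j}) x_j p_j = L$, making the squared term $(L - L)^2 = 0$. Theorem~\ref{thm:partitionrel2} then yields $\lb \geq \tfrac12(Q + Q + 0) = Q$.

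Second, to obtain $\lb \geq \tfrac12(Q + L^2)$, I would take $y_j = 0$ for every $j \in \mathcal J^*$. Then $\sum_j y_j x_j p_j^2 = 0$ and $1 - \sqrt{1-y_j} = 0$, so the squared term equals $L^2$. Theorem~\ref{thm:partitionrel2} then gives $\lb \geq \tfrac12(Q + 0 + L^2)$. Taking the maximum of the two bounds completes the proof.

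There is essentially no obstacle here: the corollary is just the specialization of Theorem~\ref{thm:partitionrel2} at the two natural extreme values of $y$, exploiting the fact that $y=1$ makes the correction in the squared term vanish while loading everything into the $\sum_j y_j x_j p_j^2$ term, and $y = 0$ does the opposite. The only thing worth double-checking is that both choices lie in the allowed range $[0,1]^{\mathcal J^*}$, which is immediate.
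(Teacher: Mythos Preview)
Your proposal is correct and matches the paper's proof exactly: the paper simply says ``Apply Theorem~\ref{thm:partitionrel2} with respectively vectors $y = \vec 1$ and $y = \vec 0$,'' which is precisely the two substitutions you carry out.
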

\begin{proof}
Apply Theorem~\ref{thm:partitionrel2} with respectively vectors $y = \vec 1$ and $y = \vec 0$.
\end{proof}

At this point, we can give a simple explanation of how the algorithm achieves a $1.5$-approximation where we only assume \emph{nonpositive} correlation among the variable $X_j$. To get a better approximation ratio we need to show there is strong negative correlation among them.

\begin{proposition}[assuming nonpositive correlation only]
\label{weakzprop}
There holds $\bE[Z] \leq 3/2  \cdot \lb$.
\end{proposition}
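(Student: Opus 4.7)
The plan is to start from the explicit formula for $Z$, take expectations, and use the nonpositive-correlation bound $\bE[X_j X_{j'}] \le x_j x_{j'}$ from Proposition~\ref{phi-mr-prop} (applied with $j \neq j'$), together with $\bE[X_j] = x_j$. This immediately gives
\[
\bE[Z] \;=\; \sum_{j} x_j p_j^2 \;+\; \tfrac{1}{2}\!\!\sum_{j,j': j'\neq j}\!\! \bE[X_j X_{j'}]\, p_j p_{j'} \;\le\; Q \;+\; \tfrac{1}{2}\sum_{j, j'} x_j x_{j'} p_j p_{j'} \;=\; Q + \tfrac{1}{2} L^2,
\]
where I have folded the diagonal term into the bound and used $\sum_j x_j p_j = L$.

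Next I would invoke the two lower bounds on $\lb$ from Corollary~\ref{thm:partitionrel2a}, namely $\lb \ge Q$ and $\lb \ge \tfrac{1}{2}(Q + L^2)$. Taking the convex combination with weights $1/3$ and $2/3$ respectively gives
\[
\lb \;\ge\; \tfrac{1}{3} Q + \tfrac{2}{3}\cdot \tfrac{1}{2}(Q+L^2) \;=\; \tfrac{2Q + L^2}{3},
\]
so $\tfrac{3}{2}\lb \ge Q + \tfrac{1}{2} L^2 \ge \bE[Z]$, as desired.

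There is essentially no obstacle here: the only step requiring care is correctly pulling the diagonal $j=j'$ contribution out of the double sum so that nonpositive correlation is only invoked for $j\neq j'$ (where it is a valid consequence of Proposition~\ref{phi-mr-prop}), while the diagonal contributes exactly $Q$. Everything else is arithmetic; the ``right'' convex combination of the two $\lb$-bounds is what makes the constant come out to exactly $3/2$, and it is precisely this tight convex combination that later analyses will need to improve upon by exploiting strong negative correlation within each cluster.
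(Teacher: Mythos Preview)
Your proposal is correct and follows essentially the same route as the paper: bound $\bE[Z] \le Q + \tfrac{1}{2}L^2$ via nonpositive correlation, then take the $\tfrac{1}{3}/\tfrac{2}{3}$ convex combination of the two bounds in Corollary~\ref{thm:partitionrel2a}. The only cosmetic difference is that the paper rewrites the off-diagonal sum as $\sum_j p_j^2(x_j - x_j^2/2) + \tfrac{1}{2}\sum_{j,j'} p_j p_{j'} x_j x_{j'}$ before dropping the $-x_j^2/2$, whereas you directly add the nonnegative diagonal into the double sum; these are equivalent.
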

\begin{proof}
For any jobs $j, j'$ we have $\bE[X_j] = x_j$ and $\bE[X_j X_{j'}] \leq x_j x_{j'}$. So 
\begin{align*}
\bE[Z] \leq & \sum_j p_j^2 x_j + \frac{1}{2} \sum_{\substack{j, j': j' \neq j}} p_j p_{j'} x_j x_{j'} =  \sum_j p_j^2 (x_j - x_j^2/2) + \frac{1}{2} \sum_{j',  j} p_j p_{j'} x_j x_{j'}  \leq  Q  + \frac{1}{2} L^2 
\end{align*}

On the other hand, Corollary~\ref{thm:partitionrel2a} gives \[
\lb \geq \max\{ Q, \tfrac{1}{2}( Q + L^2) \} \geq \frac{1}{3} Q + \frac{2}{3} (  \tfrac{1}{2}( Q + L^2) ) = \frac{2}{3} Q + \frac{1}{3} L^2. \qedhere 
\]
\end{proof}

\section{Determining the approximation ratio}
We now begin the arduous computation of the approximation ratio. The analysis has three main steps. First, we compute an upper bound on the expected value of the solution returned by the rounding algorithm. Second, we compute a lower bound on the objective function of the relaxation. Finally, we combine these two estimates. 

We will introduce  parameters $\kappa, \beta, \delta, \gamma$ to bound various internal functions, along with related numerical constants $c_0, \dots, c_6$. All calculations are carried out using exact arithmetic in the Mathematica computer algebra system; some specific calculation details are deferred to Appendix~\ref{num-app}. We emphasize that these parameters are not used directly in the algorithm itself.

\subsection{The algorithm upper-bound}
As a starting point, we have the following identity:

\begin{lemma}
\label{lem25}
If we condition on random variable $\roff$, then the expectation over the procedure \textsc{DepRound} satisfies 
$$
\bE[ Z \mid \roff]  \leq Q  + \frac{L^2}{2}- \sum_{k,\ell} P_k^2 B_{k,\ell}
$$
where for each cluster $k,\ell$ we define the ``bonus term''
$$
B_{k,\ell} =  \frac{1}{2}  \Bigl(  \sum_{j \in \mathcal C^*_{k,\ell}} x_j^2 H_j^2  + \sum_{\substack{j, j' \in \mathcal C^*_{k,\ell}: j \neq j'}} \phi_{j,j'}  x_j x_{j'} H_j H_{j'} \Bigr)
$$
where recall that $\phi_{j, j'} = \Phi(x_j, x_{j'}; \rho_j, \rho_{j'} ).$
\end{lemma}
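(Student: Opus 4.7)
The plan is a direct bookkeeping calculation: take expectations of the quadratic expression for $Z$, apply the pairwise bounds from Proposition~\ref{phi-mr-prop}, and then reorganize the resulting sum by cluster to identify the bonus terms $B_{k,\ell}$.

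First, starting from the expression
\[
Z = \sum_{j} X_j p_{j}^2 + \tfrac{1}{2} \sum_{j \neq j'} X_j X_{j'} p_j p_{j'},
\]
I will take expectations conditional on $\roff$. Note that once $\roff$ is fixed, the clustering, the values $P_k, H_j$, the masses $x_j$, and the correlation parameters $\rho_j, \phi_{j,j'}$ are all deterministic; the only remaining randomness comes from \textsc{DepRound}. Using $\bE[X_j \mid \roff] = x_j$, the bound $\bE[X_j X_{j'} \mid \roff] \leq x_j x_{j'}$ for arbitrary distinct $j, j'$, and the stronger bound $\bE[X_j X_{j'} \mid \roff] \leq (1 - \phi_{j,j'}) x_j x_{j'}$ when $j, j' \in \mathcal C^*_{k,\ell}$ for some cluster, I get
\[
\bE[Z \mid \roff] \leq \sum_j x_j p_j^2 + \tfrac{1}{2} \sum_{j \neq j'} x_j x_{j'} p_j p_{j'} - \tfrac{1}{2} \sum_{k,\ell} \sum_{\substack{j,j' \in \mathcal C^*_{k,\ell} \\ j \neq j'}} \phi_{j,j'} x_j x_{j'} p_j p_{j'}.
\]

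Next, I will replace the off-diagonal double-sum using $L = \sum_j x_j p_j$, so that
\[
\tfrac{1}{2} \sum_{j \neq j'} x_j x_{j'} p_j p_{j'} = \tfrac{1}{2} L^2 - \tfrac{1}{2} \sum_j x_j^2 p_j^2.
\]
The leading two terms become $Q + L^2/2 - \tfrac{1}{2}\sum_j x_j^2 p_j^2$. Now for each job $j \in \mathcal J^*$, $j$ lies in a unique cluster $\mathcal C^*_{k,\ell}$, and there we have $p_j = P_k H_j$; in particular $x_j^2 p_j^2 = P_k^2 x_j^2 H_j^2$. Grouping the $-\tfrac{1}{2}\sum_j x_j^2 p_j^2$ term by cluster, and using $p_j p_{j'} = P_k^2 H_j H_{j'}$ for $j, j'$ in the same class (and hence the same cluster), the subtracted terms assemble exactly into $\sum_{k,\ell} P_k^2 B_{k,\ell}$, yielding the claimed inequality.

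There is no real obstacle here — the argument is a straightforward application of Proposition~\ref{phi-mr-prop} followed by the identity $L^2 = \sum_{j,j'} x_j x_{j'} p_j p_{j'}$, with the only point requiring care being that the ``diagonal correction'' $-\tfrac{1}{2}\sum_j x_j^2 p_j^2$ splits cleanly along clusters (since each job belongs to exactly one cluster) and matches the $\sum_{j \in \mathcal C^*_{k,\ell}} x_j^2 H_j^2$ contribution inside $B_{k,\ell}$ after factoring out $P_k^2$.
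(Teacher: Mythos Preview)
Your proposal is correct and follows essentially the same approach as the paper: apply the pairwise expectation bounds from Proposition~\ref{phi-mr-prop}, rewrite the off-diagonal sum using $L^2$, split off the diagonal correction $-\tfrac12\sum_j x_j^2 p_j^2$, and regroup by cluster using $p_j = P_k H_j$ to recover $\sum_{k,\ell} P_k^2 B_{k,\ell}$. The only cosmetic difference is that the paper writes the diagonal adjustment as $\sum_j p_j^2(x_j - x_j^2/2)$ rather than subtracting $\tfrac12\sum_j x_j^2 p_j^2$ explicitly, but the calculations are identical.
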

\begin{proof}
All calculations in this proof are conditioned on $\roff$. We have:
$$
\bE[Z] =\bE \Bigl[ \sum_{j} p_j^2 X_j + \frac{1}{2} \sum_{j' \neq j} p_j p_{j'} X_j X_{j'} \Bigr]
$$

Here $\bE[X_j] = x_j$, and for any pair $j, j'$, we have $\bE[X_j X_{j'}] \leq x_j x_{j'}$. Also, for any cluster $k,\ell$ and distinct jobs $j, j' \in \mathcal C^*_{k, \ell}$, we have $\bE[X_j X_{j'}] \leq (1-\phi_{j,j'}) x_j x_{j'}$ by Proposition~\ref{phi-mr-prop}. So: 
\begin{align*}
&\bE[ Z ] \leq \sum_{j} p_j^2 x_j +  \frac{1}{2} \sum_{\substack{j, j': j \neq j'}} p_j p_{j'} x_j x_{j'} - \frac{1}{2} \sum_{k,\ell}  \sum_{\substack{j, j' \in \mathcal C^*_{k,\ell}: j \neq j'}} \phi_{j,j'} x_j x_{j'} p_j p_{j'} \\ 
& =  \sum_{j} p_j^2 (x_j - x_j^2/2) + \frac{1}{2} \sum_{j',  j} p_j p_{j'} x_j x_{j'} - \frac{1}{2} \sum_{k,\ell} \sum_{\substack{j, j' \in \mathcal C^*_{k,\ell}: j \neq j'}} \phi_{j,j'} x_j x_{j'} p_j p_{j'} \\
& =   Q + \frac{L^2}{2}   - \frac{1}{2} \sum_{k,\ell} \Bigl( \sum_{j \in \mathcal C^*_{k,\ell}} p_j^2 x_j^2 + \negthickspace   \negthickspace \sum_{\substack{j, j' \in \mathcal C^*_{k,\ell}: j \neq j' }} \negthickspace \negthickspace \phi_{j,j'} x_j x_{j'} p_j p_{j'} \Bigr)
\end{align*}
It remains to observe that, for any cluster $k, \ell$ and job $j \in \mathcal C^*_{k,\ell}$, we have $p_j = H_j P_k$, and so 
\begin{align*}
& \sum_{j \in \mathcal C^*_{k,\ell}} p_j^2 x_j^2 + \negthickspace   \negthickspace \sum_{\substack{j, j' \in \mathcal C^*_{k,\ell}: j \neq j' }} \negthickspace \negthickspace \phi_{j,j'} x_j x_{j'} p_j p_{j'} =  \sum_{j \in \mathcal C^*_{k,\ell}} (P_k^2 H_j^2) x_j^2 + \negthickspace   \negthickspace \sum_{\substack{j, j' \in \mathcal C^*_{k,\ell}: j \neq j' }} \negthickspace \negthickspace \phi_{j,j'} x_j x_{j'} (P_k H_j) (P_k H_{j'}) \\
&= P_k^2 \Bigl( \sum_{j \in \mathcal C^*_{k,\ell}} x_j^2 H_j^2 + \negthickspace   \negthickspace \sum_{\substack{j, j' \in \mathcal C^*_{k,\ell}: j \neq j' }} \negthickspace \negthickspace \phi_{j,j'} x_j x_{j'} H_j H_{j'} \Bigr) = 2 P_k^2 B_{k,\ell}. \qedhere \end{align*}
\end{proof}

The main focus of the analysis is to show that $B_{k,\ell}$ is large on average. For intuition, keep in mind the extremal case when all untruncated jobs have infinitesimal mass (i.e. $x_j^2 \approx 0$).  The truncated jobs require some special analysis. We also need to show that non-infinitesimal mass can only increase the value $B_{k,\ell}$.

\begin{lemma}
\label{lem26a}
For a cluster $k, \ell$, let $\mathcal U, \mathcal T$ be the set of untruncated jobs and truncated jobs respectively in $\mathcal C^*_{k,\ell}$ and define parameters  as follows:
\begin{align*}
\lambda = \sum_{j \in \mathcal C^*_{k,\ell}} \tilde \rho_j, \quad a = \frac{ \ee^{1/\lambda}-1}{\ee^{1/\lambda}+1}, \quad r = \sum_{j \in \mathcal U} x_j, \quad s = \sum_{j \in \mathcal U} x_j H_j, \quad y = \sum_{j \in \mathcal T} x_j, \quad d = \sum_{j \in \mathcal T} x_j H_j
\end{align*}
We have:
\begin{align*}
&B_{k,\ell} \geq d^2/2 + s/2 \cdot \inf_{x \in (0,r]} \Bigl(   x   + (a -  c_0 x) (s - x)  +  2 d \Phi( x, y; x/\tau, 1 - r/\tau) \Bigr)
\end{align*}
\text{for constant $c_0 := \cZeroValue$}.
\end{lemma}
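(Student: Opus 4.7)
The plan is to partition $B_{k,\ell}$ along the untruncated/truncated split, bound the $\phi_{j,j'}$ values via Corollary~\ref{cor333}, and then collapse a sum over jobs in $\mathcal U$ to an infimum over a single scalar.

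First I would argue that $|\mathcal T| \leq 1$. By the clustering algorithm, $\tilde \rho_{j_t} < x_{j_t}$ exactly when adding $j_t$ pushes the cluster's $x$-mass strictly above $\tau$; since $\tau > \theta$, this event simultaneously satisfies the closure condition, so at most one truncation occurs per cluster. Whenever truncation does occur, $\lambda = \sum_{j'} \tilde \rho_{j'} = \tau$, which forces $\rho_j = x_j/\tau$ for each $j \in \mathcal U$ and $\rho_{j^{\mathcal T}} = (\tau-r)/\tau = 1 - r/\tau$; these are exactly the parameters appearing inside $\Phi$ in the lemma.

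Next I expand $2 B_{k,\ell}$ into five natural pieces (U-diag, T-diag, UU-cross, TT-cross, UT-cross). The TT-cross is empty by the singleton bound on $|\mathcal T|$, and the T-diag piece equals $d^2$ in either case $|\mathcal T| \in \{0,1\}$. The UT-cross simplifies to $2d \sum_{j \in \mathcal U} \Phi(x_j, y; x_j/\tau,\, 1-r/\tau)\, x_j H_j$. Combining the remaining three contributions, I regroup
\[
2 B_{k,\ell} - d^2 \;=\; \sum_{j \in \mathcal U} x_j H_j \cdot G_j, \qquad G_j \;:=\; x_j H_j + \sum_{j' \in \mathcal U,\, j' \neq j} \phi_{j,j'}\, x_{j'} H_{j'} + 2 d\, \Phi(x_j, y; x_j/\tau, 1-r/\tau).
\]

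For the UU-pairs, Corollary~\ref{cor333} (second form) applied with $\lambda_u = \lambda \leq \tau < 3/4$ gives $\phi_{j,j'} \geq a - 0.57(x_j + x_{j'}) \max\{0,\, \lambda - 0.45\} \geq a - 0.57(x_j + x_{j'})(\tau - 0.45)$; the constant $c_0 = \cZeroValue$ precisely absorbs the factor $2 \cdot 0.57 \cdot (\tau - 0.45)$. Combined with $H_j \geq 1$ (so $x_j H_j \geq x_j$), I would derive the per-job bound $G_j \geq f(x_j)$ with $f(x) = x + (a - c_0 x)(s - x) + 2 d\, \Phi(x, y; x/\tau, 1-r/\tau)$. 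Finally, since each $x_j \in (0, r]$ and $\sum_{j \in \mathcal U} x_j H_j = s$, I conclude
\[
2 B_{k,\ell} - d^2 \;\geq\; \sum_{j \in \mathcal U} x_j H_j\, f(x_j) \;\geq\; s \cdot \inf_{x \in (0,\, r]} f(x),
\]
and dividing by $2$ gives the lemma.

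The main obstacle is establishing the per-job inequality $G_j \geq f(x_j)$. Since the $\Phi$-terms in $G_j$ and $f(x_j)$ coincide, the substance is
\[
x_j H_j + \sum_{j' \in \mathcal U,\, j' \neq j} \phi_{j,j'}\, x_{j'} H_{j'} \;\geq\; x_j + (a - c_0 x_j)(s - x_j).
\]
The identity $\sum_{j' \neq j} x_{j'} H_{j'} = s - x_j H_j$ is strictly smaller than the $(s - x_j)$ appearing on the right-hand side, so a naive substitution loses; the slack $x_j(H_j - 1) \geq 0$ from the $x_j H_j \geq x_j$ step must exactly compensate for this $H$-mismatch. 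Choosing the form of the $\phi$-bound appropriately (a plain symmetric $a - c_0(x_j + x_{j'})$ or an asymmetric $a - c_0 \max(x_j, x_{j'})$) and verifying that $c_0$ is large enough to absorb both the cross-term error and the $H$-discrepancy in a single bound is the delicate bookkeeping step.
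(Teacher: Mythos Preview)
Your decomposition and overall structure match the paper's. The gap is precisely the per-job claim $G_j \geq f(x_j)$: it is false in general, and neither form of $\phi$-bound you suggest (symmetric $a - c_0(x_j+x_{j'})$ or asymmetric $a - c_0\max(x_j,x_{j'})$) can establish it. For a counterexample take two untruncated jobs with $H_1=H_2=1$, $x_1=\epsilon$ tiny, $x_2 = r - \epsilon$, and no truncated job; then $G_1 \geq f(x_1)$ would require $\phi_{1,2} \geq a - c_0\epsilon \approx a$, whereas Corollary~\ref{cor333} only yields $\phi_{1,2} \geq a - (c_0/2)r$, which is strictly smaller. There is no $H$-slack here to rescue you since all $H_j=1$.

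The paper avoids this by staying at the sum level. Starting from $\phi_{j,j'} \geq a - (c_0/2)(x_j + x_{j'})$, symmetry in the double sum over ordered pairs gives
\[
\sum_{\substack{j,j' \in \mathcal U \\ j \neq j'}} \bigl(a - \tfrac{c_0}{2}(x_j + x_{j'})\bigr) x_j x_{j'} H_j H_{j'} \;=\; \sum_{\substack{j,j' \in \mathcal U \\ j \neq j'}} (a - c_0 x_j)\, x_j x_{j'} H_j H_{j'} \;=\; \sum_{j \in \mathcal U} x_j H_j (a - c_0 x_j)(s - x_j H_j).
\]
Now add in the diagonal $\sum_j x_j^2 H_j^2$ and compare term-by-term with $\sum_j x_j H_j\bigl(x_j + (a - c_0 x_j)(s - x_j)\bigr)$: the required per-$j$ inequality
\[
x_j H_j + (a - c_0 x_j)(s - x_j H_j) \;\geq\; x_j + (a - c_0 x_j)(s - x_j)
\]
rearranges to $x_j(H_j - 1)\bigl(1 - (a - c_0 x_j)\bigr) \geq 0$, which holds because $a \leq 1$. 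This delivers $\sum_j x_j H_j\, G_j \geq \sum_j x_j H_j\, f(x_j)$, and that sum-level bound is all you need before passing to the infimum. So your intuition about the $H$-slack compensating is right, but it only works \emph{after} the symmetry trick has produced the asymmetric coefficient $a - c_0 x_j$; attempting it pairwise with $a - (c_0/2)(x_j + x_{j'})$ leaves an uncontrolled $-(c_0/2)\sum_{j'} x_{j'}^2 H_{j'}$ term.
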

\begin{proof}
We first claim that there holds:
\begin{align}
\label{lem26abnd}
\sum_{\substack{j, j' \in \mathcal U:  j \neq j'}} \phi_{j,j'}   x_j x_{j'} H_j H_{j'} \geq \sum_{j \in \mathcal U} x_j H_j (a - c_0 x_j) (s - x_j) \quad  
\end{align}

For, each untruncated job $j$ has $\rho_j = \tilde \rho_j / \lambda = x_j / \lambda$ where $\lambda \leq \tau \leq 3/4$. Because we only add jobs with non-zero mass to clusters, it also must have $x_j > 0$. By Lemma~\ref{gatt7}, any pair $j, j' \in \mathcal U$ thus has $\phi_{j,j'} \geq a - b (x_j + x_{j'})$ for $b = 0.57 \max\{0, \lambda - 0.45 \} \leq 0.57 (\tau - 0.45) = c_0/2$. So:
\begin{align*}
& \sum_{\substack{j, j' \in \mathcal U: j \neq j'}} \phi_{j,j'}   x_j x_{j'} H_j H_{j'} \geq  \sum_{ \substack{ j, j' \in \mathcal U: j \neq j'}} \bigl( a - c_0/2 \cdot (x_j + x_{j'}) \bigr) x_j x_{j'} H_j H_{j'}  \notag \\
& \qquad = \sum_{j, j' \in \mathcal U: j \neq j'} \bigl(  a - c_0 x_j \bigr) x_j x_{j'} H_j H_{j'}  \qquad \text{(by symmetry)}  \notag \\
& \qquad = \sum_{j \in \mathcal U} x_j H_j  \bigl(  a - c_0 x_j \bigr)  \bigl( -x_j H_j + \sum_{j' \in \mathcal U} x_{j'}  H_{j'}  \bigr) = \sum_{j \in \mathcal U} x_j H_j (a - c_0 x_j) (- x_j H_j + s)
\end{align*}

Finally, since $a \geq \frac{ \ee^{1/\tau} - 1}{\ee^{1/\tau} + 1} \geq 0.68$ and $H_j \geq 1$, we have $(a - c_0 x_j) (s - x_j H_j) \geq  (a - c_0 x_j) (s - x_j)$, which establishes the bound (\ref{lem26abnd}).

Now, noting that $|\mathcal T| \leq 1$, we can decompose $B_{k,\ell}$ as follows:
\begin{align*}
2 B_{k,\ell} &= \sum_{\jtrunc \in \mathcal T} x_{\jtrunc}^2 H_{\jtrunc}^2 + \sum_{j \in \mathcal U} x_j^2 H_j^2 +
\sum_{\substack{j, j' \in \mathcal U: j \neq j'}}  \phi_{j,j'} x_j x_{j'} H_j H_{j'}  + 2 \sum_{\substack{j \in \mathcal U, \jtrunc \in \mathcal T}} \phi_{j, \jtrunc} x_j x_{\jtrunc} H_j H_{\jtrunc}  \\
 &\geq \sum_{\jtrunc \in \mathcal T} x_{\jtrunc}^2 H_{\jtrunc}^2 + \sum_{j \in \mathcal U} x_j H_j \Bigl( x_j + (a - c_0 x_j) (s - x_j)  + 2 \sum_{\jtrunc \in \mathcal T} \phi_{j, \jtrunc} x_{\jtrunc} H_{\jtrunc} \Bigr) 
\end{align*}

If $\mathcal C^*_{k,\ell}$ has a truncated job $\jtrunc$ with $\rho_{\jtrunc} = 1 - r / \lmax$, then for any job $j \in \mathcal U$ we have $\phi_{j, \jtrunc} = \Phi( x_j, y; x_j/\tau,  1 - r/\tau)$ and we get:
$$
2 B_{k,\ell} \geq d^2  +  \sum_{j \in \mathcal U} x_j H_j \bigl(  x_j + (a - c_0 x_j) (s - x_j) +2 d \Phi( x_j, y; x_j/\tau,  1 - r/\tau) \bigr)
$$
and the result follows since $x_j \in (0,r]$.
\end{proof}

There is one additional complication in the analysis.  To take advantage of stochastic processing-time classes, we need a bound for the bonus term accounting for every job $j$ \emph{individually}, irrespective of how it is clustered with others. We handle this via the following additional approximation:

\begin{proposition}
\label{cor27a}
Define parameter $\kappa = \kappaValue$. For a non-leftover cluster $\mathcal C^*_{k,\ell}$, we have
$$
B_{k,\ell} \geq c_1 \sum_{j \in \mathcal C^*_{k,\ell}} x_j (H_j - \kappa) \qquad \text{for constant $c_1 := \cOneValue$}.
$$
\end{proposition}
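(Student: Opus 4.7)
The plan is to start from Lemma~\ref{lem26a} and combine it with the non-leftover condition, which guarantees $r + y \geq \theta$ (the algorithm closes a cluster exactly upon crossing the threshold $\theta$). Rewriting the target via $\sum_{j \in \mathcal C^*_{k,\ell}} x_j H_j = s + d$ and $\sum_{j} x_j = r + y$, I need
\[
B_{k,\ell} \;\geq\; c_1 \bigl( s + d - \kappa(r + y) \bigr),
\]
and I would split on whether $\mathcal C^*_{k,\ell}$ contains a truncated job.

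If there is no truncated job, then $y = d = 0$, $\lambda = r \in [\theta, \tau]$, and $\Phi(\cdot,0;\cdot,\cdot) = 0$ by definition of $\Phi$. Lemma~\ref{lem26a} collapses to
\[
B_{k,\ell} \;\geq\; \tfrac{s}{2}\,\inf_{x \in (0,r]}\bigl( c_0 x^2 + (1 - a - c_0 s)\,x + a s \bigr),
\]
where $a = (\ee^{1/r}-1)/(\ee^{1/r}+1)$. This is an upward-opening quadratic in $x$ whose infimum on $[0,r]$ is attained either at the vertex $x^* = (a + c_0 s - 1)/(2 c_0)$ clipped into $[0,r]$, or at an endpoint. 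Substituting yields an explicit function $G(r,s)$, reducing the claim to the two-variable inequality $\tfrac{s}{2}\, G(r,s) \geq c_1(s - \kappa r)$ over the compact region $\{(r,s) : \theta \leq r \leq \tau,\ r \leq s \leq \pi r\}$.

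If there is one truncated job, then $\lambda = \tau$, $a = (\ee^{1/\tau}-1)/(\ee^{1/\tau}+1)$ is a fixed constant, and truncation forces $y > \tau - r$ so the non-leftover condition $r + y > \tau \geq \theta$ is automatic. The truncated job contributes $d^2/2$ to Lemma~\ref{lem26a}, and dropping the nonnegative $2 d\,\Phi$ cross-term still leaves
\[
B_{k,\ell} \;\geq\; \tfrac{d^2}{2} + \tfrac{s}{2}\,\inf_{x \in (0,r]}\bigl( x + (a - c_0 x)(s - x) \bigr),
\]
reducing the claim to a four-variable inequality in $(r,s,y,d)$ over a bounded feasibility box ($r \in [0,\theta)$, $y \in (\tau - r, 1]$, $s \in [r, \pi r]$, $d \in [y, \pi y]$). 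The self-contribution $d^2/2$ together with the slack $c_1 \kappa y$ on the right-hand side must dominate the linear $c_1 d$.

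The conceptual reduction is straightforward; the main obstacle will be the numerical minimization that pins down the constants $c_1$ and $\kappa$, particularly in the second case where there are four variables. Following the style of Lemma~\ref{gatt7}, I would carry out the verification in Mathematica and defer the explicit calculations to the numerical appendix.
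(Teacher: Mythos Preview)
Your reduction for the untruncated case matches the paper. The gap is in the truncated case: you cannot afford to discard the $2d\,\Phi$ cross-term.

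The point is that in the truncated case $\lambda = \tau$ is forced, so $a = (\ee^{1/\tau}-1)/(\ee^{1/\tau}+1) \approx 0.6793$ regardless of $r$. But the constant $c_1$ is pinned down by the untruncated case at $r = \theta$, where $a_\theta = (\ee^{1/\theta}-1)/(\ee^{1/\theta}+1) \approx 0.7168$. Now take a truncated cluster with $r$ just below $\theta$ and $y$ just above $\tau - r$ (say $r = 0.55$, $y = 0.06$), and set $s = 2\kappa r \approx 0.82$, $d = y$, $x \to 0$. Your bound gives
\[
\tfrac{d^2}{2} + \tfrac{s}{2}\,a_\tau\,s \;\approx\; 0.002 + 0.228 \;=\; 0.230,
\]
whereas the target is $c_1\bigl(s + d - \kappa(r+y)\bigr) \approx 0.591909 \times 0.425 \approx 0.251$. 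The $d^2/2$ term is far too small to make up a $\sim 5\%$ deficit in $a$, because $d$ itself is small in this regime. The slack $c_1\kappa y$ you cite on the right-hand side is likewise tiny. So the four-variable verification you propose will simply fail.

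The paper retains the $\Phi$ term precisely to cover this regime: at the point above one gets $\Phi(x,y;x/\tau,1-r/\tau) \approx 0.67$, contributing roughly $s d\,\Phi \approx 0.028$ to the bound and pushing it past the target (the appendix value $c_1'' \approx 0.5921$ is essentially tight here). Keeping $\Phi$ makes the minimization transcendental in $x,y,r$; the paper handles this by gridding $(r,y)$ into boxes of width $10^{-3}$, lower-bounding $\Phi$ on each box via Proposition~\ref{concave-up-prop} and Observation~\ref{hat553eqn}, and then minimizing the resulting algebraic expression over $(s,d,x)$.
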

\begin{proof}
Let $\lambda, a,r,s, y, d$ be as in Lemma~\ref{lem26a}, and define $V = \sum_{j \in \mathcal C^*_{k,\ell}} x_j (H_j - \kappa)$.

If $\mathcal C^*_{k,\ell}$ has no truncated job, then $y = d = 0$ and $\lambda= r \in [\theta,\lmax]$ and $V = s - \kappa r$. So $a = \frac{\ee^{1/r} - 1}{\ee^{1/r} + 1}$ and Lemma~\ref{lem26a} gives
$$
\frac{B_{k,\ell}}{V} \geq \inf_{\substack{r \in [\theta,\lmax], x \in (0,r] \\ s \in [r, \pi r]}} \frac{ s \bigl( x + (\frac{\ee^{1/r} - 1}{\ee^{1/r} + 1} - c_0  x)(s - x) \bigr) }{2( s - \kappa r)}
$$
and, as we show in Appendix~\ref{num-app}, this is at least $c_1$.

If $\mathcal C^*_{k,\ell}$ has a truncated job, then  $y \geq \lmax - r$, and so $V= s + d - \kappa(r + y)$ and $\lambda = \tau$ and $a = \frac{\ee^{1/\lmax}-1}{ \ee^{1/\lmax}+1}$. Lemma~\ref{lem26a} gives:
$$
\frac{B_{k,\ell}}{V} \geq \inf_{\substack{r \in [0, \theta], x \in (0,r] \\ y \in [\lmax - r,1] \\ s \in [r, \pi r], d \in [y, \pi y] }} \frac{ d^2 + s \bigl( x + (\frac{\ee^{1/\lmax}-1}{ \ee^{1/\lmax}+1}  - c_0 x) (s - x) +  2 d  \Phi( x, y; x/\tau, 1 - r/\tau) \bigr)}{ 2(s + d - \kappa (r + y))}
 $$
We show in Appendix~\ref{num-app} this is at least $0.5921116 \geq c_1$. 
\end{proof}

The non-leftover clusters are considered the ``baseline'' value. We also need to show how the leftover bonus value changes compared to this baseline.
\begin{proposition}
\label{cor26a}
For a leftover cluster $\mathcal C^*_{k,\ell} = \cleft$, define
$$
\rleft_k = \sum_{j \in \cleft} x_j, \qquad  \sleft_k = \sum_{j \in \cleft} x_j H_j, \qquad  T_k = \frac{S_k}{R_k}
$$

Then $\rleft_k \in [0,\theta], T_k \in [1,\pi]$ and $$
B_{k,\ell} \geq R_k \cdot \min\{ c_1 (T_k - \kappa), c_2 R_k T_k^2 \} \qquad \text{for constant $c_2 := \cTwoValue$}
$$

(If $R_k = 0$, we may set $T_k$ to an arbitrary value in $[1,\pi]$ and the result still holds.)
\end{proposition}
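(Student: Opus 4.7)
The plan is to mirror Proposition~\ref{cor27a} while exploiting the structural fact that a leftover cluster cannot contain a truncated job.

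First, I would verify the bounds on $R_k$ and $T_k$. The clustering rule closes a cluster as soon as its $x$-sum reaches $\theta$, so the still-open leftover cluster $\cleft$ has $R_k < \theta$. Likewise $T_k = S_k/R_k$ is a convex combination of values $H_j \in [1,\pi]$, so $T_k \in [1,\pi]$. If $R_k = 0$ then $\cleft$ is empty, so $B_{k,\ell} = 0$ and the right-hand side also vanishes, giving the claim trivially.

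Second, I would argue that $\mathcal T = \emptyset$ for any leftover cluster. Truncation of a job $j$ occurs only when $x_j > \tau - (\text{current cluster mass})$; after the (truncated) assignment the cluster mass reaches exactly $\tau > \theta$, firing the closure rule on the same iteration. Hence any cluster still open after processing $j^*$ contains no truncated job. In particular $\lambda = R_k$ and $a = \frac{\ee^{1/R_k}-1}{\ee^{1/R_k}+1}$ in the notation of Lemma~\ref{lem26a}.

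Third, applying Lemma~\ref{lem26a} with $y = d = 0$, $r = R_k$, $s = S_k$, and using $\Phi(x,0;\,\cdot\,,\,\cdot\,) = 0$ by definition, the lemma collapses to
$$
B_{k,\ell} \;\geq\; \frac{S_k}{2}\,\inf_{x \in (0, R_k]} \bigl[\, x + (a - c_0 x)(S_k - x) \,\bigr].
$$
Substituting $S_k = R_k T_k$ and dividing by $R_k$ reduces the desired conclusion to a two-variable inequality on $(R_k, T_k) \in (0,\theta] \times [1,\pi]$.

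The main obstacle is the numerical verification of this inequality. The integrand $f(x) := x + (a - c_0 x)(S_k - x)$ is a convex quadratic in $x$ with $f(0) = a S_k$, so its infimum on $(0, R_k]$ is attained at $x \to 0^+$, at the interior critical point $x^* = (a + c_0 S_k - 1)/(2 c_0)$, or at the right endpoint $x = R_k$, whichever lies in the feasible interval. Intuitively, the $c_2 R_k T_k^2$-branch is handled by the $x \to 0^+$ value: $(T_k/2)\cdot a S_k = (a/2) R_k T_k^2$, and one checks that $a \geq (\ee^{1/\theta}-1)/(\ee^{1/\theta}+1) > 2 c_2$ throughout $R_k \in (0, \theta]$. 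The $c_1(T_k - \kappa)$-branch, which becomes binding for large $T_k$, requires using the interior minimum value $f(x^*) = a S_k - (a + c_0 S_k - 1)^2/(4 c_0)$ (or the endpoint value at $R_k$). I would defer this two-parameter optimization to a Mathematica-verified computation in Appendix~\ref{num-app}, partitioning the domain along the transition curve $c_1(T_k - \kappa) = c_2 R_k T_k^2$ and handling each piece in the style of the computation used for Proposition~\ref{cor27a}.
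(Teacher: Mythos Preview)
Your approach is essentially the same as the paper's: identify that a leftover cluster has no truncated job (so $y=d=0$ and $\lambda=R_k$), invoke Lemma~\ref{lem26a}, and reduce to a finite-dimensional minimization over $(r,t,x)$.

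The one simplification you miss is that the paper immediately replaces $a=\frac{\ee^{1/\lambda}-1}{\ee^{1/\lambda}+1}$ by the \emph{constant} $2c_2$, using $\lambda\le\tau$ to get $a\ge\frac{\ee^{1/\tau}-1}{\ee^{1/\tau}+1}\ge 2c_2$ (and noting $s-x\ge 0$, $a-c_0 x>0$, so this only weakens the bound). After this substitution the target inequality
\[
\tfrac{t}{2}\bigl(x+(2c_2-c_0 x)(rt-x)\bigr)\ \ge\ \min\{c_1(t-\kappa),\,c_2 r t^2\}
\]
is purely polynomial in $(r,t,x)$ over a semialgebraic box, so it can be discharged by exact computer algebra without any interval partitioning or case split along the transition curve. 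Your plan to keep $a$ as a transcendental function of $R_k$ and verify by domain partitioning would also succeed, but is more work; in particular, your ``$x\to 0^+$'' remark for the $c_2 R_k T_k^2$ branch is only heuristic, since the infimum of the convex quadratic may sit at the interior vertex, and you would still need the numerical step to close that case.
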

\begin{proof}
The cluster $\cleft$ has no truncated jobs and has $\rleft_k \in [0,\theta]$, as otherwise it would be closed after adding its final job. In the language of Lemma~\ref{lem26a}, we have $r = \rleft_k, s = \sleft_k = R_k T_k$, and $y = d = 0$,  and $a  \geq \frac{\ee^{1/\tau} - 1}{ \ee^{1/\tau} +1 } \approx 0.67930078 \geq 2 c_2 = 0.6793$. 

Lemma~\ref{lem26a} gives $B_{k,\ell} \geq s/2 \cdot \inf_{x \in (0,r]} ( x + (2 c_2 - c_0 x )(s - x) \bigr)$. Thus, removing a common factor of $r$ and letting $t = T_k$, it suffices to show that $$
t/2 \cdot ( x + (2 c_2 - c_0 x )(r t - x) )  \geq \min \{  c_1 (t - \kappa), c_2 r t^2 \}.
$$
 This is an algebraic inequality which can be verified to hold for $r \in [0,\theta], t \in [1, \pi], x \in [0,r]$.
\end{proof}

We thus can get the following upper bound for the algorithm performance.
\begin{lemma}
\label{algexpcor}
Define function $f$ and related random variable $D$ by:
$$
f(r,t) = \max  \{ 0,  c_1 (t - \kappa) - c_2 r t^2   \}, \qquad  \qquad D = \sum_k P_k^2 R_k f(\rleft_k, T_k), 
$$

We have $$
\bE[Z] \leq \bE[D] + c_3 Q + L^2/2 \qquad \text{for constant $c_3 := \cThreeValue$}
$$
\end{lemma}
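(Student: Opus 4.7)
The plan is to use Lemma \ref{lem25} as the starting point, which gives $\bE[Z \mid \roff] \leq Q + L^2/2 - \sum_{k,\ell} P_k^2 B_{k,\ell}$. Since $Q$ and $L$ do not depend on $\roff$, after averaging over $\roff$ it suffices to show $\bE\bigl[\sum_{k,\ell} P_k^2 B_{k,\ell}\bigr] \geq (1-c_3)\, Q - \bE[D]$, which yields $\bE[Z] \leq Q + L^2/2 - (1-c_3) Q + \bE[D] = c_3 Q + L^2/2 + \bE[D]$ immediately.

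The first step is a pointwise lower bound on each cluster's bonus. Proposition \ref{cor27a} directly handles non-leftover clusters, giving $B_{k,\ell} \geq c_1 \sum_{j \in \mathcal C^*_{k,\ell}} x_j(H_j - \kappa)$. For the leftover cluster $\cleft$ of class $k$, Proposition \ref{cor26a} yields $B_{k,\ell} \geq R_k \min\{c_1(T_k - \kappa),\, c_2 R_k T_k^2\}$, and the identity $\min\{A,B\} = A - \max\{0, A-B\}$ applied with $A = c_1(T_k - \kappa)$ and $B = c_2 R_k T_k^2$ rewrites this as
$$
B_{k,\ell} \geq c_1 R_k (T_k - \kappa) - R_k\, f(R_k, T_k) = c_1 \sum_{j \in \cleft} x_j(H_j - \kappa) - R_k\, f(R_k, T_k),
$$
using $R_k = \sum_{j \in \cleft} x_j$ and $R_k T_k = \sleft_k = \sum_{j \in \cleft} x_j H_j$. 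Summing over all clusters (every $j \in \mathcal J^*$ lies in exactly one), and substituting $P_k = p_j/H_j$ for $j \in \mathcal P_k$, gives
$$
\sum_{k,\ell} P_k^2 B_{k,\ell} \geq c_1 \sum_{j \in \mathcal J^*} \frac{p_j^2}{H_j^2} \cdot x_j(H_j - \kappa) \;-\; \sum_k P_k^2 R_k\, f(R_k, T_k) = c_1 \sum_j x_j p_j^2 \cdot \frac{H_j - \kappa}{H_j^2} - D.
$$

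The second step is to take expectation over $\roff$. Since $x_j$ and $p_j$ are deterministic, Observation \ref{hest-prop} applied with $\Psi(h) = (h - \kappa)/h^2$ gives
$$
\bE\!\left[\frac{H_j - \kappa}{H_j^2}\right] = \frac{1}{\log \pi}\int_{1}^{\pi} \frac{h - \kappa}{h^3}\, dh = \frac{1 - 1/\pi - \kappa/2 + \kappa/(2\pi^2)}{\log \pi}.
$$
A direct numerical substitution of $\pi = \piValue$, $\kappa = \kappaValue$, $c_1 = \cOneValue$, $c_3 = \cThreeValue$ confirms
$$
c_1 \cdot \frac{1 - 1/\pi - \kappa/2 + \kappa/(2\pi^2)}{\log \pi} \geq 1 - c_3,
$$
so $c_1 \bE\bigl[\sum_j x_j p_j^2\, (H_j - \kappa)/H_j^2\bigr] \geq (1 - c_3)\, Q$, which together with the display above yields the required lower bound on $\bE[\sum_{k,\ell} P_k^2 B_{k,\ell}]$ and completes the proof.

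The only real obstacle is the bookkeeping that aligns leftover with non-leftover clusters: rewriting the $\min$ in Proposition \ref{cor26a} as ``baseline $c_1 \sum x_j(H_j - \kappa)$ minus $R_k f(R_k,T_k)$'' is what allows a single uniform sum over $j \in \mathcal J^*$. After that, what remains is a one-variable integral and a numerical comparison of the tuned constants $c_1, c_3, \kappa, \pi$.
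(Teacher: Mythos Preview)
Your proof is correct and follows essentially the same approach as the paper: start from Lemma~\ref{lem25}, lower-bound the bonus terms via Propositions~\ref{cor27a} and~\ref{cor26a}, rewrite the leftover bound using $\min\{A,B\} = A - \max\{0,A-B\}$ so that all clusters contribute a uniform $c_1\sum_j x_j(H_j-\kappa)$ term, then integrate over $\roff$ via Observation~\ref{hest-prop} and verify the numerical inequality on the constants. The organization and the key identity are the same as the paper's.
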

\begin{proof}
By Proposition~\ref{cor26a}, each leftover cluster $\cleft = \mathcal C^*_{k,\ell}$  has $$
B_{k,\ell} \geq R_k \min\{ c_1 (T_k - \kappa), c_2 R_k T_k^2 \} = -R_k f(\rleft_k, T_k) + c_1 \sum_{j \in \cleft} x_j (H_j - \kappa).
$$

Putting this together with Proposition~\ref{cor26a}, we can sum over all clusters in a class $\mathcal P^*_k$ to get:
\begin{align*}
\sum_{\ell} B_{k,\ell}  &\geq \bigl( -R_k f(\rleft_k, T_k) + c_1 \sum_{j \in \cleft} x_j (H_j - \kappa) \bigr) +  \sum_{\substack{\text{non-leftover } \mathcal C^*_{k,\ell}}}  c_1 \sum_{j \in \mathcal C^*_{k,\ell}} x_j (H_j - \kappa) \\
&= -R_k f(\rleft_k, T_k) + c_1 \sum_{j \in \mathcal P^*_k} x_j (H_j - \kappa) 
\end{align*}

Summing over all classes, and noting that $P_k = p_j / H_j$ for $j \in \mathcal P^*_{k}$, we have
\begin{align*}
\sum_{k, \ell} P_k^2 B_{k,\ell} &\geq \sum_k  (- P_k^2 R_k f(\rleft_k, T_k) ) +  c_1 \sum_k P_k^2 \sum_{j \in \mathcal P^*_k } x_j (H_j - \kappa)    \\
&= -D +   c_1  \sum_k \sum_{j \in \mathcal P^*_k} x_j (H_j - \kappa)  p_j^2 / H_j^2 = -D +   c_1 \sum_{j} x_j   p_j^2 (H_j - \kappa)  / H_j^2
\end{align*}

Lemma~\ref{lem25} gives  $\bE[ Z \mid \roff ] \leq Q +  L^2/2 - \sum_{k,\ell} P_k^2 B_{k,\ell}$. So we have shown at this point that
\begin{align}
\label{ttu881}
\bE[ Z  \mid \roff] &\leq  Q  + L^2/2 + D +   c_1 \sum_{j} x_j   p_j^2 (H_j - \kappa)  / H_j^2
\end{align}

We now integrate over random variable $\roff$; by Observation~\ref{hest-prop}, for any job $j$ we have
$$
\bE[ (H_j - \kappa)  / H_j^2 ] = \frac{1}{\log \pi} \int_{h=1}^\pi \frac{h - \kappa}{h^3} \ \mathrm{d}h = \frac{\kappa - 2 \pi + 2 \pi^2 - \kappa \pi^2}{2 \pi^2 \log \pi} 
$$
So, in expectation, the term $\sum_{j} x_j   p_j^2 (H_j - \kappa)  / H_j^2$  contributes $\sum_{j} x_j   p_j^2  \cdot \frac{ \kappa - 2 \pi + 2 \pi^2 - \kappa \pi^2}{2 \pi^2 \log \pi}$, and thus $$
\bE[Z] \leq (1 - c_1 \cdot \frac{ \kappa - 2 \pi + 2 \pi^2 - \kappa \pi^2}{2 \pi^2 \log \pi}) Q + L^2/2 + D.
$$

 Direct numerical calculation shows that $1 - c_1 \cdot \frac{ \kappa - 2 \pi + 2 \pi^2 - \kappa \pi^2}{2 \pi^2 \log \pi}   \leq c_3$.
\end{proof}

\subsection{The algorithm lower-bound}
\label{lbsec}
For this section, we will choose a parameter $\beta := \betaValue$, and we define related functions $g, g_k$ by
\begin{align*}
g(r,t,h) = h \cdot \Bigl( 1 - \sqrt{1 -  \frac{\beta (h - \kappa) f(r,t)}{ h^2 (t-\kappa)}} \Bigr), \qquad \qquad
g_k(h) = g( \rleft_k, T_k, h)
\end{align*}

In analyzing functions $f(r,t)$ and $g(r,t,h)$, we implicitly assume throughout that $r \in [0,\theta]$ and $t, h \in [1, \pi]$. It can be checked that in this region $\frac{\beta (h - \kappa)  f(r,t)}{h^2 (t - \kappa)} \leq 1$, so $g$ is well-defined. Likewise, it can be checked that, for any $r,t$, the map $h \mapsto g(r,t,h)$ is non-decreasing and concave-down.

\begin{lemma}
\label{lprimelemma}
We have $$
\lb \geq \frac{1}{2} \bigl( Q + \beta D  + (L - A)^2  \bigr) \quad \text{for random variable $A = \sum_{k} P_k \sum_{j \in \cleft}x_j g_k(H_j)$}
$$
\end{lemma}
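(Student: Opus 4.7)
The plan is to invoke Theorem~\ref{thm:partitionrel2} with a carefully chosen vector $y \in [0,1]^{\mathcal J^*}$, tailored so that the two contributions $\sum_j y_j x_j p_j^2$ and $\sum_j (1-\sqrt{1-y_j}) x_j p_j$ collapse exactly into $\beta D$ and $A$ respectively. Since $D$ is supported on leftover clusters and $A$ is defined by summing $g_k(H_j)$ over $j \in \cleft$, the natural attempt is to set $y_j = 0$ whenever $j$ lies outside every leftover cluster, and for each $j$ in a leftover cluster $\cleft$ choose $y_j$ so that $1 - \sqrt{1 - y_j} = g_k(H_j)/H_j$.

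Unwinding the definition of $g_k$, this forces
\[
y_j = \frac{\beta (H_j - \kappa)\, f(\rleft_k, T_k)}{H_j^2 (T_k - \kappa)}.
\]
The first step is to check $y_j \in [0,1]$: nonnegativity uses $H_j \geq 1 > \kappa$ together with $f \geq 0$, while the upper bound is precisely the regularity condition $\frac{\beta (h-\kappa) f(r,t)}{h^2 (t - \kappa)} \leq 1$ that the paper asserts holds throughout the relevant parameter range, so $g_k$ is well defined.

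Next I would perform the two key sums. Using $p_j = P_k H_j$ for $j \in \mathcal P_k^*$, we directly compute
\[
\sum_j (1 - \sqrt{1 - y_j}) x_j p_j \;=\; \sum_k P_k \sum_{j \in \cleft} x_j H_j \cdot \frac{g_k(H_j)}{H_j} \;=\; \sum_k P_k \sum_{j \in \cleft} x_j\, g_k(H_j) \;=\; A.
\]
For the quadratic contribution,
\[
\sum_j y_j x_j p_j^2 \;=\; \beta \sum_k P_k^2 \cdot \frac{f(\rleft_k, T_k)}{T_k - \kappa} \sum_{j \in \cleft} x_j (H_j - \kappa),
\]
and here the telescoping identity $\sum_{j \in \cleft} x_j (H_j - \kappa) = S_k - \kappa R_k = R_k(T_k - \kappa)$ cancels the denominator exactly, yielding $\sum_j y_j x_j p_j^2 = \beta \sum_k P_k^2 R_k f(\rleft_k, T_k) = \beta D$.

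Substituting these two identities into Theorem~\ref{thm:partitionrel2} then immediately gives the claimed bound $\lb \geq \tfrac12 (Q + \beta D + (L - A)^2)$. There is really no hard step: the only subtlety is guessing that $g_k(H_j)/H_j$ should play the role of $1 - \sqrt{1 - y_j}$, since this is precisely what makes the square-root term in $g$ (originating in the square root in Theorem~\ref{thm:partitionrel2}) combine with the leftover-cluster sum $R_k(T_k - \kappa)$ to produce $\beta D$. Everything else is mechanical verification.
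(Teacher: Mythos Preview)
Your proposal is correct and follows essentially the same approach as the paper: both define $y_j = \frac{\beta (H_j - \kappa) f(\rleft_k, T_k)}{H_j^2 (T_k - \kappa)}$ on leftover clusters (and $y_j = 0$ elsewhere), verify $y_j \in [0,1]$, and apply Theorem~\ref{thm:partitionrel2}, using $p_j = P_k H_j$ and the identity $\sum_{j \in \cleft} x_j (H_j - \kappa) = R_k (T_k - \kappa)$ to collapse the two sums into $A$ and $\beta D$ respectively.
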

\begin{proof}
Define vector $y \in [0,1]^{\mathcal J^*}$ as follows: for a job $j \in \cleft$, we set $$
y_j =  \frac{\beta (H_j - \kappa) f (\rleft_k, T_k )  }{ H_j^2 (T_k - \kappa) } \in [0,1];
$$
note that $1 - \sqrt{1 - y_j} = g_k(H_j)/H_j$. For all other jobs $j$ we set $y_j = 0$.  By Theorem~\ref{thm:partitionrel2}, we have:
\begin{align*}
2 \cdot \lb &\geq Q +   \sum_k  \sum_{j \in \cleft} \frac{\beta f (\rleft_k, T_k )( H_j - \kappa)}{H_j^2 (T_k - \kappa)} \cdot x_j p_j^2  +  \Bigl( L -  \sum_k \sum_{j \in \cleft}  x_j p_j   \cdot  g_k(H_j)/H_j   \Bigr)^2 \\
&=  Q + \beta\sum_k P_k^2  \sum_{j \in \cleft} x_j \frac{f (\rleft_k, T_k )(H_j - \kappa)}{T_k - \kappa}  +\Bigl( L - \sum_k P_k \sum_{j \in \cleft}  g_k( H_j) x_j \Bigr)^2 \\
&= Q + \beta D  +( L - A )^2. \qedhere
\end{align*}
\end{proof}

We turn to bounding the random variable $A$. This is quite involved and requires a number of intermediate calculations.

\begin{proposition}
\label{gammaprop0}
Define parameter $\delta := \deltaValue$. We have
\begin{align*}
&A^2 \leq \sum_k P_k^2 \sum_{j \in \cleft} x_j g_k(H_j) \Bigl( c_4 + R_k \cdot \frac{T_k + \delta}{H_j + \delta} \cdot g_k(H_j) \Bigr) 
& \quad \text{for constant $c_4 := \cFourValue$.}
\end{align*}
\end{proposition}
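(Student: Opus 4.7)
The plan is to split the square $A^2 = \bigl(\sum_k P_k A_k\bigr)^2$ into diagonal and off-diagonal parts, where I write $A_k := \sum_{j \in \cleft} x_j g_k(H_j)$. The target RHS decomposes cleanly as a ``linear in $g_k$'' piece $c_4 \sum_k P_k^2 A_k$ plus a ``quadratic in $g_k$'' piece $\sum_k P_k^2 R_k(T_k+\delta) \sum_j x_j g_k(H_j)^2/(H_j+\delta)$. I will match the diagonal $\sum_k P_k^2 A_k^2$ to the quadratic piece, and match the off-diagonal cross sum $2\sum_{k<k'} P_k P_{k'} A_k A_{k'}$ to the linear piece.

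The diagonal piece is the easy step: apply weighted Cauchy--Schwarz to each $A_k$ with weights $x_j(H_j+\delta)$. Since $\sum_{j \in \cleft} x_j(H_j+\delta) = S_k + \delta R_k = R_k(T_k+\delta)$, writing $A_k = \sum_j \sqrt{x_j(H_j+\delta)} \cdot \sqrt{x_j g_k(H_j)^2/(H_j+\delta)}$ yields
\[
A_k^2 \;\leq\; R_k(T_k+\delta) \sum_{j \in \cleft} \frac{x_j g_k(H_j)^2}{H_j+\delta}.
\]
Multiplying by $P_k^2$ and summing over $k$ gives exactly the quadratic summand of the target.

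The cross-term bound is where the real work lies. My plan is to distribute each term $P_k P_{k'} A_k A_{k'}$ with $k \neq k'$ as a convex combination $\alpha_{k,k'} P_k^2 A_k + \alpha'_{k,k'} P_{k'}^2 A_{k'}$ via a weighted AM-GM, choosing the weights to exploit the geometric ratio $P_{k'}/P_k = \pi^{k'-k}$. The decay $1/\pi < 1$ should allow the resulting single-index sums $\sum_{k' \neq k} \alpha_{k,k'}$ and $\sum_{k \neq k'} \alpha'_{k,k'}$ to telescope into convergent geometric tails in $1/\pi$. Combining this with the uniform bound $g_k(H_j) \leq \beta c_1 (H_j - \kappa)/H_j$ (from $1-\sqrt{1-y} \leq y$ applied to the definition of $g$) and the cluster constraints $R_k \leq \theta$, $T_k \in [1,\pi]$, the combined coefficient should reduce to the small constant $c_4 \approx \cFourValue$.

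The hard part will be calibrating the AM-GM weights so that the cross-term bound fits the linear form $c_4 \sum_k P_k^2 A_k$ exactly, without picking up either a factor depending on the number of processing-time classes or extra $g_k$-factors that would spoil the match with the first term of the target. The specific numerical constant $c_4$ must emerge as a convergent expression in $\pi, \theta, \kappa, \beta, c_1$, and verifying the right choice of weights yields a sufficiently small $c_4$ will, as in Lemma~\ref{gatt7}, reduce to a numerical inequality best checked symbolically and deferred to Appendix~\ref{num-app}.
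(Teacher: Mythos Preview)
Your Cauchy--Schwarz treatment of the diagonal terms $\sum_k P_k^2 A_k^2$ is correct and matches the paper exactly. The cross-term part, however, has a real gap: the pointwise bound $g_k(H_j) \leq \beta c_1 (H_j-\kappa)/H_j$ is too weak by roughly an order of magnitude, and cannot produce the stated constant $c_4 = \cFourValue$. Concretely, that bound gives $A_\ell \leq R_\ell \cdot \beta c_1 (\pi-\kappa)/\pi \leq \theta\,\beta c_1 (\pi-\kappa)/\pi \approx 0.51$, whereas the proposition requires a uniform bound $A_\ell \leq c_4(\pi-1)/2 \approx 0.052$. The reason is that your estimate invokes $f(r,t) \leq c_1(t-\kappa)$ and thereby discards the negative term $-c_2 r t^2$ in $f$; this term is precisely what forces $A_\ell$ to be small once $R_\ell$ is not tiny, and without it no choice of AM--GM weights will recover a $c_4$ of the right size.

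The missing idea is to use the concavity of $h \mapsto g(r,t,h)$ (noted just before Lemma~\ref{lprimelemma}) via Jensen: $A_\ell \leq R_\ell \cdot g(R_\ell, T_\ell, T_\ell)$. In this form the bound on $A_\ell$ retains the full dependence of $f$ on $R_\ell$, and one then only needs $c_4 \geq \max_{r\in[0,\theta],\,t\in[1,\pi]} \frac{2 r\, g(r,t,t)}{\pi-1}$, an algebraic optimization that yields the small stated value. Also, the ``weighted AM--GM'' framing is a distraction: AM--GM on $P_k P_\ell A_k A_\ell$ produces terms quadratic in $A$, not linear. The paper simply writes $A^2 = \sum_k P_k^2\bigl(A_k^2 + 2 A_k \sum_{\ell<k}\pi^{\ell-k} A_\ell\bigr)$, bounds each inner $A_\ell$ by the Jensen constant above, and sums the geometric tail to get $c_4 A_k$.
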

\begin{proof}
For each class $\mathcal P_k$, define $A_k = \sum_{j \in \cleft} x_j g_k(H_j)$. We expand the sum $A^2$ as:
$$
A^2 = \bigl( \sum_k P_k A_k \bigr)^2 = \sum_k P_k A_k \bigl( P_k A_k + 2 \sum_{\ell < k} P_{\ell} A_{\ell}  \bigr) = \sum_k P_k^2  \bigl( A_k^2 + 2  A_k \sum_{\ell < k} \pi^{\ell-k }  A_{\ell}  \bigr) $$

 Since $g_{\ell} (h)$ is a concave-down function, Jensen's inequality gives
$$
A_{\ell} \leq \Bigl( \sum_{j \in \cleftl} x_j \Bigr) \cdot  g_{\ell} \Bigl(  \frac{ \sum_{j \in \cleftl} x_j H_j}{ \sum_{j \in \cleftl} x_j} \Bigr) = \rleft_{\ell} \cdot g( \rleft_{\ell}, T_{\ell}, T_{\ell})
$$

Furthermore, since $g(r,t,h)$ is an algebraic function, we can use standard algorithms to calculate
$$
c_4 \geq \max_{\substack{r \in [0,\theta], t \in [1,\pi]}} \frac{2 r g( r, t, t) }{\pi-1}
$$

So $A_{\ell} \leq c_4 \cdot \frac{\pi-1}{2}$ for each $\ell$; by the geometric series formula, we thus get:
$$
A^2 \leq P_k^2  \bigl( A_k^2 + 2 A_k  \sum_{\ell < k} \bigl( c_4 \cdot \tfrac{\pi-1}{2} \bigr) \pi^{\ell-k } \bigr) = \sum_k P_k^2 \bigl( A_k^2 + c_4 A_k \bigr)
$$
To bound the term $A_k^2$, we use the Cauchy-Schwarz inequality:
\begin{align*}
A_k^2 &= \Bigl( \sum_{j \in \cleft} \sqrt{x_j (H_j + \delta)} \cdot \frac{g_k( H_j) \sqrt{x_j}}{\sqrt{H_j+\delta}} \Bigr)^2 
\leq \Bigl( \sum_{j \in \cleft} x_j (H_j + \delta) \Bigr) \cdot \Bigl(  \sum_{j \in \cleft}\frac{g_k( H_j)^2 x_j }{H_j + \delta} \Bigr) \\
&=R_k (T_k + \delta) \sum_{j \in \cleft} \frac{x_j g_k( H_j)^2}{H_j + \delta}. \qedhere
\end{align*}
\end{proof}

\begin{lemma}
\label{gammaprop}
Define parameter $\gamma := \gammaValue$. There holds
$$
\bE[A^2] \leq \gamma \bE[ D ] + c_5 Q  \qquad \text{for constant $c_5 := \cFiveValue$}.
$$
\end{lemma}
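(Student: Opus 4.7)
The plan is to start from the bound in Proposition~\ref{gammaprop0}, which expresses $A^2$ as a linear-in-$g_k$ term (weighted by $c_4$) plus a quadratic-in-$g_k$ term (weighted by $R_k(T_k+\delta)/(H_j+\delta)$), and to convert every $g_k(H_j)$ factor into an expression involving $f(R_k, T_k)$ so that the bound can be compared directly with $\bE[D]$ and $Q$. The central algebraic tool will be the elementary inequality $1-\sqrt{1-y} \leq y$ valid for $y \in [0,1]$, applied with $y = \beta(h-\kappa)f(r,t)/(h^2(t-\kappa))$: this gives the pointwise bounds $g(r,t,h) \leq \beta(h-\kappa)f(r,t)/(h(t-\kappa))$ and, by squaring, $g(r,t,h)^2 \leq \beta(h-\kappa)f(r,t)/(t-\kappa)$.

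After replacing $g_k$ by its $f$-bound, each sum over $j \in \cleft$ becomes a sum of a fixed rational function of $H_j$ weighted by $x_j$. Concavity of maps like $h \mapsto (h-\kappa)/(h+\delta)$ combined with Jensen's inequality will then collapse these sums to scalar expressions in $R_k$ and $T_k$, using $\sum_{j \in \cleft} x_j = R_k$ and $\sum_{j \in \cleft} x_j H_j = R_k T_k$. For the quadratic-in-$g_k$ part this will yield a bound of the form $\beta \sum_k P_k^2 R_k^2 f(R_k, T_k)$, which is directly comparable with $D = \sum_k P_k^2 R_k f(R_k, T_k)$. For the linear-in-$g_k$ part, after rewriting $P_k^2 = p_j^2/H_j^2$ for $j \in \mathcal P_k$ and using $f(R_k, T_k)/(T_k-\kappa) \leq c_1$, the integrand will take the form $x_j p_j^2 \cdot (H_j-\kappa)/H_j^3$; averaging over $\roff$ via Observation~\ref{hest-prop} will then integrate out the $H_j$ factor to give a constant multiple of $\sum_j x_j p_j^2 = Q$.

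The main obstacle will be that the naive pointwise bound $R_k^2 \leq \theta R_k$ in the quadratic term gives a coefficient of roughly $\beta \theta \approx 1$ on $\bE[D]$, which vastly exceeds the targeted $\gamma \approx 0.005$. To close this gap, I would exploit the explicit form $f(R_k, T_k) = \max\{0, c_1(T_k-\kappa) - c_2 R_k T_k^2\}$, which vanishes whenever $R_k \geq c_1(T_k-\kappa)/(c_2 T_k^2)$, so that $R_k$ is effectively small in the regime where $f$ is nonzero. Decomposing $\beta R_k^2 f(R_k, T_k)$ into a piece that is pointwise bounded by $\gamma R_k f(R_k,T_k)$ (absorbed into $\gamma \bE[D]$) plus a residual piece bounded by a scalar function of $R_k, T_k, H_j$ whose $\roff$-average is at most a small multiple of $Q$, should yield the desired inequality. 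The verification of the resulting scalar optimizations over $(R_k, T_k) \in [0,\theta] \times [1,\pi]$, along with the specific constants $\gamma = \gammaValue$ and $c_5 = \cFiveValue$, would be carried out exactly in Mathematica and deferred to Appendix~\ref{num-app}.
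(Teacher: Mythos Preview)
Your plan has a genuine quantitative gap: the elementary bound $1-\sqrt{1-y}\le y$ is far too loose to reach the stated constants. Concretely, take just the linear-in-$g_k$ term. Your route gives
\[
c_4\sum_k P_k^2\sum_{j\in\cleft} x_j\,g_k(H_j)
\;\le\;
c_4\beta c_1\sum_j x_j p_j^2\,\frac{H_j-\kappa}{H_j^3},
\]
and averaging via Observation~\ref{hest-prop} yields a $Q$-coefficient of
$c_4\beta c_1\cdot \tfrac{1}{\log\pi}\int_1^\pi (h-\kappa)h^{-4}\,\mathrm{d}h \approx 0.0068$,
which already exceeds $c_5=\cFiveValue$ before you add anything from the quadratic part. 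The quadratic bound $(1-\sqrt{1-y})^2\le y$ is worse still: in the relevant range $y\lesssim 0.4$ it overshoots by roughly an order of magnitude, and your ``decomposition'' of $\beta R_k^2 f(R_k,T_k)$ into $\gamma R_k f(R_k,T_k)$ plus an $H_j$-dependent residual is left unspecified precisely where it matters (after Jensen there is no $H_j$ left to average over).

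The paper avoids all of this by never upper-bounding $g$. It first rewrites $D$ per job via the identity $\sum_{j\in\cleft} x_j\,\tfrac{H_j+\delta}{T_k+\delta}=R_k$, so that $A^2-\gamma D$ becomes a single sum $\sum_k P_k^2\sum_{j\in\cleft} x_j\,\Psi(R_k,T_k,H_j)$ with $g$ kept \emph{exact} inside $\Psi$. Only then does it take the pointwise maximum over $(r,t)\in[0,\theta]\times[1,\pi]$ to obtain a function $F(h)\ge 0$ of $h$ alone; nonnegativity lets the sum be extended from $\cleft$ to all of $\mathcal J^*$, after which Observation~\ref{hest-prop} applies cleanly. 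The tight constant $c_5$ comes out of the numerical integral $\tfrac{1}{\log\pi}\int_1^\pi F(h)h^{-3}\,\mathrm{d}h$, which depends essentially on retaining the square-root structure of $g$ through the optimization.
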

\begin{proof}
Note that any class $\mathcal P^*_k$ has $\sum_{j \in \cleft}  x_j \cdot \frac{H_j + \delta}{T_k + \delta} = R_k$. Combining the decomposition in Proposition~\ref{gammaprop0} with the formula  $D = \sum_k P_k^2 R_k f(\rleft_k, T_k)$, we have:
\begin{align}
A^2 - \gamma D &\leq \sum_k P_k^2 \sum_{j \in \cleft} x_j \Bigl( g_k(H_j) \bigl( c_4 + \frac{R_k (T_k + \delta) g_k(H_j)}{H_j + \delta} \bigr)  - \gamma \frac{H_j + \delta}{T_k + \delta}  f(\rleft_k, T_k)  \Bigr) \label{ta5}
\end{align}

Accordingly, let us define function $F: [1,\pi] \rightarrow \mathbb R$ by 
$$
F(h) = \max_{\substack{r \in [0,\theta], t \in [1,\pi]}} \ \ \ g(r,t,h) \bigl( c_4 + \frac{ r  (t + \delta)  g(r,t,h)}{h + \delta} \bigr)-  \frac{ \gamma f(r,t) (h+\delta) }{t+ \delta}
$$

Since $\rleft_k \in [0,\theta], T_k \in [1, \pi]$,  Eq.~(\ref{ta5}) implies that
\begin{equation}
\label{ta50}
A^2 \leq \gamma D + \sum_k P_k^2  \sum_{j \in \cleft} x_j F(H_j) = \gamma D + \sum_{k} \sum_{j \in \cleft} x_j p_j^2 F(H_j) / H_j^2
\end{equation}

Note that $f(0,0) = g(0,0,h) = 0$ for all $h$, and so values $r = t = 0$ witness that $F(h) \geq 0$. We can further get an upper bound by summing over all jobs (not just the leftover jobs):
\begin{align*}
A^2 \leq \gamma D + \sum_{j \in \mathcal J^*}  x_j p_j^2  F(H_j) / H_j^2
\end{align*}

At this point, we take expectations over random variable $\roff$, getting
$$
\bE[ A^2 ] \leq \gamma \bE[D] +  \sum_{j \in \mathcal J^*}  x_j p_j^2 \bE [ F(H_j) / H_j^2  ]
$$

By Observation~\ref{hest-prop}, each item $j$ has $ \bE[ F(H_j)/H_j^2 ]  = \frac{1}{\log \pi} \int_{h=1}^{\pi} F(h)/h^3 \ \mathrm{d}h$, giving us
\begin{align*}
 \bE[ A^2 ] &\leq \gamma \bE[D] +  \Bigl( \frac{1}{\log \pi} \int_{h=1}^{\pi} \frac{F(h)}{h^3} \ \mathrm{d}h  \Bigr) Q
 \end{align*}

We show in Appendix~\ref{num-app} that $c_5 \geq \frac{1}{\log \pi} \int_{h=1}^{\pi} \frac{F(h)}{h^3} \ \mathrm{d}h$, which yields the claimed result.
\end{proof}

\subsection{Marrying the upper and lower bounds}
At this point, we have an upper bound which depends on quantities $Q, D, L$, and we have a lower bound which depends on quantities $Q, D, L, A$. We now can combine these two estimates; critically, we boil down the statistics $Q, D$ into the single parameter:
$$
q  = Q + \bE[D]/c_3
$$

\begin{observation}
\label{rest-obs}
There holds $\bE[Z] \leq c_3 q + L^2/2$
\end{observation}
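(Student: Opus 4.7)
The plan is to observe that this is simply an algebraic restatement of Lemma~\ref{algexpcor} using the shorthand $q = Q + \bE[D]/c_3$. Concretely, I would start from the bound $\bE[Z] \leq \bE[D] + c_3 Q + L^2/2$ provided by Lemma~\ref{algexpcor}, then rewrite the right-hand side by factoring out $c_3$ from the first two terms: $\bE[D] + c_3 Q = c_3 (Q + \bE[D]/c_3) = c_3 q$. Substituting this back yields $\bE[Z] \leq c_3 q + L^2/2$, which is exactly the claim.

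There is essentially no obstacle here; the whole point of this observation is bookkeeping, packaging the two quantities $Q$ and $\bE[D]$ into a single variable $q$ so that subsequent arguments (which will combine the upper bound with the lower bound $\lb \geq \tfrac{1}{2}(Q + \beta D + (L-A)^2)$ from Lemma~\ref{lprimelemma} and the tail bound $\bE[A^2] \leq \gamma \bE[D] + c_5 Q$ from Lemma~\ref{gammaprop}) can be expressed in terms of $q$ and $L$ alone. Thus the proof is a one-line substitution, with no nontrivial estimate or numerical verification required beyond what has already been established.
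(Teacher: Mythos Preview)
Your proposal is correct and matches the paper's own proof, which simply states that this is a restatement of Lemma~\ref{algexpcor}. You have spelled out the one-line algebraic substitution $\bE[D] + c_3 Q = c_3 q$ explicitly, which is exactly the intended content.
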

\begin{proof}
Restatement of  Lemma~\ref{algexpcor}.
\end{proof}

\begin{proposition}
\label{ll0prop}
There holds $\bE[ (L-A)^2 ] \geq \max\{ 0, L - c_6 \sqrt{q} \}^2$
for constant $c_6 := \cSixValue$.
\end{proposition}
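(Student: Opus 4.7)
The plan is to first bound $\bE[A^2]$ by a small multiple of $q$, and then use non-negativity of $A$ together with Jensen's inequality to finish.

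First, I would combine Lemma~\ref{gammaprop}, which gives $\bE[A^2] \leq c_5 Q + \gamma \bE[D]$, with the definition $q = Q + \bE[D]/c_3$. Writing
\[
c_6^2 q = c_6^2 Q + (c_6^2/c_3) \bE[D]
\]
and comparing term-by-term reduces the desired inequality $\bE[A^2] \leq c_6^2 q$ to the two numerical facts $c_5 \leq c_6^2$ and $\gamma c_3 \leq c_6^2$. Both can be checked directly from the constants declared in the preamble (and both are quite tight). Once $\bE[A^2] \leq c_6^2 q$ is established, Cauchy--Schwarz (equivalently, Jensen applied to $x^2$) gives $\bE[A] \leq \sqrt{\bE[A^2]} \leq c_6 \sqrt{q}$.

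Next I would observe that $A \geq 0$ with probability one. The only nontrivial factor is $g_k(H_j) = H_j \bigl(1 - \sqrt{1 - \beta (H_j-\kappa) f(\rleft_k, T_k)/(H_j^2 (T_k-\kappa))}\bigr)$; since Section~\ref{lbsec} already notes that the radicand lies in $[0,1]$ and since $H_j \geq 1 > \kappa$, the factor is non-negative. Now split into two cases. If $L \leq c_6 \sqrt{q}$, the claimed bound is merely $\bE[(L-A)^2] \geq 0$, which is trivial. Otherwise $L > c_6 \sqrt{q} \geq \bE[A] \geq 0$, so both $L - \bE[A]$ and $L - c_6\sqrt{q}$ are non-negative, and Jensen's inequality applied to the convex function $x \mapsto (L-x)^2$ yields
\[
\bE[(L-A)^2] \geq (L - \bE[A])^2 \geq (L - c_6 \sqrt{q})^2,
\]
where the second inequality uses monotonicity of $t \mapsto t^2$ on $[0,\infty)$.

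The only genuine obstacle is verifying the two numerical inequalities $c_5 \leq c_6^2$ and $\gamma c_3 \leq c_6^2$; these are essentially the constraints that pin down the value of $c_6$, and are sufficiently tight that any sharpening elsewhere would translate immediately into a smaller $c_6$ and hence a better lower bound on $\bE[(L-A)^2]$.
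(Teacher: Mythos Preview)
Your proof is correct and follows essentially the same approach as the paper: bound $\bE[A^2] \leq c_6^2 q$ via the same term-by-term comparison (the paper phrases it as $\max\{\gamma c_3, c_5\} \leq c_6^2$), then finish with Jensen. The only cosmetic difference is that the paper applies Jensen to the convex function $u \mapsto (L-\sqrt{u})^2$ with $U = A^2$, obtaining $\bE[(L-A)^2] \geq (L-\sqrt{\bE[A^2]})^2$ directly, whereas you first pass through $\bE[A] \leq \sqrt{\bE[A^2]}$ and then apply Jensen to $x \mapsto (L-x)^2$; both routes are valid and reach the same conclusion.
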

\begin{proof}
From Lemma~\ref{gammaprop}, we have $\bE[A^2] \leq \gamma \bE[ D ] + c_5 Q = q \cdot \frac{ \gamma \bE[D] + c_5 Q}{\bE[D]/c_3 + Q} \leq q \cdot \max \{ \frac{\gamma}{1/c_3}, \frac{c_5}{1} \}$; it can be checked numerically that this is at most $c_6^2 q$.  Now consider random variable $U = A^2$. The function $u \mapsto (L - \sqrt{u})^2$ is concave-up, so by Jensen's inequality,
$$
\bE[ (L-A)^2 ] = \bE[ (L - \sqrt{U})^2 ] \geq (L - \sqrt{\bE[U]})^2
$$

The bound $\bE[A^2] = \bE[U] \leq c_6^2 q$ implies that $(L - \sqrt{\bE[U]})^2 \geq \max\{0, L - c_6 \sqrt{q} \}^2$.
\end{proof}

\begin{proposition}
\label{ll1prop}
We have $$
\lb \geq  \frac{ \beta c_3  q+ \max\{0, L - c_6 \sqrt{q} \}^2}{\beta c_3 + 1}.
$$
\end{proposition}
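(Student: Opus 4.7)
The plan is to start from Lemma~\ref{lprimelemma}, which gives (pointwise in $\roff$, since $\lb, Q, L$ are deterministic and $D, A$ depend on $\roff$) the bound $\lb \geq \tfrac{1}{2}(Q + \beta D + (L-A)^2)$. Taking expectations over $\roff$ and then applying Proposition~\ref{ll0prop} to the $(L-A)^2$ term, I obtain
$$
\lb \geq \frac{1}{2}\bigl(Q + \beta \bE[D] + \max\{0, L - c_6\sqrt{q}\}^2\bigr).
$$
Since $q = Q + \bE[D]/c_3$ rearranges to $\bE[D] = c_3(q - Q)$, substituting yields
$$
\lb \geq \frac{1}{2}\bigl((1 - \beta c_3)Q + \beta c_3\, q + \max\{0, L - c_6\sqrt{q}\}^2\bigr).
$$

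The main obstacle is that with the chosen numerical values, $\beta c_3 > 1$, so the coefficient of $Q$ on the right is \emph{negative}; that is, the bound just above is degraded by large $Q$ and alone does not suffice. To eliminate the residual $Q$ dependence, I will use the elementary lower bound $\lb \geq Q$ from Corollary~\ref{thm:partitionrel2a} as a second ingredient and form a convex combination of the two bounds with weight $\lambda$ on $\lb \geq Q$ and weight $1 - \lambda$ on the previous inequality.

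The coefficient of $Q$ in this combination is $\lambda + \tfrac{(1-\lambda)(1 - \beta c_3)}{2}$; setting this to zero and solving gives the unique choice $\lambda = (\beta c_3 - 1)/(\beta c_3 + 1) \in (0,1)$, and then $1 - \lambda = 2/(\beta c_3 + 1)$. Since the coefficient of $Q$ vanishes and $Q \geq 0$, the combined inequality becomes
$$
\lb \geq \frac{2}{\beta c_3 + 1} \cdot \frac{\beta c_3\, q + \max\{0, L - c_6\sqrt{q}\}^2}{2} = \frac{\beta c_3\, q + \max\{0, L - c_6\sqrt{q}\}^2}{\beta c_3 + 1},
$$
which is exactly the claimed bound. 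The only step requiring care is verifying that $\lambda \in [0,1]$, which follows from $\beta c_3 > 1$ (a quick numerical check with the declared values $\beta = \betaValue$ and $c_3 = \cThreeValue$).
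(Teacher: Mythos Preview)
Your proof is correct and follows essentially the same approach as the paper: both combine the bound $\lb \geq Q$ from Corollary~\ref{thm:partitionrel2a} with the bound from Lemma~\ref{lprimelemma} via a convex combination with weight $\alpha = (\beta c_3 - 1)/(\beta c_3 + 1)$, then apply Proposition~\ref{ll0prop}. The only cosmetic difference is that the paper forms the convex combination pointwise before taking expectations, while you take expectations first and then combine; since $\lb$, $Q$, and $L$ are deterministic, the two orderings are equivalent.
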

\begin{proof}
By Corollary~\ref{thm:partitionrel2a}, we have $\lb \geq Q$. By Lemma~\ref{lprimelemma}, we have $\lb \geq \frac{1}{2} \bigl( Q + \beta D  + (L - A)^2  \bigr)$. We can take a convex combination of these two lower bounds to get:
$$
\lb \geq \alpha Q + (1-\alpha) \cdot \tfrac{1}{2} \bigl( Q + \beta D  +( L - A )^2 \bigr) \qquad \qquad \text{for $\alpha = \frac{\beta c_3 - 1}{\beta c_3 + 1} \approx 0.23$}. 
$$

Taking expectations over $\roff$ and rearranging, we get:
\begin{align*}
\lb &\geq Q \cdot \frac{1 + \alpha}{2}  + \frac{ (1 - \alpha) \beta}{2} \cdot \bE[D]  + \frac{1 - \alpha}{2} \cdot  \bE[(L - A)^2]   = \frac{ \beta c_3 q + \bE[ (L - A)^2 ]}{\beta c_3 + 1}.
\end{align*}

By Proposition~\ref{ll0prop}, we have $\bE[(L-A)^2] \geq \max\{ 0, L - c_6 \sqrt{q} \}^2$.
\end{proof}

\begin{proposition}
\label{final-lb-prop}
The approximation ratio, and the SDP integrality gap, are at most $\rhoValue$.
\end{proposition}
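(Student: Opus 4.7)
The plan is to combine the upper bound from Observation~\ref{rest-obs}, namely $\bE[Z] \leq c_3 q + L^2/2$, with the lower bound from Proposition~\ref{ll1prop}, namely $\lb \geq \frac{\beta c_3 q + \max\{0, L - c_6 \sqrt{q}\}^2}{\beta c_3 + 1}$, and show that their ratio is at most $\rho$ uniformly over all $q \geq 0$ and $L \geq 0$. Both bounds are homogeneous of degree $2$ under the scaling $(q, L) \mapsto (\lambda^2 q, \lambda L)$, so when $q > 0$ the ratio $\bE[Z]/\lb$ depends only on the single parameter $t := L/\sqrt{q} \in [0, \infty)$:
$$
\frac{\bE[Z]}{\lb} \leq \Psi(t) := \frac{(c_3 + t^2/2)(\beta c_3 + 1)}{\beta c_3 + \max\{0, t - c_6\}^2}.
$$

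I would then split the analysis at $t = c_6$. For $t \in [0, c_6]$ the denominator equals $\beta c_3$ and $\Psi$ is strictly increasing, so its supremum on that interval is attained at $t = c_6$; plugging in the numerical constants shows that $\Psi(c_6) = (c_3 + c_6^2/2)(\beta c_3 + 1)/(\beta c_3) \leq \rho$. For $t > c_6$, $\Psi$ is a smooth rational function whose critical points are the roots of a quadratic in $s := t - c_6$ (obtained by clearing denominators in $\Psi'(t) = 0$); there is a unique positive root $s^*$, and evaluating $\Psi$ at $t^* = c_6 + s^*$ and at the limit $t \to \infty$ (which equals $(\beta c_3 + 1)/2$) shows that the supremum over $t > c_6$ is also at most $\rho$. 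The degenerate case $q = 0$ is handled separately: Lemma~\ref{algexpcor} then gives $\bE[Z] \leq L^2/2$, while Corollary~\ref{thm:partitionrel2a} gives $\lb \geq L^2/2$, so the ratio is at most $1 \leq \rho$.

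Finally, I would invoke Corollary~\ref{cor:partitionrel3} to lift this per-machine, per-job bound $\bE[Z^{(i^*, j^*)}] \leq \rho \cdot \lb^{(i^*, j^*)}$ into a $\rho$-approximation for the full objective and an integrality gap of $\rho$ for the SDP. The main obstacle is purely numerical: verifying, for the specific constants $\pi, \theta, \lmax, \kappa, \beta, \delta, \gamma$ and the derived $c_0, \ldots, c_6$ fixed at the top of the paper, that the interior maximum of $\Psi(t)$ genuinely falls at or below $\rho = \rhoValue$. All the algebraic and probabilistic work has already been packaged into the preceding propositions, so this last step reduces to a short one-variable calculus computation that can be dispatched in exact arithmetic in Mathematica, in the same spirit as the other numerical checks relegated to Appendix~\ref{num-app}.
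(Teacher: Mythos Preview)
Your proposal is correct and follows essentially the same approach as the paper. The only cosmetic difference is the choice of normalizing variable: you set $t = L/\sqrt{q}$ whereas the paper sets $v = \sqrt{q}/L = 1/t$, leading to the same one-variable rational function whose maximum is then checked numerically; your explicit case split at $t = c_6$ and treatment of the degenerate $q = 0$ case are just a slightly more detailed version of what the paper dispatches in one line.
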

\begin{proof} 
By Corollary~\ref{cor:partitionrel3}, we need to bound the ratio $\bE[Z]/\lb$. Let $v = \sqrt{q}/L$. From Observation~\ref{rest-obs} and Proposition~\ref{ll1prop}, we have:
\[
\frac{ \bE[Z]}{ \lb } \leq \frac{ ( \beta c_3 + 1)   ( c_3 q+ L^2/2) }{ \beta c_3 q +  \max\{0, L -  c_6 \sqrt{  q } \}^2 } =  \frac{ ( \beta c_3 + 1)   ( c_3 v^2  + 1/2) }{ \beta c_3 v^2  + \max\{0, 1 - c_6 v \}^2 }.
\]

This is an algebraic function of $v$; its maximum value, over any $v \geq 0$, is at most $1.39798$.
\end{proof}

\section{Acknowledgments}
We thank Aravind Srinivasan for clarifications about the paper \cite{srin},  Nikhil Bansal for clarifications about the paper \cite{srin1}, and Shi Li for discussions and explanations regarding the time-indexed LP.

\appendix

 \section{Proof of Lemma~\ref{gatt7}}
 \label{gatt7app}
 We begin with a few preliminary calculations. 

 \begin{observation} 
  \label{hat553eqn}
 For $a_1 \geq a'_1 \geq 1,a_2 \geq a_2' \geq 1, b \geq 0$, there holds $$
 \frac{ (a_1-1) (a_2-1) }{ a_1 a_2 +b}  \geq \frac{ (a_1' -1) (a_2'-1)}{ a'_1 a'_2 +b}.
 $$
  \end{observation}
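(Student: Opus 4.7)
The natural approach is to show that the function $F(a_1, a_2) = \frac{(a_1-1)(a_2-1)}{a_1 a_2 + b}$ is monotonically non-decreasing in each of its two arguments on the region $a_1, a_2 \geq 1$, for any fixed $b \geq 0$. Once this is established, the inequality follows immediately by applying monotonicity twice: $F(a_1, a_2) \geq F(a_1', a_2) \geq F(a_1', a_2')$.

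By symmetry, I only need to check monotonicity in $a_1$. Fix $a_2 \geq 1$ and treat $F$ as a function of $a_1$. A direct differentiation gives
$$
\frac{\partial F}{\partial a_1} = \frac{(a_2-1)(a_1 a_2 + b) - (a_1-1)(a_2-1) a_2}{(a_1 a_2 + b)^2} = \frac{(a_2-1)(a_2 + b)}{(a_1 a_2 + b)^2}.
$$
Since $a_2 \geq 1$ and $b \geq 0$, the numerator is non-negative, so $\partial F/\partial a_1 \geq 0$ throughout the region. The analogous calculation (or symmetry) gives $\partial F/\partial a_2 \geq 0$.

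There is no real obstacle here — this is essentially a one-line calculus exercise once the right partial derivative is taken. The only mild subtlety is noticing that the numerator simplifies so cleanly, which is what makes the sign transparent. Alternatively, one could avoid calculus by rewriting $F(a_1,a_2) = 1 - \frac{a_1 + a_2 - 1 + b}{a_1 a_2 + b}$ and arguing that the second term decreases in each variable when the other variable is at least $1$, but the partial-derivative computation above seems cleanest.
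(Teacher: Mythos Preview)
Your proof is correct. The partial-derivative computation is clean and the monotonicity argument is valid; the alternative rewriting $F = 1 - \frac{a_1+a_2-1+b}{a_1 a_2 + b}$ would also work but is indeed less direct.

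The paper takes a different route: it does not prove the observation by hand at all, but simply remarks that the inequality ``can be shown mechanically e.g.\ via decidability of first-order theory of real-closed fields.'' In other words, the paper treats this as a polynomial inequality over $\mathbb{R}$ and delegates verification to a quantifier-elimination or CAD-style decision procedure. Your argument is strictly more informative: it gives an explicit, human-readable reason why the inequality holds (coordinatewise monotonicity) rather than a black-box certificate. The paper's approach has the advantage of being effortless and uniform across the several similar algebraic inequalities that appear in its numerical analysis, but for this particular statement your elementary proof is both shorter and more illuminating.
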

  \begin{proof}
Can be shown mechanically e.g  via decidability of first-order theory of real-closed fields.
\end{proof}

\begin{proposition}
\label{concave-up-prop}
For $t \geq 4/3$ and $x \in [0,1/t]$,  we have $(1-x t)^{1 - 1/x} \geq \ee^{t} (1 + x (t^2/2 - t))$.
\end{proposition}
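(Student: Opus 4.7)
The plan is to reduce to the worst-case $t = 4/3$ via a monotonicity argument in $t$, and then verify the resulting single-variable inequality using a power-series expansion whose non-negativity is term-by-term transparent.

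First, after noting that $1 + x(t^2/2 - t) > 0$ throughout the domain (trivial for $t \geq 2$; for $t \in [4/3, 2)$ the minimum over $x \in [0, 1/t]$ is $t/2 > 0$, attained at $x = 1/t$), I would take logarithms and study
$$
\phi(x,t) := (1 - 1/x)\ln(1 - xt) - t - \ln\bigl(1 + x(t^2/2 - t)\bigr),
$$
so the claim becomes $\phi(x,t) \geq 0$. Differentiating in $t$ and using the identity $\bigl(1 + x(t^2/2-t)\bigr) - (1-xt) = xt^2/2$, the derivative should collapse to
$$
\frac{\partial \phi}{\partial t} = x(t-1)\Bigl[\frac{1}{1-xt} - \frac{1}{1 + x(t^2/2-t)}\Bigr] = \frac{x^2 t^2 (t-1)/2}{(1-xt)\bigl(1 + x(t^2/2-t)\bigr)} \geq 0
$$
for $t \geq 1$, $x \in (0, 1/t)$. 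Hence $\phi(x,t) \geq \phi(x, 4/3)$ for $t \geq 4/3$, and since $[0, 1/t] \subseteq [0, 3/4]$ in that range, it suffices to prove $\phi(x, 4/3) \geq 0$ for $x \in (0, 3/4]$.

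Next, I would substitute $y := 4x/3 \in (0, 1]$ and multiply through by $3y > 0$, reducing to the single-variable inequality
$$
W(y) := (4-3y)\bigl(-\ln(1-y)\bigr) - 4y - 3y\ln(1-y/3) \geq 0.
$$
Expanding both logarithms via $-\ln(1-u) = \sum_{n \geq 1} u^n/n$ and collecting terms in $y^n$, a short computation should yield
$$
W(y) = \sum_{n \geq 2} \frac{y^n}{n-1}\Bigl[\frac{n-4}{n} + 3^{2-n}\Bigr].
$$
The bracketed factor vanishes at $n=2$ (where $-1+1 = 0$) and at $n=3$ (where $-1/3 + 1/3 = 0$) and is strictly positive for all $n \geq 4$. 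Term-by-term non-negativity then gives $W(y) \geq 0$ on $[0, 1)$, and at $y=1$ the first term diverges to $+\infty$, handling the endpoint.

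The main obstacle is the algebraic simplification in step two: recognizing that $xt + x(t^2/2-t) = xt^2/2$ is precisely what makes $\partial\phi/\partial t$ collapse into a manifestly non-negative quantity. Once that monotonicity is in hand, the rest is routine, and it is pleasantly revealing that $t = 4/3$ appears as the threshold precisely because the coefficient $(n-4)/n + 3^{2-n}$ is engineered to vanish at $n=3$ — any smaller value of $t$ would leave a negative $y^3$ coefficient that the higher-order positive terms could not compensate for small $y$.
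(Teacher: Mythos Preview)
Your proof is correct, and it takes a genuinely different route from the paper's own argument.

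The paper fixes $t$ and shows that $f(x) = (1-xt)^{1-1/x}$ is convex in $x$ on $[0,1/t]$: it observes that $\ee^{t}(1 + x(t^{2}/2 - t))$ is precisely the tangent line to $f$ at $x=0$, and then verifies $f''(x) \geq 0$ by an explicit (and somewhat messy) computation, which after the substitution $y = 1 - xt$ reduces to the one-variable inequality
\[
\tfrac{4}{3}(1 - y + \log y)^{2} \geq (1-y)(1 - y^{2} + 2y \log y), \qquad y \in [0,1],
\]
dispatched by ``straightforward calculus'' without further detail. Your argument instead fixes $x$ and shows monotonicity in $t$ via the clean collapse $\partial\phi/\partial t = \frac{x^{2}t^{2}(t-1)/2}{(1-xt)(1 + x(t^{2}/2 - t))} \geq 0$, then handles the single remaining case $t = 4/3$ by a power-series expansion whose coefficients $\frac{1}{n-1}\bigl[(n-4)/n + 3^{2-n}\bigr]$ are transparently non-negative. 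Both reductions land on a one-variable inequality, but yours is fully explicit and self-contained, and your closing remark explains \emph{why} $4/3$ is the sharp threshold (the $n=3$ coefficient vanishes exactly there). The paper's convexity framing is perhaps more conceptual, but leaves the final verification opaque; your approach trades that for complete transparency.
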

\begin{proof}
Consider the function $f(x) = (1 - x t)^{1-1/x}$ and let $z = t^2/2 - t$. Here $e^t(1 + x z)$ is simply the first-order Taylor expansion of $f$ around $x = 0$. Thus, it suffices to show that $f''(x) \geq 0$ for $x \in [0,1/t]$. If we denote $y = 1 - t x \in [0,1]$, we calculate:
$$
f''(x) = \frac{ t^3 y^{-(1 -y + t)/(1-y)}}{(1-y)^4} \Bigl(  t ( 1 - y + \log y )^2 -  (1-y)(1 - y^2 + 2 y \log y) \Bigr)
$$
   
   Since $t \geq 4/3$, it thus suffices to show that
   $
    4/3 \cdot ( 1 - y + \log y )^2 -  (1-y)(1 - y^2 + 2 y \log y) \geq 0
    $ holds for $y \in [0,1]$. This is a one-variable inequality which can be shown via straightforward calculus. 
\end{proof}
 
We now turn to proving Lemma~\ref{gatt7}. Rephrasing Theorem~\ref{gatt5}, we want to show that:
 $$
 \frac{ ((1-x_1 t)^{1 - 1/x_1} - 1) ( (1-x_2 t)^{1 - 1/x_2} - 1)}{ (1 - x_1 t)^{1-1/x_1}  (1-x_2 t)^{1-1/x_2}+x_1 t + x_2 t-1} \geq \frac{\ee^{t} - 1}{\ee^{t} + 1} - 0.57 (x_1 + x_2) \max\{0, 1/t - 0.45 \}
 $$
 
Let $z = t^2/2 - t$. By combining Observation~\ref{hat553eqn} and Proposition~\ref{concave-up-prop}, we get
\begin{align*}
 \frac{ ( (1-x_1 t)^{1 - 1/x_1} - 1) ((1-x_2 t)^{1 - 1/x_2} - 1)}{ (1 - x_1 t)^{1-1/x_1}  (1-x_2 t)^{1-1/x_2}+x_1 t + x_2 t-1} \geq   \frac{ (\ee^{t} (1 + z x_1) - 1) (\ee^{t} (1 + z x_2) - 1)}{ \ee^{2 t} (1 + z x_1)  (1 + z x_2)+x_1 t + x_2 t -1}
\end{align*}

Let $d = x_1 + x_2, a = \frac{\ee^{t} - 1}{\ee^{t} + 1}, b = 0.57 \max\{0, 1/t - 0.45 \}$; to show Lemma~\ref{gatt7}, it thus suffices to show that
\begin{equation}
\label{ras1}
 (\ee^{t} (1 + z x_1) - 1) (\ee^{t} (1 + z x_2) - 1) - (a - b d)(  \ee^{2 t} (1 + z x_1) (1 + z x_2)+ d t -1 )
 \geq 0
 \end{equation}
 
 We can rearrange the LHS of Eq.~(\ref{ras1}) as:
 $$
 z^2 x_1 x_2 \ee^{2 t} (1 - a + b d) + b d \bigr(  \ee^{2 t} - 1 + d( t + z \ee^{2 t}) \bigr) + a d (z \ee^{t} - t) 
 $$

 Since $d \in [0,1/t]$ and $a \leq 1$, it  suffices to show:
 
 \begin{equation}
\label{ras3}
b \bigr(  \ee^{2 t} - 1 + \max\{ 0, 1 + z \ee^{2 t} / t \} \bigr) + a  \bigl( z \ee^{t} - t \bigr)  \geq 0
 \end{equation}
 where here $a,b,z$ are all functions of the single variable $t$. For $t \geq 2.22$, this is immediate since $z \ee^t - t \geq 0$. We can divide the remaining search region $[4/3, 2.22]$ into strips of width $\eps = 1/2000$. Within each strip, we use interval arithmetic in Mathematica to bound the LHS of Eq.~(\ref{ras3}). Over all regions, it is at least $0.0057 > 0$.

\section{Numerical analysis}
\label{num-app}
Recall that we have the following values for the relevant parameters:
\begin{align*}
&\pi = \piValue && c_0 = \cZeroValue \\
&\theta = \thetaValue && c_1 = \cOneValue \\
&\lmax = \lmaxValue && c_2 = \cTwoValue \\
&\beta = \betaValue && c_3 = \cThreeValue \\
&\delta = \deltaValue & & c_4 = \cFourValue\\
&\gamma = \gammaValue && c_5 = \cFiveValue \\
&\kappa = \kappaValue && c_6 = \cSixValue
\end{align*}

\subsection{Calculation of $c_1$}
Let us first calculate the value
\begin{equation}
\label{c1aeqn3}
c_1' := \min_{\substack{r \in [\theta, \lmax] \\ s \in [r, \pi r], x \in [0,r]}}  \frac{ s \cdot \bigl( x + (\frac{\ee^{1/r} - 1}{ \ee^{1/r} + 1} - c_0 x)(s - x) \bigr) }{ 2(s - \kappa r)}
 \end{equation}
Define $t = s/r$.  First, if $s \leq 4/3$, then $x + (\frac{\ee^{1/r} - 1}{\ee^{1/r} + 1} - c_0 x)(s - x)$ is an increasing function of $x$. So we can lower-bound it in this range by its value at $x = 0$, namely:
 $$
 \frac{ s^2 \cdot \frac{\ee^{1/r} - 1}{\ee^{1/r} + 1} }{ 2(s - \kappa r)} =
\frac{r (\ee^{1/r} - 1)}{2 \ee^{1/r} + 2}   \cdot  \frac{ t^2 }{ t - \kappa} \geq \frac{\theta (\ee^{1/\theta} - 1)}{2 \ee^{1/\theta} + 2} \cdot 4 \kappa \approx 0.591909465...,$$
 where the second inequality comes from straightforward calculus. 
 
 Next, we consider the case where $s \geq 4/3$. Since $r \leq \lmax$, we can estimate:
 \begin{align}
 \label{c1aeqn4}
 \frac{ s \bigl( x + (\frac{\ee^{1/r} - 1}{\ee^{1/r} + 1} - c_0 x)(s - x) \bigr) }{2( s - \kappa )r}  \geq  \frac{ s \bigl( x + (\frac{\ee^{1/\tau} - 1}{\ee^{1/\tau} + 1} - c_0 x)(s - x) \bigr) }{2(s - \kappa r)}
 \end{align}
 
 Since the RHS of Eq.~(\ref{c1aeqn4}) is an algebraic function of $r,s,x$, it can be checked automatically that its mimimum value is $0.6561...$ over the domain $r \in [\theta, \lmax], s \in [\max \{r, 4/3 \}, \pi r], x \in [0,r]$.  Putting the two cases together, we have shown that $c_1' \geq 0.591909465..$.
 
We next turn to calculating the value 
\begin{equation}
\label{c1beqn1}
c_1'' := 
\inf_{\substack{r \in [0,\theta], x \in (0,r] \\  y \in [\lmax - r,1], s \in [r, \pi r] \\ d \in [y, \pi y]}}  \frac{ d^2+ s \bigl( x + (\frac{\ee^{1/\lmax}-1}{ \ee^{1/\lmax}+1} - c_0 x) (s - x) +   2 d \Phi(x, y; x/\tau, 1 - r/\tau)}{ 2(s + d - \kappa (r + y))}
 \end{equation}
 where 
 $$
\Phi(x, y; x/\tau, 1 - r/\tau) =\frac{ ( (1-x/\lmax)^{1 - 1/x} - 1) ( (r/\lmax)^{1 - 1/y} - 1)}{ (1 -x/\lmax)^{1-1/x}  (r/\lmax)^{1-1/y} - (r-x)/\lmax}.
$$

From Proposition~\ref{concave-up-prop}, we note the following bound:
 $$
  (1 -x/\lmax)^{1-1/x} \geq  \ee^{1/\lmax} (1 - z x) \qquad \text{for $z = \lmax^{-1} - \lmax^{-2}/2 \approx 0.285$}
  $$
  
We will divide the search region for $r,y$ into boxes  $[r_{\min}, r_{\max}] \times [y_{\min}, y_{\max}]$ of width $\eps = 1/1000$. Within a given box (and subject to all other constraints on the variables), we can approximate:
  \begin{align*}
 (1 -x/\lmax)^{1-1/x}  &\geq u_0 (1 - z x)\\
(r/\lmax)^{1 - 1/y} &\geq (r_{\max}/\lmax)^{1 - 1/y_{\max}} \geq u_1 \\
(r-x)/\lmax &\geq (r_{\min} - x)/\lmax \\
 \frac{\ee^{1/\lmax}-1}{ \ee^{1/\lmax}+1} &\geq u_2
\end{align*}
for $u_0, u_1,u_2$ rational numbers within $10^{-7}$ of $\ee^{1/\lmax}, (r_{\max}/\lmax)^{1 - 1/y_{\max}},  \frac{\ee^{1/\lmax}-1}{ \ee^{1/\lmax}+1}$ respectively. By Observation~\ref{hat553eqn} and these other approximations, we can lower-bound the expression in Eq.~(\ref{c1beqn1}) by:
\begin{equation}
\label{c1beqn3}
       \frac{ d^2 + s \bigl( x + (u_2  - c_0 x) (s - x) +  \frac{2 d ( u_0 (1 - z x) - 1)(u_1 - 1)}{ u_0 (1 - z x) u_1 - (r_{\min}-x)/\lmax}  \bigr) }{ 2 (s + d - \kappa (r_{\min} + y_{\min})) }
         \end{equation}
   which is an algebraic function of $x,s,d$ and now no longer depends upon $r$ or $y$. For each such box, we use Mathematica algorithms to minimize over $s,d,x$ subject to constraints $s \in [r_{\min}, \pi r_{\max}], d \in [y_{\min}, \pi y_{\max}], x \in [0, r_{\max}]$.   With $\eps = 1/1000$, we calculate a lower bound of $c_1'' \geq 0.5921116$.

\subsection{Calculation of $c_5$}
Recall the functions $f, g, F$ defined as
\begin{eqnarray*}
&f(r,t) = \max\{0,    c_1 (t - \kappa) -   c_2 r t^2  \}, \qquad g(r,t,h) = h \cdot \Bigl( 1 - \sqrt{1 - \frac{\beta (h - \kappa)  f( r,t ) }{ h^2 (t - \kappa)}} \Bigr)  \\
&F(h) = \max_{\substack{r \in [0,\theta] \\ t \in [1,\pi]}} g(r,t,h) (c_4 + \frac{r (t + \delta) g(r,t,h)}{h + \delta}) - \frac{ \gamma (h+\delta) f(r,t) }{t + \delta}
\end{eqnarray*}

Since $g(r,t,h)$ is a non-decreasing function of $h$, we can get an upper bound within a region $h \in [h_{\min}, h_{\max}]$ by:
$$
F(h) \leq F^{\max}[h_{\min}, h_{\max}] := \max_{\substack{r  \in \mathbb R, t \in [1, \pi]}} g(r,t,h_{\max}) (c_4 + \frac{r (t + \delta) g(r,t,h_{\max}) }{h_{\min} + \delta} ) -  (h_{\min} + \delta) \frac{ \gamma f(r,t) }{t + \delta}
$$

This can be evaluated easily by Mathematica for given $h_{\min}, h_{\max}$. Note that  the constraint $r \in [0,\theta]$ has been relaxed to $r \in \mathbb R$; this is necessary for Mathematica to compute the optimization efficiently.    To handle the overall integral $ \int_{h=1}^{\pi}  F(h)/h^3 \ \mathrm{d}h$, we divide the integration region $[1,\pi]$ into strips of width $\eps = 10^{-3}$, and estimate:
\begin{align*}
&\int_{h=1}^{\pi} \frac{ F(h) \ \mathrm{d}h}{h^3} \leq \sum_{i=0}^{\lceil (\pi-1)/\eps \rceil -1} \int_{h=1 + i \eps}^{1+(i+1) \eps}  \frac{F^{\max}[1 + i \eps, 1 + (i+1)\eps]  \ \mathrm{d}h}{h^3} \\
& \qquad = \sum_{i=0}^{\lceil (\pi-1)/\eps \rceil -1} F^{\max}[1 + i \eps, 1 + (i+1)\eps]  \cdot \Bigl(\frac{1}{2 (i \eps+1)^2}-\frac{1}{2 ((i+1) \eps
  +1)^2} \Bigr)
\end{align*} 
which is calculated to be at most $c_5$.

\end{document}